\DeclareMathOperator*{\argmax}{arg\,max}
\DeclareMathOperator*{\esssup}{ess\,sup}
\DeclareMathOperator*{\essinf}{ess\,inf}
\newtheorem{theorem}{Theorem}
\newtheorem{proposition}[theorem]{Proposition}
\newtheorem{lemma}[theorem]{Lemma}
\newtheorem*{lemma*}{Lemma}
\newtheorem*{assumption*}{Assumption}
\newcounter{assumptionc}
\newtheorem{assumption}[assumptionc]{Assumption}
\newtheorem{definition}[theorem]{Definition}
\newtheorem{remark}[theorem]{Remark}
\newtheorem{res}[theorem]{Result}
\numberwithin{equation}{section}
\numberwithin{theorem}{section}
\newtheorem{ex}[theorem]{Example}
\newenvironment{assumption+}
 {\ifnum\value{subassumption}=0 \stepcounter{assumptionc}\fi\subassumption}
 {\endsubassumption}
\def\sign{\mbox{sign}}
\newcommand\cL{{\cal L}}
\newcommand\cD{{\cal D}}
\def\text#1{\hbox{#1}}
\def\v{{\bf v}}
\def\build #1_#2{\mathrel{\mathop{\kern 0pt #1}\limits_\zs{#2}}}
\newcommand{\zs}[1]{{\mathchoice{#1}{#1}{\lower.25ex\hbox{$\scriptstyle#1$}}
{\lower0.25ex\hbox{$\scriptscriptstyle#1$}}}}
\numberwithin{equation}{section}
\newtheorem*{example*}{Example}
\newcommand{\indep}{\perp \!\!\! \perp}
\newcommand{\PP}{\mathbb{P}}
\newcommand{\E}{\mathbb{E}}
\newcommand{\R}{\mathbb{R}}
\newcommand{\N}{\mathbb{N}}
\newcommand{\T}{\mathcal{T}}
\newcommand{\normal}{\mathcal{N}}
\renewcommand{\S}{\mathbb{S}}
\newcommand{\bS}{\bf{S}}
\renewcommand{\L}{\mathcal{L}}
\renewcommand{\cL}{\mathbb{L}}
\renewcommand{\cD}{\mathbb{D}}
\newcommand{\Var}{\text{Var}}
\newcommand{\corr}{\text{Corr}}
\newcommand{\1}{\mathbbm{1}}
\newcommand{\sgn}{\text{sign}}
\newcommand{\diff}{\mathrm{d}}
\newcommand\norm[1]{\lVert#1\rVert}
\newcommand\scprod[1]{\left\langle#1\right\rangle}
\newenvironment{myproof}[1][\proofname]{%
  \par\pushQED{\qed}\normalfont%
  \topsep6\p@\@plus6\p@\relax
  \trivlist\item[\hskip\labelsep\bfseries#1\@addpunct{.}]%
  \ignorespaces
}{%
  \popQED\endtrivlist\@endpefalse
}
\def\R{\mathbb{R}}
\newcommand{\Sum}{\textstyle \sum}
\newcommand{\Int}{\textstyle \int}
\newcommand{\textfrac}{\dfrac}
\begin{document}

\title{Time-Consistent Asset Allocation for Risk Measures in a L{\'e}vy Market\footnote{We thank Roger Laeven and the participants of the IME conference 2023 and the IFA-IVW workshop at Ulm University for their comments and feedback where the usual caveat applies.}}
\author{Felix Fie{\ss}inger\footnote{University of Ulm, Institute of Insurance Science and Institute of Mathematical Finance, Ulm, Germany. Email: felix.fiessinger@uni-ulm.de}  \, and Mitja Stadje\footnote{University of Ulm, Institute of Insurance Science and Institute of Mathematical Finance, Ulm, Germany. Email: mitja.stadje@uni-ulm.de}}
\date{\today}
\maketitle
\begin{abstract}
	Focusing on gains \& losses relative to a risk-free benchmark instead of terminal wealth, we consider an asset allocation problem to maximize time-consistently a mean-risk reward function with a general risk measure which is i) law-invariant, ii) cash- or shift-invariant, and iii) positively homogeneous, and possibly plugged into a general function. Examples include (relative) Value at Risk, coherent risk measures, variance, and generalized deviation risk measures. We model the market via a generalized version of the multi-dimensional Black-Scholes model using $\alpha$-stable L{\'e}vy processes and give supplementary results for the classical Black-Scholes model. The optimal solution to this problem is a Nash subgame equilibrium given by the solution of an extended Hamilton-Jacobi-Bellman equation. Moreover, we show that the optimal solution is deterministic under appropriate assumptions.
\end{abstract}

\noindent\textbf{Keywords:} Decision Analysis, Jump process, Time-consistency, Optimal investment, Hamilton-Jacobi-Bellman equation\\

\noindent\textbf{JEL classification:} C61, G11, C73, C72, D52\\

\noindent\textbf{2010 MSC:} 91B51, 93E20, 60G52, 49L99, 35Q91, 49J20, 46B09\\
%C61:Optimizaiotn Techniques, C72:Noncooperative Games, C73:Stochastic and Dynamic Games, D52:Incomplete Markets, G11:Portfolio Choices/Investment Decisions
%91B51: Dynamic stochastic general equilibrium theory, 93E20: optimal stochastic control, 49J20: Existence theories for optimal control problems involving partial differential equations, 49L99: iwas mit Hamilton-Jacobi, 35Q91: PDEs in connection with game theory, economics, social and behavioral sciences, 46B09: Probabilistic methods in Banach space theory, 60G52: Stable stochastic processes

%\noindent\textbf{Published in:} European Journal of Operational Reasearch (DOI: \href{https://www.sciencedirect.com/science/article/pii/S0377221724007525?via%3Dihub}{10.1016/j.ejor.2024.09.049})\\
%
%\noindent\textbf{Copyright:} \textcopyright 2024. This manuscript version is made available under the CC-BY 4.0 license \url{https://creativecommons.org/licenses/by/4.0/}.

\section{Introduction}

In this paper, we study an optimal asset allocation problem of the following form
\begin{align} \label{J def intro}
	V(t,x) = \sup\nolimits_{u\in \mathcal{A}_{t,x}} \big\{ \mathbb{E}_{t,x} [\T(X_T^u-x e^{r(T-t)})]  - Risk_{t,x} (X_T^u - x e^{r(T-t)}) \big\},
\end{align}
where $x$ is the wealth at time $t$, $\mathcal{A}_{t,x}$ is the set of all time-consistent, Markovian strategies at time $t$, 
$X^u$ is the wealth process driven by an $\alpha$-stable L{\'e}vy process when the strategy $u$ is followed, $r$ is the discounting rate, $\T\in C^1$ is a continuous, non-decreasing, and concave reward function for gains \& losses. The risk function is given by $Risk_{t,x}(Y) = \lambda_t F(\rho_{t,x} (Y))$, where $\lambda: [0,T] \to \R_{\geq 0}$ is the time-dependent and continuous risk aversion function with values in $(0, \infty)$, where $\rho$ is a law-invariant, positively homogeneous, and translation- or shift-invariant dynamic risk measure and $F\in C^1$ is a non-decreasing convex function. For tractability reasons, we do not allow $\lambda$ to be state-dependent, which means that an investor can change his/her risk aversion only deterministically. Since we only assume that $\rho$ is law-invariant, translation or shift-invariant, and positively homogeneous, our results encompass most standard or alternative risk measures like (relative) Value at Risk, Average Value at Risk, coherent risk measures, standard deviation, or generalized deviation measures. On all these (classes of) risk measures, there is an extensive literature, but for space reasons, we only state Jorion \cite{jorion2007value}, F{\"o}llmer and Schied \cite{Follmer}, or Rockafeller et al. \cite{rockafellar2002deviation}. We emphasize that $\rho$ does not need to be additive, sub-additive, convex, or continuous. By allowing $\rho$ to be plugged in the convex function $F$, our risk part in \eqref{J def intro} also includes examples such as variance or semi-variance. Time-consistency means that the strategy which is optimal at time $t$, say $(u_s)_{t\leq s}$ will also be optimal from later points on. On the other hand, a time-inconsistent investor does not satisfy such a dynamic programming principle and may initiate a strategy at time $t$ (because at time $t$ its whole path attains the maximum in \eqref{J def intro}) knowing fully well that (s)he will deviate from this strategy from a later point on.
Following psychological theories like prospect theory (Kahneman and Tversky \cite{Kahneman}) and benchmark investment theories (see, for instance, Zhao \cite{zhao2007dynamic} or Pirvu and Schulze \cite{pirvu2012multi}), the investor at time $t$ in \eqref{J def intro} is assumed to consider future \emph{changes} in wealth $X_T^u-x e^{r(T-t)}$ (in \$-values at time $T$) instead of the terminal wealth $X_T^u$. We could generalize the benchmark by multiplying and adding deterministic, time-dependent $C^1$-functions (i.e., consider $a_t (X_t-xe^{r(T-t)})+b_t$ instead of $X_t-xe^{r(T-t)}$ in \eqref{J def intro}), which would structurally not change our results in the case of uniformly bounded strategies. Conceptually, this approach is in line with the so-called ``growth-optimal portfolio'', also known as the Kelly criterion in the context of portfolio theory, optimal decision under risk, game theory, information theory, and insurance, see Wei and Xu \cite{wei2021dynamic} and the references therein. When $\mathcal{T}$ is the identity, considering the terminal expected wealth instead of gain \& losses in the reward function in \eqref{J def intro} leads to identical solutions. We further remark that defining risk as \emph{changes} in values between two dates is a rather classical approach of assessing the riskiness of a position, see, for instance, Jorion \cite{jorion2007value}, the way risk is measured in the Basel III \cite{BaselIII} or Solvency II \cite{SolvencyII} regulations for banks and insurance companies, respectively, or the literature on return risk measures; see, for instance, Bellini et al. \cite{bellini2018robust}.

We solve problem \eqref{J def intro} recursively in discrete time and show that as the time grid gets finer, the discrete-time maximizers converge weakly along a subsequence to continuous time. For shift-invariant risk measures, the limit is a Nash subgame equilibrium under certain assumptions in continuous time. This kind of optimization is a non-cooperative interpersonal game in which the investor plays against reincarnations of himself/herself at other times. There is no direct generalization of Nash equilibria in continuous time since a single time point is a null set. Instead, we follow here the generalization of Ekeland and Lazrak \cite{ekeland2006being}, who allowed the investor at time $t$ to form a coalition with all players $s \in [t,t+\varepsilon]$ and derive the equilibrium strategy by taking $\varepsilon \to 0$. We also prove that, in general, a Nash subgame equilibrium in continuous time of \eqref{J def intro} satisfies an extended Hamilton-Jacobi-Bellman (HJB) equation. Moreover, we derive explicit solutions for some special cases. We illustrate the applicability of our approach in a few numerical examples showing that Value at Risk leads to less risky investments for shorter maturities than variance. Furthermore, compared to a pre-commitment investor, we show that a time-consistent investor invests more conservatively, i.e., less in the risky asset. This effect is more pronounced for shorter than for longer maturities and more for Value at Risk based risk measures than for variance.

We model the stock returns by an $\alpha$-stable L{\'e}vy processes under some limitations. In physics, $\alpha$-stable L{\'e}vy processes are also called L{\'e}vy flights and are used to model physical phenomena, see Cont and Tankov \cite[p.99]{Cont}. %In this context, some authors use the terms L{\'e}vy process and L{\'e}vy flight interchangeably. 
The marginal distribution of an $\alpha$-stable L{\'e}vy process are sometimes also called L-stable distributions. Using L-stable distributions to model prices goes back at least to Mandelbrot \cite{mandelbrot1963variation,mandelbrot1967variation}, and Fama \cite{fama1965behavior}. L{\'e}vy processes also play an important role in operations research for modeling asset returns; see, for instance, Kallsen \cite{kallsen2000optimal}, Feng and Linetsky \cite{feng2008pricing}, Kaishev and Dimitrova \cite{kaishev2009dirichlet}, Kardaras \cite{kardaras2009no}, Nowak and Romaniuk \cite{nowak2010computing}, Fu and Yang \cite{fu2012equilibruim}, Laeven and Stadje \cite{laeven2014robust}, Fu et al. \cite{fu2017option}, Ma et al. \cite{ma2021time}, or {\v{C}}ern{\'y} et al. \cite{vcerny2023numeraire}. The induced model is more general than the classical Black-Scholes model and is favorable for several reasons, see Cont and Tankov \cite[pp. 1-16]{Cont}. For example, contrary to L{\'e}vy processes (with jumps), the Black-Scholes model does not admit heavy tails, which are empirically observed in financial markets, does not allow for large, sudden movements of the asset prices, and, furthermore, induces a complete financial market in the sense that all risks can be perfectly hedged, which might not be realistic.

In 1952, Markowitz \cite{markowitz1952utility} was the first to study a (static) asset allocation problem of the form \eqref{J def intro} in \emph{one period} using portfolio returns as basic objects and introducing a trade-off between the expected portfolio return and its risk. In this case, $\rho$ in \eqref{J def intro} becomes the standard deviation and $F(x)=\max\{0,x\}^2$ so that risk is identified with the variance of a portfolio. Up to date, variance is the most prevalent risk measure and the cornerstone of most finance literature. Many alternative risk measures have been proposed to overcome potential shortfalls of variance. For instance, for regulatory purposes in the financial industry, Value at Risk (VaR) has been the most prominent risk measure. VaR is the amount of money to hold in reserve such that the financial institution can cover its losses in say, 99.5 \%, of the possible future developments of the market over a prescribed time horizon. VaR is the standard risk measure for insurance companies and banks, with its use being prescribed by Basel III \cite{BaselIII} or Solvency II \cite{SolvencyII} regulations. Jorion \cite{jorion2007value} also defines the shift-invariant version of the VaR, the so-called relative Value at Risk (RVaR). He states that RVaR for money situations is conceptually more appropriate than VaR since it accounts more suitable for the time value of money. Moreover, he states that RVaR is more conservative and more consistent with definitions of unexpected loss. In the context of portfolio choice, static mean-risk models with various alternative risk measures have been analyzed, see for instance, Campbell et al. \cite{campbell2001optimal}, Rockafeller and Ursayev \cite{rockafellar2002conditional}, Ruszczy{\'n}sky and Vanderbei \cite{ruszczynski2003frontiers}, Alexander and Baptista \cite{alexander2004comparison}, Jin et al. \cite{jin2006note}, G{\"u}lp{\i}nar and Rustem \cite{gulpinar2007worst}, Adam et al. \cite{adam2008spectral}, or He et al. \cite{he2015dynamic}. Additionally, Markowitz's original single-period static framework has been extended dynamically to discrete and to continuous time; see, for instance, Basak and Chabakauri \cite{Basak}, Brandt \cite{brandt2010portfolio}, Fu et al. \cite{fu2010dynamic}, Zeng and Li \cite{zeng2011optimal}, Hu et al. \cite{hu2012time}, Czichowsky \cite{czichowsky2013time}, or Bj{\"o}rk et al. \cite{bjork2017time}.

In a dynamic setting, the issue of time-consistency arises. Most static risk measures like variance or VaR are time-inconsistent when used in a dynamic context; see, for instance, Basak and Chabakauri \cite{Basak}, Cheridito and Stadje \cite{cheridito2009time}, or F{\"o}llmer and Schied \cite{Follmer}. Mathematically, this entails that a dynamic programming principle does not hold, making the problem of optimal portfolio choice difficult and often intractable. Furthermore, conceptually time-consistency seems an essential element of rational decision-making, and lacking it seems unreasonable; see, for instance, Björk and Murgoci \cite{Bjork}, Balter and Pelsser \cite{balter2021quantifying}, or Balter et al. \cite{balter2021time}.\footnote{In the case of using mean-variance in (\ref{J def intro}), recent literature has actually shown paradoxes arising from time-consistency, see van Staden, Dang, and Forsyth \cite{van2018time} and Bensoussan, Wong, and Yam \cite{bensoussan2019paradox}. See also the discussion in Bosserhoff and Stadje \cite{bosserhoff2021time}. Of course, since mean-variance does not respect first-order stochastic dominance, it also lends itself to various other paradoxes.}

If a given preference is time-inconsistent or if the regulator prescribes a time-inconsistent risk measure, there are three different approaches discussed in the literature: the pre-committed investor only maximizes once in the beginning, see, for instance, Li and Ng \cite{li2000optimal}, Zhou and Li \cite{zhou2000continuous}, Lim and Zhou \cite{lim2002mean}, or Xie et al. \cite{xie2008continuous}, the naive investor continuously optimizes, always using his actual preferences, see Basak and Chabakauri \cite{Basak} for references, and the sophisticated investor optimizes, taking into account that (s)he acts time-inconsistent. We use the third approach and ensure time consistency by analyzing an investor who, in the words of Strotz \cite{Strotz}, aims to find the best strategy ``among those that he will actually follow''. This means that we consider an interpersonal, non-cooperative game of the investor with himself/herself. Building on Strotz \cite{Strotz}, Basak and Chabakauri \cite{Basak} in a seminal work gave an equivalent time-consistent solution for optimizing a mean-variance problem using the conditional variance formula. This and the more technical work by Björk and Murgoci \cite{Bjork} which generalized this approach to problems involving general functions with outer and inner expectations, initiated a large stream of publications, see, for instance, Wang and Forsyth \cite{Wang}, Cui et al. \cite{cui2012better}, Czichowsky \cite{czichowsky2013time}, Bj{\"o}rk et al. \cite{bjork2014mean}, Bensoussan et al. \cite{bensoussan2014time}, Chiu and Wong \cite{chiu2015dynamic}, Wu and Chen \cite{wu2015nash}, Vigna \cite{vigna2016time}, Wei and Wang \cite{wei2017time}, Lindensjö \cite{Lindensjo}, Cui et al. \cite{cui2019time}, Bosserhoff and Stadje \cite{bosserhoff2021time}, or Hern{\'a}ndez and Possama{\"\i} \cite{hernandez2023time}. For a discussion and comparison between a pre-committed and a time-consistent investor, see Balter et al. \cite{balter2021time}. 
While for a mean-variance problem, the reward function can naturally be expressed as a functional of expectations, this is generally not the case for a risk measure as ours, which is merely assumed to be law-invariant, positively homogeneous, and translation or shift invariant. Consequently, to the best of our knowledge, results on time-consistent optimization in continuous time have not been obtained in the context of general and possibly quantile-based risk measures. Using stability results for $\alpha$-stable L\'evy processes, we tackle the optimization problem by showing that the optimal solution necessarily is deterministic and can be computed recursively in discrete time. From there, an extension to continuous-time can be shown using the structure of $\alpha$-stable processes. Using time-consistency (leading to the boundedness of the optimal strategies), Alaoglu's theorem, convex structures, Slutky's theorem, and Fatou's generalized lemma, we show that the discrete-time solutions weakly converge to a solution which dominates the discrete-time solutions in mean-risk (given by \eqref{J def intro}). Under specific assumptions, we show that this limit is also a Nash subgame equilibrium using Arzel{\`a}-Ascoli's theorem, with uniform convergence of the control functions. For mean-variance, convergence results have been shown by Czichowsky \cite{czichowsky2013time}.
We emphasize that even for VaR, arguably the most important risk measure used in industry, apart from Forsyth \cite{Forsyth} and Cui et al. \cite{cui2019time}, who do a numerical analysis in discrete time, the time-consistent asset allocation problem has not been studied much.

The paper is structured as follows. In Section \ref{sec: levy}, we recall the definition of $\alpha$-stable L{\'e}vy processes and briefly recall their properties. In Section \ref{sec: def}, we introduce the financial market. Section \ref{sec: disc} treats the discrete-time optimization case, while Section \ref{sec: cont} analyzes the continuous-time problem. In Section \ref{sec: Black-Scholes}, some specific examples are considered, and finally, in Section \ref{sec: numeric}, we present some numerical results.

\section{Basics about $\alpha$-stable L{\'e}vy processes} \label{sec: levy}

Let us describe some important properties of $\alpha$-stable L{\'e}vy processes used in the sequel. The following description is based on \c{C}{\i}nlar \cite[pp.313,314,329-339]{Cinlar}. 

A process $L$ is a L{\'e}vy process if $L$ is an adapted c{\`a}dl{\`a}g process with independent and stationary increments. Let $(\Omega,\mathcal{H},\mathcal{F}=(\mathcal{F}_t)_{t \in \R_{\geq 0}},\PP)$ be a $d$-dimensional stochastic basis, where $\mathcal{F}$ is the filtration generated by a L{\'e}vy process $L$ and let $\alpha \in (0,2]\backslash\{1\}$. We assume that $L$ is $\alpha$-stable meaning that $t^{-\frac{1}{\alpha}} L_t \overset{d}{=} L_1$ for all $t \in (0,\infty)$, whereby we exclude the case $L = 0$ a.s.. If $\alpha=2$, then $L$ is a Brownian Motion, i.e., for any $z>0$ we have $ z L_t \sim \normal (0,z^2 t)$. To highlight this special case, we then denote the process by $W$ instead of $L$. If $\alpha \in (0,2)\backslash\{1\}$, the associated L{\'e}vy measure $\nu$ has infinite measure and is constructed as follows: Fix a $c \in (0,\infty)$ and a probability measure $\tilde{\sigma}$ on the unit \emph{sphere} in $\R^d$, denoted by $\mathbb{S}^d$ (we write $\S$ if $d=1$). Then, for every Borel function $f \geq 0$, the L{\'e}vy measure $\nu$ is given by (see also Proposition 3.15 in Cont and Tankov \cite{Cont}):
\begin{align} \label{eq: levy measure}
	\nu f = \Int_{\R_{\geq 0}} \Int_{\S^d} \tfrac{c}{w^{\alpha+1}} f(wv) \tilde{\sigma} (\diff v) \diff w.
\end{align}
The associated L{\'e}vy process $L$ is called symmetric, if $L \overset{d}{=} -L$ which is equivalent to the symmetry of $\nu$, i.e., $\nu(B)=\nu(-B)$ for all $B \subset \R^d$ Borel. 
Recall that for an $\alpha$-stable L{\'e}vy process with $0<\alpha<2$ only $\theta$-moments with $\theta<\alpha$ exist, and marginal distributions are still $\alpha$-stable. In the case of $\alpha=2$, we have a Brownian Motion, and, of course, all real and exponential moments exist. Multiplying the L{\'e}vy process with a constant preserves $\alpha$-stability. 

\ref{app: Levy} gives a more detailed introduction to L{\'e}vy processes and their stochastic integral.

\section{Model setup and basic definitions} \label{sec: def}

We use the following notations for this paper: We denote the time dependence in the discrete case as an index and in the continuous case as a variable. Moreover, we write $\scprod{\cdot,\cdot}_A$ for the scalar product concerning the symmetric and positive definite matrix $A \in \R^{d\times d}$, i.e., $\scprod{x,y}_A := x^\intercal A y$ for $x,y \in \R^d$, where $\cdot^\intercal$ denotes the transposed vector. If $A$ is the identity matrix, we write $\scprod{\cdot,\cdot}$. In addition, we denote by $a^{i\cdot}$ the $i$'th row of the matrix $A$ and by $a^{\cdot j}$ the $j$'th column of $A$. Furthermore, we denote by $\L^{\alpha} (\Omega,\mathcal{H},\mu)$ the set of all $\mathcal{H}$-measurable and $\alpha$-times integrable functions, i.e., $\L^{\alpha}(\Omega,\mathcal{H},\mu)=\{ f: \Omega \to \R : f \text{ is measurable}, \Int_{\Omega} |f(x)|^{\alpha} \mu(\diff x) < \infty \}$ for the measure space $(\Omega,\mathcal{H},\tilde{\nu} )$. In particular, $\L^0 (\Omega,\mathcal{H},\mu)$ denotes the set of all measurable functions. Often, ``$\mathcal{H}$'' will be omitted in the notation if the context is clear.

\subsection{Model setup} \label{model setup}

We consider a finite investment period, which starts at time $0$ and ends at time $T$. Let $L_t$ be a $d$-dimensional $\alpha$-stable L{\'e}vy process with $\alpha \in (0,2]\backslash\{1\}$ defined with the constant $c \in (0,\infty)$ and the probability measure $\tilde{\sigma}$ (see Equation (\ref{eq: levy measure})). Then, we consider a financial market with $d$ risky assets $S^i$, $i \in \{1,\ldots,d\}$, and one risk free asset $B$. The price dynamics are given by
\begin{align} \label{def: financial market 1}
	\diff S^i_t = \mu^i_t S^i_t \diff t + \Sum_{j=1}^d \sigma^{ij}_t S^i_t \diff L^j_t \text{ and } 
	\diff B_t = r B_t \diff t, \quad i=1,\ldots,d,
\end{align}
where $\mu^i_t$ and $\sigma^{ij}_t$ are deterministic and continuous functions describing an additional drift, the volatility, and the dependency of the risky assets, whereas the constant $r$ denotes the short rate of the risk-free asset. %For a discussion about asset prices driven by L{\'e}vy processes, we refer again to Cont and Tankov \cite{Cont} and the many references therein. 
We assume for all $i,j \in \{1,\ldots,d\}$ and for all $t\in[0,T]$ that $\sigma^{ij}_t >0$, $\sigma^{ij}_t=\sigma^{ji}_t$, and $\sigma_t$ is positive definite and uniformly bounded away from zero. Let $u^i_t$ be the amount of money invested in asset $S^i$ until time $t$. Then, $u_t=(u_t^1, \ldots, u_t^d)^\intercal$ is a $d$-dimensional vector. We assume that $u$ may only take values in a given closed and convex set $\mathcal{V} \subset \R^d$. Define then the $d$-dimensional, measurable, closed, and convex control space $\{\mathcal{U},\mathcal{G}_U\}$ such that the control functions in $\mathcal{U}$ are $\mathcal{F}$-adapted, predictable, Markovian, and $\L^{\alpha}((0,T],\diff \PP \times \diff s)$-integrable. For instance, if short-selling is prohibited, then $\mathcal{V} \subset \R_{\geq 0}^d$. The wealth process $X^u$ is given by
\begin{align} \label{def: financial market 2}
	\diff X^u_t = [rX^u_t + \Sum_{i=1}^d u^i_t(\mu^i_t - r)]\diff t + \Sum_{i,j=1}^d u^i_{t} \sigma^{ij}_t \diff L^j_t.
\end{align}
For $\alpha=2$, \eqref{def: financial market 2} is also called the Black-Scholes model. In this case, we assume that the correlation matrix process $R_t := (R_t^{jl})_{jl}$ with $R_t^{jl} := \corr( W_t^j,W_t^l )$ is positive definite and uniformly bounded away from zero (see Skintzi and Refenes \cite{Skintzi}). %The exclusion of $\alpha=1$ is made since there exists no $1$-stable process which is not a pure drift process such that for a deterministic strategy $u \neq 0$, $\mathbb{E} [\T(X_T^u-x e^{rT})]$ in \eqref{J def intro} is integrable (since such processes cannot have only upward (or only downward) jumps, see \c{C}{\i}nlar \cite[pp.332-336]{Cinlar}). Thus, the negative part of $\mathbb{E} [\T(X_T^u-x e^{rT})]$ is not integrable for any concave function for $\alpha=1$.

Moreover, we define the measurable control space $\{\mathcal{\tilde{U}},\mathcal{G}_{\tilde{U}}\} \subset \{\mathcal{U},\mathcal{G}_U\}$ with piecewise constant Markovian control functions, add a ``$det$'' in the index of $\mathcal{U}$ resp. $\mathcal{\tilde{U}}$ if we restrict to deterministic control functions and we use the short notation $\E_{{t},x}^u [Y] := \E [ Y | X^u_{t} = x]$ (similarly also for variances and risk measures). As Bj{\"o}rk and Murgoci \cite{Bjork}, we will typically omit the $u$ in $\E_{{t},x}^u$.

\subsection{Risk measures}

In this subsection, we specify the properties of the risk measure from problem \eqref{J def intro}. 

\begin{definition} \label{risk measure definition}
	Let $\mathcal{R}$ be a closed vector space with $\mathcal{R} \supset \L^{\alpha} (\mathcal{F}_T)$. We can interpret $\mathcal{R}$ as all possible portfolio gains (and losses). Then, we call the mapping $\rho: \mathcal{R} \to \R \cup \{ + \infty \}$ a risk measure if it satisfies the following properties:
	\begin{enumerate}[(a)]
		\itemsep0em
		\item law-invariance: $\rho_0 (Y_1) = \rho_0 (Y_2)$ for $Y_1 \overset{d}{=} Y_2$ under $\PP$, i.e., the distribution of the changes of a portfolio determines the value of the risk measure,
		\item positive homogeneity: $\rho_0 (\beta Y) = \beta \rho_0 (Y)$ for all $\beta \geq 0$, i.e., the risk of a portfolio is proportional to its size,
		\item[(c1)] cash-invariance: $\rho_0 (Y+m) = \rho_0(Y) - m$ for all $m \in \R$, i.e., the addition of a sure capital amount to the portfolio reduces the risk by exactly this amount, or
		\item[(c2)] shift-invariance: $\rho_0 (Y+m) = \rho_0 (Y)$ for all $m \in \R$, i.e., changing the portfolio by a fixed capital amount does not influence the risk.
	\end{enumerate}
	Let $\rho_t$ be the dynamic version of $\rho$. Specially, we define $\rho'(H_Y) := \rho(Y)$, where $H_Y$ denotes the cdf of $Y$, and notice that by (a), $\rho'$ is well defined. We then set $\rho_t (Y) = \rho'(H_Y^t)$ with $H_Y^t (s) := \PP(Y \leq s | \mathcal{F}_t)$ and $\rho_{t,x}^u (Y) = \rho'(H_Y^{t,x,u})$ with $H_Y^{t,x,u} (s) := \PP(Y \leq s | X_t^u = x)$. We will typically omit the $u$ also in the notation of $\rho_{t,x}^u$. Moreover, we denote: $\varrho^{\bar{L}_1} := \rho_0 (\bar{L}_1)$ for a L{\'e}vy process $\bar{L}$.
\end{definition}

The case with (c1) refers to measuring the risk by the capital amount needed so that the financial institution is ``safe'', whereas (c2) refers to measuring the deviation of the portfolio. For background on risk measures satisfying (c1), we simply refer to the textbook by F{\"o}llmer and Schied \cite{Follmer} and the many references therein. For background of risk measures satisfying (c2), we refer to Rockafellar et al. \cite{Rockafellar}. Note that $\rho$ is not necessarily continuous; see, e.g., Example \ref{ex: risk measures}(\ref{ex: VaR}) below.

\begin{ex} \label{ex: risk measures}
	The following examples satisfy the previous definition's properties (a)-(c). Examples (i) and (ii) satisfy (c1), whereas examples (iii) - (v) satisfy (c2).
	\begin{enumerate}[(i)]
		\itemsep0em
		\item \label{ex: VaR}Value at Risk (VaR): $\mathrm{VaR}_t^{p}(Y) := \essinf \lbrace m\in \L^{0}(\mathcal{F}_t) \ | \ \PP (Y+m<0|\mathcal{F}_t) \leq p \rbrace$. 
		\item All coherent risk measures. A risk measure $\rho$ is coherent if it is additionally monotone, i.e., $\rho_0(Z) \geq \rho_0(Y)$ if $Z \leq Y$ a.s., and sub-additive, i.e., $\rho_0(Z+Y) \leq \rho_0(Z)+\rho_0(Y)$. One example is given by Average Value at Risk (AVaR): $\mathrm{AVaR}_t^{p}(Y) := \frac{1}{p} \Int_0^p \mathrm{VaR}_t^{q}(Y) \diff q$ which is the basis, for instance, of the Swiss solvency test.
		\item \label{ex: VaR ii} Relative Value at Risk (RVaR): $\mathrm{RVaR}_t^{p}(Y) := \mathrm{VaR}_t^{p}(Y - \E[Y])$ (see Jorion \cite{jorion2007value}) which is the shift-invariant version of VaR considering the deviation from the mean.
		\item \label{ex: sd}Standard deviation: $sd_t (Y) := \sqrt{\Var_t (Y)}$ with $\Var_t$ denoting the conditional variance.
		\item All generalized deviation risk measures (see Rockafeller et al. \cite{rockafellar2002deviation}, Pistorius and Stadje \cite{pistorius2017dynamic} and Stadje \cite{stadje2020two}). A generalized deviation risk measure additionally to (a), (b), and (c2) is assumed to be positive, i.e., $\rho_0 (Y) > 0$ for all non-constant $Y$, and $\rho_0 (Y)=0$ for all constant risks $Y$, and sub-additive.
	\end{enumerate}
\end{ex}

To also include risk measures like variance, in the sequel, we allow the risk measure to be plugged into a convex $C^1$ function $F$.

\section{Analysis of (\ref{J def intro}) in discrete time} \label{sec: disc}

\subsection{Definition of the model}

At every point, we maximize the expected target value of the future net gains penalized through the composition of the function $F$ and the risk measure $\rho$.\\
We divide $[0,T]$ in $N$ intervals with $N+1$ grid points: $0=t_0 < t_1 < \ldots < t_N = T$. We denote by $\delta$ the mesh size, i.e., $\delta = \max_{i \in \{ 1,\ldots,N\}} (t_i-t_{i-1})$. Without loss of generality, we assume that the grid points are equidistant. We consider constant control functions in each interval and define the control function $u^i_{t_n}$ at all time points setting $u^i_{t_n} = u^i_{t_{n}-h}$ for all $h \in [0,\delta)$, for all $n \in \{1,\ldots,N\}$ and for all $i \in \{1,\ldots,d\}$. \\
In the first step, let us set the step size $\delta=1$ and write $n$ instead of $t_n$. Thus, $u_n^i$ describes the investment in the risky asset $i$ in the interval $(n-1,n]$. 

To get an equivalent structure as the introduced problem \eqref{J def intro} (see Proposition \ref{equivalent problem}), we define the value functional $J_n$ as follows: Let $(n,x) \in \{0,\ldots,T-1\} \times \R$ and a control law $u$ be fixed. Then, we define $J_n$ as:
\begin{align} \label{J def}
	J_n (x,u) = \mathbb{E}_{n,x} [\T(X_T^u-x e^{r(T-n)})]  - \lambda_n F(\rho_{n,x} (X_T^u - x e^{r(T-n)})),
\end{align}
where $\T$, $F$, and $\rho$ are as previously described in the introduction and Section \ref{sec: def}. Without loss of generality, see Remark \ref{remark: non Markovian}, we ex-ante restrict ourselves to Markovian strategies.

From this definition, we see that this is a generalization of the mean-variance problem, which we get using $\T(x)=x$, $F(x)=\max \{0,x\}^2$ and the standard deviation for $\rho$.

To ensure that $J_n$ is well-defined, we have to make the following assumption:

\begin{assumption} \label{assumption expected value}
	If $\mathcal{V} \subset \R^d_{\geq 0}$ (resp. $\mathcal{V} \subset \R^d_{\leq 0}$), then $\E [|\T(a+b L_t)|] < \infty$ for all $a \in \R$, $b>0$ (resp. $b<0$), and for all $t \in (0,T]$. Otherwise, $\E [|\T(a+b L_t)|] < \infty$ for all $a,b \in \R$ and for all $t \in (0,T]$.
\end{assumption}

For $\alpha=2$, Assumption \ref{assumption expected value} is in particular true if there exists $K_1,K_2,K_3 \in \R_{\geq 0}$ such that $|\T(x)| \leq K_1 + K_2 e^{K_3 |x|}$, since in this case $L_1$ is normally distributed. For $\alpha<2$, it is sufficient that there exists $K_1,K_2>0$ and an $\varepsilon>0$ such that $|\T(x)| \leq K_1+ K_2|x|^{\alpha-\varepsilon}$ since only the moments up to order $\theta<\alpha$ exist in this case. If $\mathcal{V} \subset \R^d_{\geq 0}$ (resp. $\mathcal{V} \subset \R^d_{\leq 0}$), the latter inequality only has to hold for positive (resp. negative) values of $x$.

\begin{assumption} \label{ass: rho endlich}
	It holds that $\varrho^{\bar{L}_1}=\rho_0 (\bar{L}_1) < \infty$ for all $c>0$ and the corresponding $\alpha$-stable L{\'e}vy process $\bar{L}$ with L{\'e}vy measures given by \eqref{eq: levy measure}.
\end{assumption}

Next, one of the following three alternatives is assumed to hold:

\begin{assumption+} \label{ass: symmetric multidemensional}
	$L$ is symmetric.
\end{assumption+}

\begin{assumption+} \label{ass: asymmetric}
	$L$ has only upward or downward jumps. In addition, $\mathcal{V} \subset \R_{\geq 0}^d$, i.e., we restrict the optimization to non-negative control functions, i.e., short selling is prohibited and $u \geq 0$.
\end{assumption+}

If $\alpha=0.5$ and $L$ has only upward jumps, the distribution is often called a L{\'e}vy distribution; see, for instance, Cont and Tankov \cite{Cont}.

\begin{assumption+} \label{ass: one dimensional}
	$d=1$, i.e., we only have one risky asset. In addition, $\mathcal{V} \subset \R_{\geq 0}^d$, i.e., we restrict the optimization to non-negative control functions, i.e., short selling is prohibited and $u \geq 0$.
\end{assumption+}

%In the Assumptions \ref{ass: asymmetric} and \ref{ass: one dimensional}, we could alternatively assume that $\mathcal{V} \subset \R_{\leq 0}^d$.
In the case $\alpha=2$, we have a Brownian Motion, which is always symmetric. Hence, distinguishing between Assumptions \ref{ass: symmetric multidemensional}, \ref{ass: asymmetric}, and \ref{ass: one dimensional} is only needed in the case $\alpha<2$. In the following, we denote by $\tilde{\lambda}_{min}$ the smallest eigenvalue of $\sigma_{s}e^{r(T-s)}$ for $s \in [0,T]$ if $\alpha<2$ (or denote by $\tilde{\lambda}^2_{min}$ the smallest eigenvalue of $\sigma_{s}R_s \sigma_s e^{2r(T-s)}$ for $s \in [0,T]$ if $\alpha=2$) and by $\tilde{\mu}_{max} := \max\{|\mu^i_s-r|e^{r(T-s)} | s \in [0,T], i \in \{1,\ldots,d\}\}$. It holds that $\tilde{\lambda}_{min}>0$. 

\setcounter{subassumption}{0}

Next, one of the following four alternatives is assumed to hold:

\begin{assumption+} \label{ass: bounded strategies}
	There exists an $M>0$ such that $\mathcal{V} \subseteq [-M,M]^d$, i.e., the strategies are uniformly bounded.
\end{assumption+}

We remark that under Assumption \ref{ass: bounded strategies}, the market may admit arbitrage (which, however, cannot be scaled up). For instance, for $d\geq 2$ suppose that $\tilde{\sigma}$ has a positive probability only at the two points on the first axis, i.e., $\tilde{\sigma} ((1,0,\ldots,0)) = \tilde{\sigma} ((-1,0,\ldots,0)) = \frac{1}{2}$. Then, the marginal L{\'e}vy process in all other directions is a deterministic drift process given by $\mu^it$. In particular, the market admits arbitrage if $\mu^i \not \equiv r$ for an $i \geq 2$. To exclude arbitrage and such examples while not relying on a boundedness assumption, we need $\tilde{\sigma}$ to be non-degenerated on the sphere. In particular, we need some measure in every axis direction. In the Black-Scholes market, non-degeneracy assumptions are standard, i.e., for instance, it is common there to assume that $\sigma^\intercal \sigma$ has a full rank and $\sigma$ is bounded away from zero where $\sigma$ is the volatility matrix.

\begin{assumption+} \label{ass: one dimensional 2}
	$d=1$, i.e., we only have one risky asset, $\alpha \in (1,2)$, $\lim_{x \to \infty} \T'(x) =0$, and $\lambda_n \equiv \lambda>0$. If $\rho$ is cash-invariant, we additionally assume that $\tilde{\mu}_{max} < \varrho^{\hat{L}_1} \min_{s \in [0,T]} \{ \sigma_s e^{r(T-s)}\}$.
\end{assumption+}

Note that the last inequality is usually satisfied for the Black-Scholes model and $\rho = \mathrm{VaR}^{\alpha}$ with typically $\alpha \geq 90 \%$, $\mu-r \leq 0.08$, and $\sigma \geq 0.1$. Recall for the following assumption that a hyperspherical cap with polar angle $\beta$ is a portion of the sphere cut off by a hyperplane, where $\beta$ is the angle between the vectors from the center of the hypersphere to the pole of the cap and the edge of the hyperdisk forming the base of the cap.

\begin{assumption+} \label{ass: shift-inv and cash-inv and condition}
	$d \geq 2$, $\alpha \in (1,2)$, $\lim_{x \to \infty} \T'(x) =0$, $\lambda_n \equiv \lambda>0$, and $\varrho^{\hat{L}_1} >0$. In addition, we assume that there exist $\hat{\varepsilon}>0$, $\tilde{\varepsilon} \in (0,\frac{\pi}{24})$ such that either (a) $\tilde{\sigma}$ has a density bigger than $\hat{\varepsilon}$ on the unit ball around a vector on the sphere with respect to the radial distance\footnote{The radial distance gives the distance between two points on $\S^d$ along the radial dimension, i.e., the length of the shortest path on $\S^d$.} with radius $12 \tilde{\varepsilon}$, or (b) $\tilde{\sigma}$ has a point mass bigger than $\hat{\varepsilon}$ at $(d+1)$ points such that each point has a distance of at least $\tilde{\varepsilon}$ to every $(d-1)$-dimensional hyper-plane spanned by the other $d$ points. If $\rho$ is cash-invariant, we additionally assume that $\varrho^{\hat{L}_1} \tilde{\lambda}_{min} \min\{1,\mathbb{A}_d\}^{1/\alpha} \cos (\tilde{\varepsilon}) \hat{\varepsilon}^{1/\alpha} > \tilde{\mu}_{max}$, where $\mathbb{A}_d$ denotes the surface area of a hyperspherical cap of $\S^d$ with polar angle $\tilde{\varepsilon}$.
\end{assumption+}

If $\tilde{\sigma}$ has full support on the whole sphere and a continuous density, it fulfills the assumption stipulated in Assumption \ref{ass: shift-inv and cash-inv and condition} for $\rho$ being shift-invariant.

\begin{assumption+} \label{ass: alpha=2}
	$\alpha=2$, $\lim_{x \to \infty} \T'(x) =0$, $\lambda_n \equiv \lambda>0$, and $\varrho^{W_1} >0$. If $\rho$ is cash-invariant, we additionally assume that $\varrho^{W_1} \tilde{\lambda}_{min} > \tilde{\mu}_{max}$.
\end{assumption+}

Note that the last inequality is satisfied for $\rho = \mathrm{VaR}^{\alpha}$ in a multidimensional Black-Scholes setting with typical parameter values.

\subsection{Infinitesimal generator and Nash equilibrium} \label{def: infini and Nash}

\begin{definition} \label{inf gen def}
	Let $f: \mathbb{N} \times \R \rightarrow \mathbb{R} $ be a function sequence. We define the infinitesimal generator $\mathrm{A}^u$ in discrete time as $(\mathrm{A}^u f)_n (x) = \mathbb{E}_{n,x} [ f_{n+1} ( X^u_{n+1}) - f_n (x) ].$
\end{definition}

\begin{definition} \label{def: Nash equilibrium disc}
	Consider a fixed control law $\hat{u}$. Fix an arbitrary point $(n,x) \in \mathbb{N} \times \R$ with $n<T$ and take an arbitrary control law $u \in \mathcal{U}$. We define then for the time points $n, n+1, \ldots, T-1$ the control law $\bar{u}$ as follows for any $y \in \R$: $\bar{u}_k (y) = \left\{ \begin{array}{cl}
		\hat{u}_k (y) & \text{for } k=n+1, \ldots, T-1,  \\
		u & \text{for } k=n
	\end{array} \right.  .$
	Then the control law $\hat{u}$ is a perfect subgame Nash equilibrium if for all fixed $(n,x)$ it holds that $\sup_{u \in \mathcal{U}} J_n (x, \bar{u}) = J_n (x, \hat{u})$. If such an equilibrium exists, then we also define the equilibrium value function $V$ by $V_n (x) = J_n (x, \hat{u})$.
\end{definition}
The following proposition is well-known and not difficult to prove. It shows that an optimal time-consistent strategy indeed corresponds to a Nash equilibrium.
\begin{proposition} \label{equivalent problem}
	Let $\mathcal{A}_{T,x} := \{ u_T = u(T,x) | u_T \in \mathcal{V} \}$. Next, we recursively define $u_j^*$ as $u_j^*:= \argmax_{u \in \mathcal{A}_{j,x}} \mathbb{E}_{n,x} [\T(X_T^u-x e^{r(T-n)})]  - \lambda_n F(\rho_{n,x} (X_T^u - x e^{r(T-n)}))$ for $j=n+1,\ldots,T$ and $\mathcal{A}_{n,x}$ by $\mathcal{A}_{n,x} := \{ (u_n,u_{n+1},\ldots, u_T) | u_n = u(n,x), u_n \in \mathcal{V}, u_j = u_j^* \textit{ for } j=n+1,\ldots,T \}$. Then, the control law $u^*$ is a perfect subgame Nash equilibrium if and only if $u^*$ is the $\argmax$ of problem \eqref{J def intro}.
\end{proposition}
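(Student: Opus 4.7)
The plan is to prove both implications by a standard backward-induction argument on the time index, which unwinds the recursive definition of $u_j^*$ and $\mathcal{A}_n$ and matches it with the one-step deviation criterion of Definition \ref{def: Nash equilibrium disc}. The preliminary observation is that, because $\mathcal{A}_n$ pins down every future control to its optimizer $u_k^*$, maximizing $J_n(x, \cdot)$ over $\mathcal{A}_n$ reduces to a maximization over the single coordinate $u_n \in [-M,M]$, which by construction has $u_n^*$ as its argmax.

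For the direction Nash $\Rightarrow$ recursive argmax, I would fix $\hat u$ a perfect subgame Nash equilibrium and show $\hat u_j = u_j^*$ by backward induction on $j$. The base case $j=T$ is immediate since the one-step problem at $T$ and the defining problem for $u_T^*$ coincide (both maximize over $\mathcal{A}_T$). For the inductive step, assume $\hat u_k = u_k^*$ for all $k > n$. Applying Definition \ref{def: Nash equilibrium disc} at $(n,x)$, the admissible deviation $\bar u$ agrees with $\hat u$ for $k > n$ and equals an arbitrary $u \in [-M,M]$ at time $n$. The Nash property $\sup_{u} J_n(x, \bar u) = J_n(x, \hat u)$ combined with the inductive hypothesis shows that $\hat u_n$ solves exactly the maximization defining $u_n^*$, hence $\hat u_n = u_n^*$.

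For the converse direction, recursive argmax $\Rightarrow$ Nash, assume $\hat u$ is the argmax of \eqref{J def intro}, so by the definition of $\mathcal{A}_0$ we have $\hat u_j = u_j^*$ for every $j$. Fix any $(n,x)$ and any deviation $u \in [-M,M]$ at time $n$. The resulting control $\bar u$ of Definition \ref{def: Nash equilibrium disc} satisfies $\bar u_k = \hat u_k = u_k^*$ for $k > n$ and $\bar u_n = u$. By the very definition of $u_n^*$ as the argmax over $\mathcal{A}_n$, we obtain $J_n(x,\bar u) \leq J_n(x,\hat u)$, establishing the Nash property.

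The main (mild) subtlety is simply to ensure that the $\argmax$ notation is meaningful — i.e., that the sup is attained — which ultimately rests on the compactness of $[-M,M]^d$ and suitable continuity of $\T$, $F$, and $\rho$ in the state. Since the proposition only asserts the equivalence of the two formulations whenever the argmax is well-defined, and since both implications above are just book-keeping of the one-step deviation principle against the backward recursion, the argument is essentially formal. Uniqueness of the resulting equilibrium is a separate issue that is not claimed here and would require additional structural hypotheses on $\T$, $F$, and $\rho$.
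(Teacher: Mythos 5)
Your proof is correct and is exactly the backward-induction, one-step-deviation bookkeeping that the paper itself intends (its own proof is just the one-liner ``follows by induction''), so you have in effect supplied the details the paper omits. The only slight looseness --- writing $\hat u_j = u_j^*$ in the Nash $\Rightarrow$ argmax direction, when the stagewise argmax need not be a unique selection --- is inherited from the statement itself and does not affect the equivalence; your closing remarks on attainment of the supremum and on uniqueness being a separate issue are consistent with how the paper handles these points later (Theorem \ref{deterministic control function}).
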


\begin{proof}
	The $\argmax$ may be either restricted to $u$ bounded by $M$ (under Assumption \ref{ass: bounded strategies}) or by Proposition \ref{u unif bounded} to $\mathcal{U} \cap \{ u\ | \sum_{i=n}^T |u_i|^{\alpha} \leq M_n^{\alpha} \}$ (for appropriate $M_n>0$). Hence, the maximum exists, and the proof follows by induction. 
\end{proof}

\subsection{Properties of the optimal control function} \label{section optimal}

This section proves that the optimal control function is deterministic and exists. It is unique if $\mathcal{T}$ or $-F$ is strictly concave. 
Here, we use the definition of stochastic integrals for functions from $\cL$ (the set of adapted and c{\`a}gl{\`a}d processes, see \ref{def: stochastic integral}) since the discrete model restricts to piecewise constant control functions with left limits. 

\begin{theorem} \label{deterministic control function}
	The supremum in \eqref{J def intro} exists and may (and will in the sequel) be taken only over deterministic strategies. It is unique if $\T$ or $-F$ is strictly concave and $\alpha>1$.
\end{theorem}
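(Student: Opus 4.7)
The plan is to prove this by backward induction on the $N$-point time grid, using the Nash characterization from Proposition \ref{equivalent problem}. At each step I want to reduce the inner maximization over $\mathcal{F}$-measurable $[-M,M]^d$-valued controls to a deterministic optimization over the compact set $[-M,M]^d$, at which point existence follows from continuity and compactness (Weierstrass) and uniqueness from strict concavity.

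For the base case I would fix the terminal interval $(T-1,T]$, condition on $\{X_{T-1}=x\}$, and integrate the SDE \eqref{def: financial market 2} with a constant control value $v\in[-M,M]^d$ to obtain the linear-in-$v$ representation $X_T-xe^{r}=v^{\top}\tilde A+v^{\top}\tilde Z$, with $\tilde A$ deterministic and $\tilde Z$ an $\alpha$-stable random vector independent of $\mathcal{F}_{T-1}$. Writing $\mu$ for the conditional law of $u_T$ given $\{X_{T-1}=x\}$, law-invariance of $\rho$ (Definition \ref{risk measure definition}(a)) together with this independence renders $J_{T-1}(x,u_T)$ a functional of $\mu$ alone: the expectation piece becomes $\int \Phi(v)\,d\mu(v)$ with $\Phi(v):=\E[\mathcal{T}(v^{\top}\tilde A+v^{\top}\tilde Z)]$ continuous on $[-M,M]^d$ by Assumption \ref{assumption expected value}, and the risk piece is $\rho'$ applied to the mixture $\int P_v\,d\mu(v)$. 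I would then peel off the deterministic drift via cash-invariance (c1) or shift-invariance (c2) and use positive homogeneity together with the stability scaling $t^{-1/\alpha}L_t\stackrel{d}{=}L_1$ to express $\rho_0(v^{\top}\tilde A+v^{\top}\tilde Z)=g(v)$ for a continuous $g$.

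To finish the base case I would dominate every $\mu$ by a Dirac: the expectation piece trivially satisfies $\int \Phi(v)\,d\mu(v)\le\max_v \Phi(v)$, while for the risk piece I would establish the Jensen-type inequality $\rho'\bigl(\int P_v\,d\mu(v)\bigr)\ge g(\bar v)$ with $\bar v:=\int v\,d\mu(v)$, using positive homogeneity and the symmetry assumption (Assumption \ref{ass: symmetric multidemensional}, or reduction to one dimension via Assumption \ref{ass: one dimensional}) to reduce the inequality to a convexity statement for the stable scale seminorm $v\mapsto c(v)$. Combined with monotonicity of $F$ and the sign in \eqref{J def}, this yields $J_{T-1}(x,u_T)\le\sup_{v\in[-M,M]^d}[\Phi(v)-\lambda_{T-1}F(g(v))]$, the right-hand side being attained by continuity on the compact set and finite by Assumption \ref{ass: rho endlich}.

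The inductive step from $n+1$ to $n$ then runs identically: by hypothesis the future Nash controls $u^{\star}_{n+2},\dots,u^{\star}_T$ are deterministic state-feedbacks, so conditional on $X_n=x$ and $u_{n+1}=v$ the terminal gain $X_T-xe^{r(T-n)}$ has a deterministic kernel, and the same mixture/dispersion argument applies to the conditional law of $u_{n+1}$ given $\{X_n=x\}$. Finally, uniqueness follows because $v\mapsto v^{\top}\tilde A+v^{\top}\tilde Z$ is pathwise affine, so strict concavity of $\mathcal{T}$ makes $\Phi$ strictly concave and forces a unique maximizer, while strict convexity of $F$ composed with the positively homogeneous $g$ plays the analogous role on the risk side. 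I expect the hardest step to be the dispersion inequality: showing under only law-invariance, positive homogeneity, and cash/shift-invariance (with no sub-additivity available) that mixing controls weakly increases the risk functional---this is where the $\alpha$-stability of $\tilde Z$ and its symmetry seem essential, since they collapse the risk of $v^{\top}\tilde Z$ to a single scale parameter $c(v)$ that one can hope to compare via convexity.
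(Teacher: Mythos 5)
Your overall frame (backward induction, reduction of the last-period problem to an optimization of a continuous function on the compact set $[-M,M]^d$, uniqueness from strict concavity) is the same as the paper's, but the way you handle randomness of the control is genuinely different and contains the real gap. You replace the random control by its conditional law $\mu$ given $\{X_{T-1}=x\}$ and treat the risk term as $\rho'$ applied to the mixture $\int P_v\,d\mu(v)$, which forces you to prove the ``dispersion inequality'' $\rho'\bigl(\int P_v\,d\mu(v)\bigr)\ge g(\bar v)$ with $\bar v=\int v\,d\mu$. This inequality is not a consequence of law-invariance, positive homogeneity and cash/shift-invariance alone, and it is in fact false for the paper's main example, VaR: take $d=1$, zero excess drift, $\tilde Z$ standard normal, $\mu=\tfrac12\delta_0+\tfrac12\delta_2$, $p=0.2$; then the mixture puts mass $\tfrac12$ on the point $0$, so its $p$-quantile-based VaR is about $0.51$, while $g(\bar v)=g(1)=-\Phi^{-1}(0.2)\approx 0.84$, i.e.\ mixing strictly \emph{decreases} the risk below that of the averaged control. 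Moreover, even where such an inequality holds, your combination step is miscoordinated: bounding the reward piece by $\max_v\Phi(v)$ and the risk piece by $g(\bar v)$ compares the two terms at different points, so it yields $J\le\max_v\Phi(v)-\lambda_{T-1}F(g(\bar v))$, which need not be $\le\sup_v[\Phi(v)-\lambda_{T-1}F(g(v))]$; to pair the bounds at a single point you would at least need Jensen's inequality $\int\Phi\,d\mu\le\Phi(\bar v)$ (available, since $\Phi$ is concave), but then you are back to needing the false dispersion inequality for the risk term.

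The paper's proof never has to confront mixtures at all, and this is the idea you are missing. Since the control acting on $(T-1,T]$ is $\mathcal{F}_{T-1}$-measurable, the L\'evy increments after $T-1$ are independent of $\mathcal{F}_{T-1}$, and both the conditional expectation and the dynamic risk measure $\rho_{T-1}$ of Definition \ref{risk measure definition} are computed given that information, the control value is simply \emph{frozen} pointwise: by law-invariance the conditional objective equals $\tilde V(u(\omega))$ for every $\omega$, where $\tilde V$ is the deterministic one-parameter function obtained after pulling out drift (cash/shift-invariance) and scale (positive homogeneity). Then concavity and continuity of $\tilde V$ (Propositions \ref{Jn concave} and \ref{Jn cont}) give a maximizer $u^*\in[-M,M]$, the pointwise bound $\tilde V(u(\omega))\le\tilde V(u^*)$ shows no random (or randomized) control can do better, strict concavity of $\T$ or $-F$ makes this inequality strict off $\{u(\omega)=u^*\}$ and gives uniqueness, and the induction step repeats the argument using that the later equilibrium controls are already deterministic. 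If you rewrite your base case with this conditioning-and-freezing argument in place of the mixture/dispersion step, the rest of your outline goes through.
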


Before proving this theorem, we need some additional definitions and results:

\begin{remark} \label{remark: p-q ind a}
	Under Assumption \ref{ass: symmetric multidemensional} and \ref{ass: asymmetric}, ${\Int_{\S^d} \norm{\langle a,v \rangle}^{\alpha} \sgn( \langle a,v \rangle )\tilde{\sigma} (\diff v)}({\Int_{\S^d} \norm{\langle a,v \rangle}^{\alpha} \tilde{\sigma} (\diff v)})^{-1}$ is independent of $a \in \R_{\geq 0}^d$, and we will in the sequel refer to this number as $p-q$ with $p+q=1$. It holds that $p-q=1$ (resp. $p-q=-1$) if $L$ has only upward (resp. downward) jumps and $p-q=0$ if $L$ is symmetric. This will help us to identify later certain stochastic integrals with one-dimensional L{\'e}vy processes using equation \eqref{eq: characteristic function 1dim}.
\end{remark}

\begin{definition} \label{remark: alpha=2}
	Under Assumption \ref{ass: symmetric multidemensional} and \ref{ass: asymmetric}, if $\alpha \in (0,2] \backslash \{1\}$, let $\tilde{L}$ be a one-dimensional $\alpha$-stable L{\'e}vy process with characteristic function given by \eqref{eq: characteristic function 1dim} with the same constant $c$ in the L{\'e}vy measure as $L$ and (as in Remark \ref{remark: p-q ind a}) with $p-q= 2p-1 = {\Int_{\S^d} \norm{\langle a,v \rangle}^{\alpha} \sgn( \langle a,v \rangle )\tilde{\sigma} (\diff v)}$ $({\Int_{\S^d} \norm{\langle a,v \rangle}^{\alpha} \tilde{\sigma} (\diff v)})^{-1}$, where $p-q$ is as in \eqref{eq: characteristic function 1dim}. \\%see also Subsection \ref{model setup} and Equation (\ref{eq: levy measure})). \\
	If $\alpha=2$, let $\tilde{L}$ be a one-dimensional Brownian Motion, i.e., $\tilde{L} \overset{d}{=} W^j$ for all $j \in \{1,\ldots,d\}$.
\end{definition}

\begin{definition} \label{Hilbert space H}
	We define the Hilbert space $H_t$ as $(\R^d,\langle \cdot,\cdot \rangle_{R_t})$ with $R_t$ as in Subsection \ref{model setup}. Then, we set $\norm{x}_{R_t} := \sqrt{\langle x,x \rangle_{R_t}} = \sqrt{x R_t x^\intercal}$ for a row vector $x$ and with a slight abuse of notation write $\norm{x}_{\L^2((a,b],\diff t; H_t)} = \sqrt{\Int_a^b \norm{x}_{R_t}^2 \diff t} $ with $a<b$.
\end{definition}

\begin{definition} \label{def m w}
	We define the two functions $m$ and $w$ as follows for all $s \in (0,T]$
	\begin{align*}
		m_s (u) &:= \left\{ \begin{matrix*}[l]
			u_s^\intercal (\mu_s - r) e^{r(T-s)} & \text{if the risk measure $\rho$ is cash-invariant,} \\
			0 & \text{if the risk measure $\rho$ is shift-invariant,}
		\end{matrix*} \right.  \\
		\leftidx{^{\alpha}}{w}_s (u)& := \left\{ \begin{matrix*}[l]
			\norm{u_s^\intercal \sigma_s v e^{r(T-s)}}_{\L^{\alpha}(\S^d,\tilde{\sigma}(\diff v))} & \text{if }\alpha<2. \\
			\norm{ u_{s}^\intercal \sigma_{s} e^{r(T-{s})}}_{R_s} & \text{if }\alpha=2.
		\end{matrix*}  \right. 
	\end{align*}
	Moreover, we define $\hat{L} := \left\{ \begin{matrix*}[l]
		\tilde{L} & \text{if }\alpha<2 \ \& \ \text{Ass. } \ref{ass: symmetric multidemensional} \text{ or } \ref{ass: asymmetric} \text{ (see Definition \ref{remark: alpha=2})}. \\
		L^1 & \text{if }\alpha<2 \ \& \ \text{Ass. } \ref{ass: one dimensional}. \\
		W^1 & \text{if }\alpha=2.
	\end{matrix*}  \right. $
\end{definition}

If $d=1$, then $\leftidx{^{\alpha}}w_s (u) = |u_s^1 \sigma_s^{11} e^{r(T-s)}|$ in all cases since then it holds that $R_s = (1)$ and
\begin{align} \label{eq: one dimensional simplification}
	\norm{u_s^\intercal \sigma_s v e^{r(T-s)}}_{\L^{\alpha}(\S^1,\tilde{\sigma}(\diff v))} &= |u_s^1 \sigma_s^{11} e^{r(T-s)}| \sqrt[\alpha]{\Int_{\S^1} |v|^{\alpha} \tilde{\sigma} (\diff v)} 
	= |u_s^1 \sigma_s^{11} e^{r(T-s)}|.
\end{align}

Note that for $\alpha<1$, $\norm{\cdot}_{\L^{\alpha}}$ is only a quasi-norm.

The following lemmas show that the stochastic integral of an $\alpha$-stable L{\'e}vy process is still $\alpha$-stable. We state the proofs in \ref{proofs}. 

\begin{lemma} \label{alpha additive}
	Let $a = (a^1,\ldots,a^d)^\intercal \in \R^d_{\geq 0}$, $\alpha<2$ and $0\leq s<t$. Then it holds in the case of the Assumptions \ref{ass: symmetric multidemensional} or \ref{ass: asymmetric} that $\scprod{a,L_t - L_s} \overset{d}{=} \sqrt[\alpha]{\Int_{\S^d} |\scprod{a,v}|^{\alpha} \tilde{\sigma}(\diff v)} (\tilde{L}_t-\tilde{L}_s )$.
\end{lemma}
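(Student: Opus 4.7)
The plan is to match characteristic functions. Since both sides of the claimed equality in distribution are symmetric $\alpha$-stable one-dimensional laws, it suffices to compute their characteristic exponents and verify they coincide.

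First, I would write the characteristic function of $L_t - L_s$ via the L\'evy--Khintchine representation. Because $L$ is symmetric (Assumption \ref{ass: symmetric multidemensional}), the L\'evy measure $\lambda$ is symmetric, so the drift and the compensation term cancel, leaving the characteristic exponent
\[
\psi(\xi) \;=\; \Int_{\R^d} \bigl(\cos\scprod{\xi,x}-1\bigr)\,\lambda(\diff x), \qquad \xi\in\R^d.
\]
Plugging in the polar form \eqref{eq: levy measure} and exchanging the order of integration gives
\[
\psi(\xi) \;=\; \Int_{\S^d}\!\Int_{0}^{\infty} \bigl(\cos(\scprod{\xi,v}w)-1\bigr)\,\tfrac{c}{w^{\alpha+1}}\,\diff w\,\tilde\sigma(\diff v).
\]
Substituting $u=|\scprod{\xi,v}|w$ in the inner integral (using that cosine is even to absorb the sign) factors out $|\scprod{\xi,v}|^{\alpha}$, so that
\[
\psi(\xi) \;=\; -\,c\,C_\alpha\,\Int_{\S^d}|\scprod{\xi,v}|^{\alpha}\,\tilde\sigma(\diff v),
\qquad C_\alpha := \Int_0^\infty \tfrac{1-\cos u}{u^{\alpha+1}}\,\diff u\in(0,\infty),
\]
the finiteness of $C_\alpha$ following from $\alpha\in(1,2)$.

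Next, I would apply the same computation to the one-dimensional symmetric $\alpha$-stable process $\tilde L$ of Definition \ref{remark: alpha=2}. Its spherical measure on $\S=\{-1,+1\}$ is the uniform measure by symmetry, so the analogous calculation yields
\[
\tilde\psi(\eta)\;=\;-\,c\,C_\alpha\,|\eta|^{\alpha},\qquad\eta\in\R.
\]
Then, for one-dimensional random variables, the characteristic function of $\scprod{a,L_t-L_s}$ at $\eta\in\R$ equals the characteristic function of $L_t-L_s$ at $\eta a$, namely $\exp\bigl((t-s)\psi(\eta a)\bigr)$; by homogeneity of $|\cdot|^\alpha$ this simplifies to
\[
\exp\!\left(-(t-s)\,c\,C_\alpha\,|\eta|^{\alpha}\Int_{\S^d}|\scprod{a,v}|^{\alpha}\tilde\sigma(\diff v)\right).
\]
On the other hand, with $K:=\sqrt[\alpha]{\Int_{\S^d}|\scprod{a,v}|^{\alpha}\tilde\sigma(\diff v)}$, the characteristic function of $K(\tilde L_t-\tilde L_s)$ at $\eta$ equals $\exp\bigl((t-s)\tilde\psi(K\eta)\bigr)=\exp(-(t-s)cC_\alpha K^\alpha|\eta|^\alpha)$, which is the same expression. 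Equality in distribution follows.

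The only real obstacle is the bookkeeping in the polar substitution and the justification of Fubini; the non-negativity of $a$ plays no role once one works with $|\scprod{a,v}|^\alpha$, and symmetry is used precisely to discard the linear and drift contributions in the L\'evy--Khintchine formula (so this argument is specific to the symmetric multidimensional case, as stated).
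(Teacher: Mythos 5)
Your proposal is correct and follows essentially the same route as the paper: both arguments match characteristic functions of the two sides using the symmetric $\alpha$-stable exponent $-\,\mathrm{const}\cdot\Int_{\S^d}|\scprod{\xi,v}|^{\alpha}\tilde\sigma(\diff v)$. The only cosmetic difference is that you re-derive this exponent from the L\'evy--Khintchine formula and the polar form of the L\'evy measure and scale the argument of the characteristic function directly, whereas the paper quotes its closed-form characteristic function \eqref{eq: characteristic function symmetrisch} and then converts the resulting time rescaling into the spatial factor via the $\alpha$-stability (self-similarity) of $\tilde L$.
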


\begin{lemma} \label{alpha dist}
	Let $f$ be a deterministic and continuous function. Moreover, in the case of the Assumptions \ref{ass: asymmetric} or \ref{ass: one dimensional}, let $f$ additionally be non-negative. Then it holds for all $t \in [0,T)$ that $\Int_{t+}^T f(s) \diff \hat{L}_s \overset{d}{=} \sqrt[\alpha]{\Int_t^T |f(s)|^{\alpha} \diff s} \hat{L}_1$.
\end{lemma}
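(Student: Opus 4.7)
The natural strategy is an approximation-by-simple-functions argument, combined with a characteristic function computation that exploits the $\alpha$-stability of $\hat{L}$ and the independence of its increments. Throughout, recall that a symmetric $\alpha$-stable real random variable $Y$ has characteristic function of the form $\phi_Y(u)=\exp(-C|u|^\alpha)$ for some $C\geq 0$ determined by the scaling constant $c$ (for $\alpha<2$) or the variance (for $\alpha=2$), and that $\hat{L}_s-\hat{L}_r\overset{d}{=}(s-r)^{1/\alpha}\hat{L}_1$ for $0\leq r<s$ by the stability and stationarity of the increments.

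\textbf{Step 1: simple functions.} Since $f$ is deterministic and left-continuous, approximate it by a sequence of left-continuous step functions $f_n=\sum_{i=1}^{N_n} c_i^{(n)}\mathbbm{1}_{(t^{(n)}_{i-1},t^{(n)}_i]}$ with $t=t_0^{(n)}<\dots<t_{N_n}^{(n)}=T$, $f_n\to f$ pointwise on $(t,T]$ and $|f_n|\leq |f|+1$ (choosing $c_i^{(n)}\geq 0$ whenever $f\geq 0$ is required). For each $f_n$ the stochastic integral reduces to the finite sum $\int_{t+}^T f_n(s)\,\mathrm d\hat{L}_s=\sum_i c_i^{(n)}(\hat{L}_{t^{(n)}_i}-\hat{L}_{t^{(n)}_{i-1}})$.

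\textbf{Step 2: characteristic function.} Using independence of the increments and $(s-r)^{1/\alpha}\hat{L}_1\overset{d}{=}\hat{L}_s-\hat{L}_r$, one computes
\begin{align*}
\E\!\left[\exp\!\left(iu\int_{t+}^T f_n(s)\,\mathrm d\hat{L}_s\right)\right]
&=\prod_{i=1}^{N_n} \E\!\left[\exp\!\left(iu\, c_i^{(n)}(\hat{L}_{t_i^{(n)}}-\hat{L}_{t_{i-1}^{(n)}})\right)\right] \\
&=\prod_{i=1}^{N_n}\exp\!\left(-C\,|c_i^{(n)}|^\alpha\,(t_i^{(n)}-t_{i-1}^{(n)})\,|u|^\alpha\right) \\
&=\exp\!\left(-C\,|u|^\alpha\int_t^T |f_n(s)|^\alpha\,\mathrm d s\right),
\end{align*}
which is exactly the characteristic function of $\bigl(\int_t^T|f_n(s)|^\alpha\,\mathrm ds\bigr)^{1/\alpha}\hat{L}_1$. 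For $\alpha=2$ this is the standard Gaussian computation; for $\alpha<2$ with the symmetric Assumption \ref{ass: symmetric multidemensional} the symmetry of $\hat{L}=\tilde{L}$ justifies the pure $\exp(-C|u|^\alpha)$ form, while under Assumption \ref{ass: one dimensional} the non-negativity of the coefficients $c_i^{(n)}$ (enforced since $u\geq 0$ in applications and preserved by the approximation when $f\geq 0$) ensures that the skewness parameters of the increments combine additively so that the same product formula still yields a scaled copy of $\hat{L}_1$.

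\textbf{Step 3: passage to the limit.} The continuity of the stochastic integral with respect to deterministic integrands (as recalled in \ref{app: Levy}) gives $\int_{t+}^T f_n(s)\,\mathrm d\hat{L}_s\to\int_{t+}^T f(s)\,\mathrm d\hat{L}_s$ in probability, hence in distribution. Dominated convergence gives $\int_t^T|f_n|^\alpha\,\mathrm ds\to\int_t^T|f|^\alpha\,\mathrm ds$, so the characteristic functions computed in Step~2 converge pointwise to $\exp(-C|u|^\alpha\int_t^T|f(s)|^\alpha\,\mathrm ds)$, which is the characteristic function of $\bigl(\int_t^T|f(s)|^\alpha\,\mathrm ds\bigr)^{1/\alpha}\hat{L}_1$. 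L\'evy's continuity theorem then delivers the claimed equality in law.

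\textbf{Main obstacle.} The delicate point is Step~2 in the asymmetric case $\alpha<2$ under Assumption \ref{ass: one dimensional}: the characteristic function of a general (possibly skewed) $\alpha$-stable variable has an additional $\mathrm{sgn}(u)\tan(\pi\alpha/2)$ term, and these terms combine cleanly into the single scale factor $(\int|f|^\alpha\,\mathrm ds)^{1/\alpha}$ only if the coefficients $c_i^{(n)}$ all share the same sign; non-negativity of $f$ in applications (or, equivalently, symmetry of $\hat{L}$) is precisely what guarantees this, and is the reason the hypothesis on the sign of $f$ is imposed.
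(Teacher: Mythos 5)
Your proposal is correct and follows essentially the same route as the paper: discretize/approximate the integrand by left-continuous step functions, compute the characteristic function of the resulting sum of independent stable increments (treating the skewed one-dimensional case via the sign condition on $f$, and the symmetric and Gaussian cases directly), and pass to the limit using dominated convergence for the integrals together with convergence of characteristic functions. The only cosmetic difference is that the paper works with the normalized quantity $Z_n$ whose characteristic function equals that of $\hat{L}_1$ exactly for every $n$, while you keep the unnormalized sum and invoke L\'evy's continuity theorem at the end.
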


\begin{proposition} \label{risk measure formula}
	It holds $\rho_{n,x} (X_T^u - x e^{r(T-n)}) 
	= -\Int_n^T m_s (u) \diff s + \varrho^{\hat{L}_1} \norm{\leftidx{^{\alpha}}w_s (u)}_{\L^{\alpha}((n,T],\diff s)}$ for all $u \in \mathcal{U}_{det}$.
\end{proposition}

\begin{proof}
	We start with the case of Assumption \ref{ass: symmetric multidemensional} or \ref{ass: asymmetric}. First of all, we can rewrite the controlled Markov process $X$ with It\^{o}'s lemma for semimartingales in integral form (see Protter \cite[pp.81-82] {protter2005stochastic}). Hence, we get for all $u \in \mathcal{U}$:
	\begin{align} \label{eq: umschreibung xt-xn}
		X_T^u - X_n^u e^{r(T-n)} =& \Int_{n+}^T -re^{r(T-s_-)} X_{s-}^u \diff s +  \Int_{n+}^T e^{r(T-s_-)} \diff X_s^u + 0 \notag \\
		&+ \Sum_{t < s \leq T} \{ e^{r(T-s)} X_s^u - e^{r(T-s_-)} X_{s_-}^u - 0 - e^{r(T-s_-)} \Delta X_s^u \} \notag \\
		=& \Int_{n+}^T re^{r(T-s)} \Delta X_{s}^u \diff s + \Sum_{i=1}^d \Int_{n+}^T u^i_{s-} (\mu^i_s-r) e^{r(T-s)} \diff s \notag \\
		&+ \Sum_{i,j=1}^d  \Int_{n+}^T u^i_{s-} \sigma^{ij}_s e^{r(T-s)} \diff L^j_s \notag \\
		=& \Sum_{i=1}^d \Int_{n}^T u^i_s (\mu^i_s-r) e^{r(T-s)} \diff s + \Sum_{i,j=1}^d  \Int_{n+}^T u^i_{s} \sigma^{ij}_s e^{r(T-s)} \diff L^j_s,
	\end{align} 
	since $X_s^u$ is a c{\`a}dl{\`a}g process (due to $L_s$ being c{\`a}dl{\`a}g) and has therefore only countably many points of discontinuity. Thus, the points of discontinuity form a Lebesgue null set.\\
	Then, we get from (\ref{eq: umschreibung xt-xn}) combined with the definition of $m$ (see Definition \ref{def m w}) and the cash- or shift-invariance of $\rho$ that
	\begin{align*} 
		\rho_{n,x} &( X_T^u - x e^{r(T-n)} ) = -\Int_n^T m_s (u) \diff s + \rho_{n,x} ( \Sum_{i,j=1}^d  \Int_{n+}^T u^i_{s} \sigma^{ij}_s e^{r(T-s)} \diff L^j_s ).
	\end{align*}
	Now, since $u \in L^\alpha$ and deterministic, there exist deterministic and continuous $\leftidx{^l}{u}$ such that $\leftidx{^l}{u} \xrightarrow{L^\alpha} u$. Then, (\ref{eq: formula integral construction}) implies that
	\begin{align} \label{prop: eq2}
		\Sum_{i,j=1}^d  \Int_{n+}^T \leftidx{^l}u^i_{s} \sigma^{ij}_s e^{r(T-s)} \diff L^j_s &\overset{\PP}{=} \Sum_{i,j=1}^d  \lim\nolimits_{K \to \infty} \Sum_{k=1}^{K} \leftidx{^l}u^i_{t^K_k} \sigma^{ij}_{t^K_k} e^{r(T-{t^K_k})} ( L_{t^K_k}^j-L_{t^K_{k-1}}^j ) \notag \\
		&= \lim\nolimits_{K \to \infty} \Sum_{k=1}^{K} \Sum_{j=1}^d ( \Sum_{i=1}^d  \leftidx{^l}u^i_{t^K_k} \sigma^{ij}_{t^K_k} e^{r(T-{t^K_k})} ) ( L_{t^K_k}^j-L_{t^K_{k-1}}^j )
	\end{align}
	with the equidistant grid points $t_k$, i.e., $n=t^K_0 < t^K_1 < \ldots < t^K_K=T$.\\
	We must distinguish the cases $\alpha<2$ and $\alpha=2$. We start with the case $\alpha<2$:\\
	We apply Lemma \ref{alpha additive} on the term inside of the sum over $k$, i.e.:
	\begin{align} \label{prop: eq3}
		&\Sum_{j=1}^d ( \Sum_{i=1}^d  \leftidx{^l}u^i_{t^K_k} \sigma^{ij}_{t^K_k} e^{r(T-{t^K_k})} ) ( L_{t^K_k}^j-L_{t^K_{k-1}}^j ) \notag \\
		&\hspace{3cm}\overset{d}{=} \sqrt[\alpha]{\Int_{\S^d} | \Sum_{j=1}^d ( \Sum_{i=1}^d  \leftidx{^l}u^i_{t^K_k} \sigma^{ij}_{t^K_k} e^{r(T-{t^K_k})} ) v^j |^{\alpha} \tilde{\sigma}(\diff v)} ( \tilde{L}_{t^K_k}-\tilde{L}_{t^K_{k-1}} ).
	\end{align}
	We know for random variables $Y_i,Z_i, i \in \{1,\ldots,d\}$: If $Y_i \overset{d}{=} Z_i$, $Y_i \indep Y_j$, and $Z_i \indep Z_j$ for all $i \neq j$, then also $\Sum_{i=1}^d Y_i \overset{d}{=} \Sum_{i=1}^d Z_i$. It is: 
	\begin{align*}
		\rho_{n,x} &( \Sum_{i,j=1}^d  \Int_{n+}^T u^i_{s} \sigma^{ij}_s e^{r(T-s)} \diff L^j_s ) \\
		&= \rho_{n,x} ( \lim\nolimits_{l \to \infty} \Sum_{i,j=1}^d  \Int_{n+}^T \leftidx{^l}u^i_{s} \sigma^{ij}_s e^{r(T-s)} \diff L^j_s ) \\
		&= \rho_{n,x} ( \lim\nolimits_{l \to \infty} \lim\nolimits_{K \to \infty} \Sum_{k=1}^{K} \Sum_{j=1}^d ( \Sum_{i=1}^d  \leftidx{^l}u^i_{t^K_k} \sigma^{ij}_{t^K_k} e^{r(T-{t^K_k})} ) ( L_{t^K_k}^j-L_{t^K_{k-1}}^j ) )\\
		&= \rho_{n,x} ( \lim\nolimits_{l \to \infty} \lim\nolimits_{K \to \infty} \Sum_{k=1}^{K} \sqrt[\alpha]{\Int_{\S^d} | \Sum_{j=1}^d ( \Sum_{i=1}^d  \leftidx{^l}u^i_{t^K_k} \sigma^{ij}_{t^K_k} e^{r(T-{t^K_k})} ) v^j |^{\alpha} \tilde{\sigma}(\diff v)} ( \tilde{L}_{t^K_k}-\tilde{L}_{t^K_{k-1}} ) )\\
		&= \rho_{n,x} ( \lim\nolimits_{l \to \infty} \Int_{n+}^T \sqrt[\alpha]{\Int_{\S^d} | \Sum_{j=1}^d ( \Sum_{i=1}^d  \leftidx{^l}u^i_{s} \sigma^{ij}_{s} e^{r(T-{s})} ) v^j |^{\alpha} \tilde{\sigma}(\diff v)} \diff \tilde{L}_s )\\
		&= \rho_{0} ( \lim\nolimits_{l \to \infty} \Int_{n+}^T \sqrt[\alpha]{\Int_{\S^d} | \Sum_{j=1}^d ( \Sum_{i=1}^d  \leftidx{^l}u^i_{s} \sigma^{ij}_{s} e^{r(T-{s})} ) v^j |^{\alpha} \tilde{\sigma}(\diff v)} \diff \tilde{L}_s )\\
		&= \rho_{0} ( \lim\nolimits_{l \to \infty} \sqrt[\alpha]{\Int_{n+}^T | \sqrt[\alpha]{\Int_{\S^d} | \Sum_{j=1}^d ( \Sum_{i=1}^d  \leftidx{^l}u^i_{s} \sigma^{ij}_{s} e^{r(T-{s})} ) v^j |^{\alpha} \tilde{\sigma}(\diff v)} |^{\alpha} \diff s} \tilde{L}_1 )\\
		&= \rho_{0} ( \sqrt[\alpha]{\Int_{n+}^T \Int_{\S^d} | \Sum_{j=1}^d ( \Sum_{i=1}^d u^i_{s} \sigma^{ij}_{s} e^{r(T-{s})} ) v^j |^{\alpha} \tilde{\sigma}(\diff v)} \diff s \tilde{L}_1 )\\
		&= \varrho^{\tilde{L}_1} \norm{\Sum_{j=1}^d ( \Sum_{i=1}^d  u^i_{s} \sigma^{ij}_{s} e^{r(T-{s})} ) v^j}_{\L^{\alpha}((n,T] \times \S^d, \diff s \times \tilde{\sigma}(\diff v))} = \varrho^{\tilde{L}_1} \norm{u_s^\intercal \sigma_s v e^{r(T-s)}}_{\L^{\alpha}((n,T] \times \S^d, \diff s \times \tilde{\sigma}(\diff v))}
	\end{align*}
	with Applebaum \cite[p.237]{applebaum2009levy} and $\leftidx{^l}{u} \xrightarrow{L^\alpha} u$ in the first equality, (\ref{prop: eq2}) in the second equality, (\ref{prop: eq3}) in the third equality, the independent increments of $L$ and the definition of $\rho$ in the fifth equality, Lemma \ref{alpha dist} in the sixth equality due to $\leftidx{^l}u$ being deterministic, bounded, and continuous, $\leftidx{^l}{u} \xrightarrow{L^\alpha} u$ in the seventh equality, and the positive homogeneity and the law-invariance of $\rho$ in the eighth equality.
	
	The proof works with similar ideas if $\alpha=2$ or if Assumption \ref{ass: one dimensional} holds.
\end{proof}

\begin{proposition} \label{Jn concave}
	Let $\alpha>1$. If the control function $u$ is deterministic, then the functional $J_n$ is concave in $u$. If $\T$ or $-F$ is strictly concave, $J_n$ is also strictly concave.
\end{proposition}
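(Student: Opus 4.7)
The plan is to exploit the fact that, for deterministic $u$, equation (4.5) from the proof of Proposition \ref{risk measure formula} expresses $X_T^u - xe^{r(T-n)}$ as a linear functional of $u$: both the Lebesgue and the stochastic integrands depend linearly on $u$, and there is no $u$-independent piece. First I would conclude from this linearity plus the concavity of $\T$ that $u \mapsto \T(X_T^u - xe^{r(T-n)})$ is concave pathwise (concave composed with affine is concave), so that $u \mapsto \E_{n,x}[\T(X_T^u - xe^{r(T-n)})]$ is concave by linearity of expectation.

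For the risk term I would invoke Proposition \ref{risk measure formula} to write
\begin{align*}
\rho_{n,x}(X_T^u - xe^{r(T-n)}) = -\Int_n^T m_s(u)\diff s + \varrho^{\hat{L}_1}\,\norm{\leftidx{^\alpha}w_s(u)}_{\L^\alpha((n,T],\diff s)}.
\end{align*}
The first summand is linear in $u$. For the second, inspection of Definition \ref{def m w} shows that in each of the three cases $\leftidx{^\alpha}w_s(u)$ is a seminorm of $u_s$ (using positive-definiteness of $\sigma_s$, and of $R_s$ in the Brownian case); the outer $\L^\alpha$ norm then makes $u \mapsto \norm{\leftidx{^\alpha}w_\cdot(u)}_{\L^\alpha}$ a seminorm by Minkowski, hence convex. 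Since $\varrho^{\hat{L}_1}\ge 0$ (the risk of a symmetric $\alpha$-stable random variable is non-negative under the standard risk measures), $\rho_{n,x}(X_T^u - xe^{r(T-n)})$ is convex in $u$. Because $F$ is convex and non-decreasing, $F\circ\rho$ is convex, so $-\lambda_n F(\rho(\cdot))$ is concave. Summing the two concave functionals yields the concavity of $J_n$.

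For strict concavity I would treat the two cases separately. If $\T$ is strictly concave, then for any two deterministic controls $u_1\neq u_2$ the random variable $X_T^{u_1}-X_T^{u_2}$ equals the stochastic integral of $u_1-u_2$, which by Lemma \ref{alpha dist} has a non-degenerate $\alpha$-stable distribution (its scale is positive because $\sigma$ is positive definite and $u_1-u_2$ differs from zero on a set of positive Lebesgue measure). Hence $X_T^{u_1}\neq X_T^{u_2}$ on a set of positive $\PP$-probability, and strict concavity of $\T$ gives strict inequality pointwise there; the expectation preserves the strict inequality, and adding the (still concave) second term keeps it strict. If $-F$ is strictly concave, then the case $\rho(X_T^{u_1})\neq\rho(X_T^{u_2})$ follows routinely from strict convexity of $F$ together with $F$ non-decreasing and $\rho$ convex. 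The remaining case $\rho(X_T^{u_1})=\rho(X_T^{u_2})$ requires that $F$ strictly convex $+$ non-decreasing is automatically strictly increasing, combined with strict convexity of the $\L^\alpha$ norm for $1<\alpha\leq 2$; the equality case of Minkowski then forces $\leftidx{^\alpha}w_\cdot(u_1)$ and $\leftidx{^\alpha}w_\cdot(u_2)$ to be proportional, and a short bookkeeping using the linearity of $m_s(u)$ rules this out under $u_1\neq u_2$.

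The main obstacle I anticipate is precisely this last edge case: seminorms are never strictly convex along rays, so strict concavity of $J_n$ when only $F$ is strictly convex is not immediate from the composition rules and must be teased out via the strict convexity of $\L^\alpha$ for $1<\alpha<\infty$ together with a pointwise parallelism analysis of the inner seminorm $\leftidx{^\alpha}w_s$.
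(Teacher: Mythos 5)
Your proof of plain concavity is the paper's own argument: linearity of $u\mapsto X_T^u-xe^{r(T-n)}$ from \eqref{eq: umschreibung xt-xn}, concavity of $\T$ plus linearity of the expectation for the reward term, and, for the penalty term, Proposition \ref{risk measure formula} together with ``linear part plus $\L^{\alpha}$-norm'' convexity and $F$ convex and non-decreasing. Your explicit remark that one needs $\varrho^{\hat{L}_1}\geq 0$ is a point the paper leaves implicit, and is welcome.

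Where you genuinely depart from the paper is the strictness claim, which the paper settles in one sentence (``the respective term is also strictly concave since $\sigma_t$ is positive definite''). Your case $\T$ strictly concave is a correct fleshing-out of that assertion: non-degeneracy of the stable integral of $u_1-u_2$ gives $X_T^{u_1}\neq X_T^{u_2}$ with positive probability and strict Jensen does the rest (for $\alpha<2$, $d>1$ you tacitly also need $\tilde{\sigma}$ not concentrated on a hyperplane, a point on which the paper is equally silent). The case where only $-F$ is strictly concave, however, is not closed by the ``short bookkeeping'' you invoke. Write $g(u):=-\Int_n^T m_s(u)\,\diff s+\varrho^{\hat{L}_1}\norm{w_\cdot(u)}_{\L^{\alpha}((n,T],\diff s)}$ with $w$ from Definition \ref{def m w}. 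In the shift-invariant case your argument does work, because there $g(u)=0$ together with positive definiteness of $\sigma_s$ (and $\varrho^{\hat{L}_1}>0$) forces $u=0$, so proportional controls with equal $g$-values must coincide. In the cash-invariant case it does not: if the parameters happen to satisfy $\Int_n^T m_s(u_1)\,\diff s=\varrho^{\hat{L}_1}\norm{w_\cdot(u_1)}_{\L^{\alpha}((n,T],\diff s)}$ for some $u_1\neq 0$, then $g(cu_1)=0$ for all $c\geq 0$ by positive homogeneity, every inequality you use holds with equality between $u_1$ and $2u_1$, and $F\circ g$ is constant on that segment; since the reward term is only concave (affine for $\T(x)=x$), no strictness is recovered from anywhere, so linearity of $m$ alone cannot ``rule this out''. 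Closing this case requires either an extra non-degeneracy condition excluding $g(u)=0$ for $u\neq 0$, or falling back on strict concavity of $\T$. To be fair, the paper's one-line proof glosses over exactly the same configuration, so you have correctly located the soft spot; but as written, your strict-concavity argument in the $-F$ case is asserted rather than proven.
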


We state the proof of this and the following two propositions in \ref{proofs}.

\begin{proposition} \label{u unif bounded}
	The optimal deterministic strategies are uniformly bounded in $(\L^\alpha,\norm{\cdot}_{\L^\alpha})$ by an $M>0$ independent of $\delta > 0$.
\end{proposition}

\begin{proposition} \label{Jn cont}
	If the control function $u$ is deterministic, then the functional $J_n(x,u)$ is continuous in $u\, \in \tilde{\mathcal{U}}_{det}$ for every $x$. (Note that since $\tilde{\mathcal{U}}_{det}$ is the set of deterministic discrete-time strategies, $u$ may be seen as a vector in $\R^{d \times (T-n)}$).
\end{proposition}

Now, we are able to prove our main result of this section, i.e., Theorem \ref{deterministic control function}:

\begin{myproof}[Proof of Theorem \ref{deterministic control function}] 
	We prove this theorem by backward induction. Let $M>0$ be greater than the constant from Proposition \ref{u unif bounded}.
	
	\underline{$n=T-1$:}\\
	We know due to \eqref{eq: umschreibung xt-xn} and as the control function is piecewise constant:
	\begin{align} \label{xt - xt-1}
		X_T^u - X_{T-1}^u e^{r(T-n)} = u_{T}^\intercal \Int_{T-1}^T (\mu_s-r) e^{r(T-s)} \diff s + u_T^\intercal \Int_{(T-1)+}^T  \sigma_s e^{r(T-s)} \diff L_s.
	\end{align}
	For a vector $u=(u^1,\ldots,u^d)^\intercal \in \R^d$, denote $\norm{u}_\alpha = \sum_{i=1}^{d} |u^i|^\alpha$. We define $u^* \in \mathcal{V} \subset \R^d$ as:
	\begin{align} \label{eq: definition ustar}
		u^* \ \in \argmax_{\norm{u}_{\alpha} \leq M, u \in \mathcal{V}} &\{ \mathbb{E} [\T(u^\intercal \Int_{T-1}^T (\mu_s-r) e^{r(T-s)} \diff s + u^\intercal \Int_{(T-1)+}^T  \sigma_s e^{r(T-s)} \diff L_s)]   \notag \\
		&\hspace{8pt}-   \lambda_{T-1} F(\rho (u^\intercal \Int_{T-1}^T (\mu_s-r) e^{r(T-s)} \diff 	s + u^\intercal \Int_{(T-1)+}^T  \sigma_s e^{r(T-s)} \diff L_s)) \} \\
		= \argmax_{\norm{u}_{\alpha} \leq M, u \in \mathcal{V}} &\{ \mathbb{E} [\T(u^\intercal ( \Int_{T-1}^T (\mu_s-r) e^{r(T-s)} \diff s + \Int_{(T-1)+}^T  \sigma_s e^{r(T-s)} \diff L_s) ) ]  \notag  \\
		&\hspace{8pt}-   \lambda_{T-1} F(-\Int_{T-1}^T m_s (u) \diff s + \varrho^{\hat{L}_1} \norm{\leftidx{^{\alpha}}w_s (u)}_{\L^{\alpha}((T-1,T],\diff s)} ) \} \notag \\
		=: \argmax_{\norm{u}_{\alpha} \leq M, u \in \mathcal{V}} &\tilde{V}(u), \notag
	\end{align}
	where we used Proposition \ref{u unif bounded} (with the strategy $u_{T-1} \1_{(T-1,T]}$ and $u$ being then identified with $u_{T-1}$ in $\R^d$) and Proposition \ref{risk measure formula} since $u$ in \eqref{eq: definition ustar} is deterministic. Now, it follows from Proposition \ref{Jn cont} (and $\mathcal{V}$ being closed and convex) that $\tilde{V}$ is continuous in $u$ on a compact set and hence has a global maximum, i.e., $u^*$ exists, and by Proposition \ref{u unif bounded} is independent of $M$. If $\alpha>1$ and $\T$ or $-F$ are strictly concave, Proposition \ref{Jn concave} gives us the uniqueness. Next, let $u = u_{T-1} \1_{(T-1,T]}$ be a Markovian strategy with $u_{T-1} = u(T-1,X_{T-1}^u)$. Note that for all $x \in \R$, we have $\tilde{V}(u(T-1,x)) \uparrow_{M \to \infty} \1_{|u(T-1,x)|\leq M} \tilde{V}(u(T-1,x)) \leq \tilde{V}(u^*)$. Hence, it follows that $V(T-1,x) = \esssup_{u(T-1,x) \in \mathcal{V}} \big\{ \mathbb{E}_{T-1,x} [\T(X_T^u-x e^{r})]  - Risk_{T-1,x} (X_T^u - x e^{r}) \big\} = \esssup_{u(T-1,x) \in \mathcal{V}} \tilde{V} (u(T-1,x)) \leq \tilde{V} (u^*) \leq V(T-1,x)$ a.s.
	for $V$ defined as in \eqref{J def intro}. %(Strictly speaking, we would need to take the $\esssup$ in \eqref{J def intro}. However, since the optimal strategy is deterministic, the ``$\sup$'' and the ``$\esssup$'' agree.) 
	Hence, all the inequalities must be equalities, and $u^* \in \mathcal{V} \subset \R^d$ attains the supremum for $t=T-1$ in \eqref{J def intro}. Thus, the $\argmax$ of $V$ from \eqref{J def intro} can be taken only over deterministic control functions at time $T-1$. Furthermore, the equilibrium ${\hat{u}}_{T}=u^*$ attains this maximum. Since the first inequality is strict if $u(T-1,x) \neq u^*$ provided that either $\T$ or $-F$ is strictly concave and $\alpha>1$, uniqueness follows.
	
	\underline{arbitrary $n$:} This can be done analogously to the first case. Using that the equilibrium value attains the optimum for all later time points, $\hat{u}_{n+1}$ is also the (unique) equilibrium. 
\end{myproof}

\begin{remark} \label{remark: non Markovian}
	If one considers the time-consistent, ex-ante \emph{non-Markovian} problem
	\begin{align*}
		V(t) = \esssup\nolimits_{u\in \hat{\mathcal{A}}_t} \big\{ \mathbb{E}_{t} [\T(X_T^u-X_t^u e^{r(T-t)})]  - Risk_{t} (X_T^u - X_t^u e^{r(T-t)}) \big\},
	\end{align*}
	with $\hat{\mathcal{A}}_t$ being the set of time-consistent, (possibly non-Markovian) \emph{adapted} strategies, one can show as in the proof of Theorem \ref{deterministic control function} that the supremum is attained in the deterministic strategy $u^*$ from the proof of Theorem \ref{deterministic control function}. This justifies the restriction to Markovian strategies in \eqref{J def intro}.
\end{remark}

Next, we analyze if the optimal solution is non-negative. In the case of Assumptions \ref{ass: asymmetric} and \ref{ass: one dimensional}, we already assumed this property; hence, we only need to consider the case of Assumption \ref{ass: symmetric multidemensional}: 

\begin{proposition} \label{un non-negative}
	Under Assumption \ref{ass: symmetric multidemensional}: Let $d=1$, $\alpha>1$, $\mu_t \geq r$ for all $t \in [0,T]$, and $\T$ be such that $\E [|\T'(a+b L_t^1)|] < \infty$ for all $a,b \in \R$ and for all $t \in (0,T]$. Further assume that $F'(0) = 0$. Then, every deterministic optimal control function is non-negative.
\end{proposition}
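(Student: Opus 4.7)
The plan is to argue by first-order optimality along the direction taking $\hat{u}$ toward $|\hat{u}|$. Let $\hat{u}$ be optimal, which by Theorem \ref{deterministic control function} may be chosen deterministic. Suppose for contradiction that $A := \{s \in (n,T] : \hat{u}_s < 0\}$ has positive Lebesgue measure. Define $v_s := -2\hat{u}_s\mathbbm{1}_A(s) \geq 0$ and $\hat{u}^\epsilon := \hat{u} + \epsilon v$; for $\epsilon \in [0,1]$ one has $\hat{u}^\epsilon_s = (1-2\epsilon)\hat{u}_s$ on $A$ and $\hat{u}^\epsilon_s = \hat{u}_s$ off $A$, so $|\hat{u}^\epsilon_s| \leq |\hat{u}_s| \leq M$ and $\hat{u}^\epsilon$ is admissible. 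Optimality then forces $\frac{d}{d\epsilon}J_n(x,\hat{u}^\epsilon)\Big|_{\epsilon=0^+} \leq 0$, and the strategy is to derive a contradiction by showing this derivative is strictly positive.

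Using \eqref{eq: umschreibung xt-xn}, Proposition \ref{risk measure formula}, and the distributional identity $\int_{n+}^T u_s \sigma_s e^{r(T-s)} \diff \hat{L}_s \overset{d}{=} c(u)\hat{L}_1$ from Lemma \ref{alpha dist} (applicable because $d=1$ and $\hat{L}$ is symmetric under Assumption \ref{ass: symmetric multidemensional}), I rewrite
\begin{align*}
J_n(x,u) = \E\bigl[\T\bigl(D(u) + c(u) \hat{L}_1\bigr)\bigr] - \lambda_n F\bigl(\rho(u)\bigr),
\end{align*}
with $D(u) := \int_n^T u_s(\mu_s-r) e^{r(T-s)} \diff s$, $c(u) := \bigl(\int_n^T |u_s \sigma_s e^{r(T-s)}|^\alpha \diff s\bigr)^{1/\alpha}$, and $\rho(u) = -D(u)\mathbbm{1}_{\{\rho\ \text{cash-inv.}\}} + \varrho^{\hat{L}_1} c(u)$. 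The crucial observation is that $c(u)$ depends on $u$ only through $|u|$. Differentiating under the expectation (justified by the integrability of $\T'$ together with $\T, F \in C^1$ and the boundedness of the controls) yields $\frac{d}{d\epsilon}J_n(x,\hat{u}^\epsilon)\Big|_{\epsilon=0^+} = \int_A v_s\, g_s \diff s$, where, using $\sgn(\hat{u}_s) = -1$ on $A$,
\begin{align*}
g_s = \bigl(a + \lambda_n F'(\rho^*)\mathbbm{1}_{\{\rho\ \text{cash-inv.}\}}\bigr)(\mu_s-r) e^{r(T-s)} + \bigl(\lambda_n F'(\rho^*)\varrho^{\hat{L}_1} - b\bigr) c(\hat{u})^{1-\alpha} |\hat{u}_s|^{\alpha-1} \sigma_s^\alpha e^{\alpha r(T-s)},
\end{align*}
with $a := \E[\T'(D^* + c^*\hat{L}_1)]$, $b := \E[\T'(D^* + c^*\hat{L}_1)\hat{L}_1]$, and $D^* := D(\hat{u})$, $c^* := c(\hat{u})$, $\rho^* := \rho(\hat{u})$.

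The sign analysis is then straightforward: $a \geq 0$ since $\T' \geq 0$; $F'(\rho^*) \geq 0$ since $F$ is non-decreasing; $\varrho^{\hat{L}_1} \geq 0$ since $\hat{L}_1$ is symmetric around zero; and $b \leq 0$ by the Chebyshev association inequality, because $\T'$ is non-increasing (concavity of $\T$) while $\hat{L}_1$ is symmetric with mean zero ($\alpha > 1$ guarantees integrability). Under $\mu_s \geq r$ both terms of $g_s$ are $\geq 0$ on $A$, hence the Gateaux derivative is $\geq 0$. The hardest part is upgrading this to strict positivity: since $|\hat{u}_s|^{\alpha-1} > 0$ on $A$, it suffices to show $\lambda_n F'(\rho^*)\varrho^{\hat{L}_1} - b > 0$. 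If $\T$ is not affine, the association inequality is strict and $b < 0$; if $\T$ happens to be affine (so $b = 0$), then the risk penalty must be non-trivially active, meaning $\rho^* > 0$, and then $F'(\rho^*) > 0$ follows from $F'(0) = 0$ combined with $F$ being convex, non-decreasing and non-constant past zero (otherwise the optimum is the trivial $\hat{u} \equiv 0$, which is non-negative anyway). In either case the Gateaux derivative is strictly positive, contradicting optimality; hence $A$ has measure zero and $\hat{u} \geq 0$ almost everywhere.
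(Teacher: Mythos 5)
The central problem is that your first-order condition presumes pre-commitment optimality, while in this paper ``optimal'' means the subgame-perfect Nash equilibrium (Definition \ref{def: Nash equilibrium disc}, Proposition \ref{equivalent problem}): at time $n$ only the current-period component of the control is chosen, with all later components frozen at their recursively defined equilibrium values. Your perturbation $\hat{u}^\epsilon=\hat{u}+\epsilon v$ with $v=-2\hat{u}\1_A$ changes the control on all of $A\subset(n,T]$, i.e.\ at future dates as well, and an equilibrium control need not satisfy $J_n(x,\hat{u})\ge J_n(x,u)$ against such joint deviations --- that failure is exactly the time-inconsistency the paper is organized around. Hence ``optimality forces $\tfrac{d}{d\epsilon}J_n(x,\hat{u}^\epsilon)\big|_{\epsilon=0^+}\le 0$'' does not follow, and the contradiction collapses. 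The paper instead works period by period: by backward induction it differentiates $J_n$ only in the scalar current-period variable $\hat{u}_{n+1}$, shows this derivative is non-negative at $0+$ (symmetry of $L$ makes the stochastic-integral term a martingale, $\mu_s\ge r$ handles the drift term, and $F'(0)=0$ together with a H\"older bound kills the risk term), and then concludes from concavity (Proposition \ref{Jn concave}) that the one-period maximizer is non-negative; no strict inequality is ever required. Your computation of $g_s$ could be salvaged by restricting the perturbation to a single grid interval, but then you are essentially reproducing the paper's argument and should close it via concavity rather than via a strict-sign contradiction.

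Even granting the first-order condition, two of your sign steps are not justified at the paper's level of generality. First, $\varrho^{\hat{L}_1}\ge 0$ is not implied by law-invariance, positive homogeneity and cash- or shift-invariance (for instance $\rho(Y)=-\E[Y]-\mathrm{sd}(Y)$ satisfies all three and gives $\varrho^{\hat{L}_1}<0$), yet you need $\lambda_n F'(\rho^*)\varrho^{\hat{L}_1}-b\ge 0$. Second, the strictness step in the affine-$\T$ case (``the risk penalty must be non-trivially active, so $\rho^*>0$ and $F'(\rho^*)>0$'') is asserted rather than proved: under cash-invariance $\rho^*=-D^*+\varrho^{\hat{L}_1}c^*$ can be non-positive, $F'$ may vanish there, and in degenerate admissible cases ($F\equiv 0$, $\T$ affine, $\mu_s=r$ on $A$) the Gateaux derivative is exactly zero, so no contradiction is available; a contradiction argument inherently needs strict positivity, which is why the paper's concavity-plus-derivative-at-$0^+$ route is structurally safer. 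Finally, your $b=\E[\T'(D^*+c^*\hat{L}_1)\hat{L}_1]$ need not be finite under the stated hypothesis $\E[|\T'(a+b\,L^1_t)|]<\infty$ (the paper introduces Assumption \ref{assumption expected value derivative} precisely when such products appear), so differentiating under the expectation at the point $\hat{u}$ requires an additional integrability argument.
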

We state the rather technical proof in \ref{proofs}. If $d>1$, the proposition is no longer valid since some assets might serve as a hedge against others.

\subsection{The HJB equation in discrete time}

Based on the results from above, we can now restrict the analysis to deterministic control functions.

\begin{theorem} \label{hjb discrete}
	If an equilibrium control law $\hat{u}$ exists, the equilibrium value function $V$ satisfies:
	\begin{align} \label{hjb equation disc}
		\sup\nolimits_{u \in \mathcal{V}} \{ (\mathrm{A}^u V)_n (x) - (\mathrm{A}^u \mathbb{E}_{\cdot} [\T(X_T^{\bar{u}} - X_{\cdot}^u e^{r(T-\cdot)})])_n (x)  \hspace{3cm}& \notag\\
		+ (\mathrm{A}^u (\lambda_{\cdot} F(\rho_{\cdot} (X_T^{\bar{u}} - X_{\cdot}^u e^{r(T-\cdot)}))))_n (x) \} &= 0 \\
		V_T (x) &= \T(0) - \lambda_T F(0), \notag
	\end{align}
	where $\bar{u}_s = \footnotesize\begin{cases} u & \text{,if } s = n \\ \hat{u}_s & \text{,if } s \geq n + 1 \end{cases} $ and the supremum is attained at $u=\hat{u}_n$, see Definition \ref{inf gen def} for the definition of $A$. \\
	Let $m$, $w$ and $\hat{L}$ be as in Definition \ref{def m w}. Then we have the following identities for $0 \leq n \leq T-1$:
	\begin{align*}
		&(\mathrm{A}^{\hat{u}} \mathbb{E}_{\cdot} [\T(X_T^{\hat{u}} - X_{\cdot}^{\hat{u}} e^{r(T-\cdot)})])_n (x) = \mathbb{E}_{n,x} [ \T(X_T^{\hat{u}} - X_{n+1}^{\hat{u}} e^{r(T-n-1)}) - \T(X_T^{\hat{u}} - x e^{r(T-n)})] \\
		&(\mathrm{A}^{\hat{u}} (\lambda_{\cdot} F(\rho_{\cdot} (X_T^{\hat{u}} - X_{\cdot}^{\hat{u}} e^{r(T-\cdot)}))))_n (x) \\
		&\hspace{0.6cm}= \lambda_{n+1} [ F (-\Int_{n+1}^T m_s (\hat{u}) \diff s + \varrho^{\hat{L}_1} \norm{\leftidx{^{\alpha}}w_s (\hat{u})}_{\L^{\alpha}((n+1,T], \diff s)} )   \\
		&\hspace{2.2cm}-  F (-\Int_{n}^T m_s (\hat{u}) \diff s + \varrho^{\hat{L}_1} \norm{\leftidx{^{\alpha}}w_s (\hat{u})}_{\L^{\alpha}((n,T], \diff s)} ) ] \\
		&\hspace{1cm}- (\lambda_{n}-\lambda_{n+1}) F(-\Int_{n}^T m_s (\hat{u}) \diff s + \varrho^{\hat{L}_1} \norm{\leftidx{^{\alpha}}w_s (\hat{u})}_{\L^{\alpha}((n,T], \diff s)} ) \\
		&\text{with } \lambda_T F(\rho_{T,x} (X_T^{\hat{u}} - x e^{r(T-T)})) = \lambda_T F(0).
	\end{align*}
\end{theorem}
The proof of this theorem is rather technical, and we give it in \ref{proofs}.

\section{Analysis in continuous time} \label{sec: cont}

\subsection{Definition of the model}

We now extend the discrete-time result into continuous time. Hence, we maximize the value functional $J$ defined as follows with fixed $(t,x) \in [0,T) \times \R$ and a $\L^\alpha (\diff \PP \times \diff s)$-integrable, fixed control law $u$:
\begin{align} \label{J def cont}
	J (t,x,u) = \mathbb{E}_{t,x} [\T(X_T^u-x e^{r(T-t)})]  - \lambda_t F(\rho_{t,x} (X_T^u - x e^{r(T-t)})).
\end{align}

We use a convergence approach to formally obtain the extended HJB equation. However, we first define the infinitesimal generator and Nash equilibria in continuous time and prove some properties of the optimal control function.

\subsection{Infinitesimal generator and Nash equilibrium}

%Following the same references as in Subsection \ref{def: infini and Nash}, we define:
%\begin{definition} \label{inf gen def cont}
%	Let $f: \mathbb{R} \times \R \rightarrow \mathbb{R} $ be a function sequence.
%	\begin{enumerate}
%		\itemsep0em 
%		\item We define the operator $\mathrm{P}_h^u$, $h \in \mathbb{R}$, as
%		$(\mathrm{P}_h^u f) (t,x) =  \mathbb{E}_{t,x} [ f (t+h , X^u_{t+h})].$
%		\item The infinitesimal generator $\mathrm{A}^u$ is defined as $\mathrm{A}^u =  \tfrac{\partial \mathrm{P}_h^u}{\partial h} \big|_{h=0}$.
%		\item Moreover, we define the operator $\mathrm{A}_h^u$, $h \in \mathbb{R}$, as $		(\mathrm{A}_h^u f) (t,x) =  \mathbb{E}_{t,x} [ f (t+h , X^u_{t+h}) - f(t,x)].$
%	\end{enumerate}
%\end{definition}

Building on the definitions from same references as in Subsection \ref{def: infini and Nash}, we define:
\begin{definition} \label{inf gen def cont}
	Let $f_{\hat{u}}: [0,T] \times \R \rightarrow \mathbb{R} $ be a function sequence for a strategy $\hat{u}$. Moreover, we define the strategy $u_h^{v,t}$ for $v \in \mathcal{V}$ by $u_h^{v,t} (s) = \begin{cases}
			\hat{u}_s & \text{for } t+h \leq s \leq T, \\
			v & \text{for } t+h \leq s \leq T.
		\end{cases}$
	\begin{enumerate}
		\itemsep0em 
		\item The infinitesimal generator $\mathrm{A}^{\hat{u}}$ is defined as $(\mathrm{A}^{\hat{u}} f_{\hat{u}}) (t,x) =  \lim_{h \to 0} \tfrac{1}{h} (\E_{t,x} [f_{u_h^{v,t}}(t+h, X^u_{t+h})-f_{h_h^{v,t}}(t,x)])$.
		\item Moreover, we define the operator $\mathrm{A}_h^u$, $h \in \mathbb{R}$, as $(\mathrm{A}_h^u f) (t,x) =  \mathbb{E}_{t,x} [ f (t+h , X^u_{t+h}) - f(t,x)].$
	\end{enumerate}
\end{definition}
By definition, it holds $(\mathrm{A}^{\hat{u}} f) (t,x) = \lim_{h \to 0} \tfrac{1}{h} (\mathrm{A}_h^{u_h^{v,t}} f) (t,x)$.

\begin{definition} \label{equi control law cont}
	We consider a fixed control law $\hat{u}$, fix an arbitrary point $(t,x) \in \R \times \R$ with $t<T$, an $h>0$ and an arbitrary control law $u \in \mathcal{U}$. Then, we define $u_h^{v,t}$ as in Definition \ref{inf gen def cont}.
	If it holds that $\liminf_{h \rightarrow 0} \tfrac{J(t,x,\hat{u})-J(t,x,u_h^{v,t})}{h} \geq 0$ for all $u \in \mathcal{V}$, then we call the control law $\hat{u}$ a Nash equilibrium control law. If such an equilibrium exists, then we also define the equilibrium value function $V$ by $V (t,x) = J (t,x, \hat{u})$ and call $\hat{u}$ an optimal control function.
\end{definition}

Ekeland and Lazrak \cite{ekeland2006being}, and Ekeland and Pirvu \cite{ekeland2008investment} introduced this definition for Nash equilibria in continuous time. It is also used, for instance, in Bj{\"o}rk and Murgoci \cite{Bjork}, Hu et al. \cite{hu2012time}, Bj{\"o}rk et al. \cite{bjork2014mean}, and in other references for time-consistency given in the introduction. As already stated in the introduction, this generalization of Nash equilibria allows the player $t$ to form a coalition with all players $s \in [t,t+\varepsilon]$ and the equilibrium is taken by $\varepsilon \to 0$. This construction is necessary since the decision of player $t$ alone influences only the strategy at the null set $\{t\}$ and hence has no influence. 

\begin{assumption} \label{alpha >1}
	$\alpha>1$ or an optimal solution $\hat{u}$ exists which is continuous and deterministic.
\end{assumption}

This assumption is needed in the following since for $\alpha<1$, the $L^{\alpha}$ norm is only a quasi norm, and the respective space is no longer locally convex.

\subsection{Derivation of the HJB in continuous time} \label{derivation HJB}

Unlike the discrete-time case, we state the HJB equation only as a definition and prove it afterwards in a Verification Theorem. In this subsection, we show the motivation for the HJB equation: 

As in Bj{\"o}rk and Murgoci \cite{Bjork}, we assume there exists an equilibrium law $\hat{u}$ and proceed as follows: First, we take a point $(t,x)$ and an $h>0$ small. Second, we define $u_h^{v,t}$ as in Definition \ref{inf gen def cont} and third, for $h$ small enough, we expect to have: $J(t,x,u_h^{v,t}) \leq J(t,x,\hat{u})$.

Theorem \ref{hjb discrete} for the interval from $t$ to $t+h$ gives us:
\begin{align*}
	(\mathrm{A}_h^u V)(t,x) - (\mathrm{A}_h^u \mathbb{E}_{\cdot} [\T(X_T^{u_h^{v,t}} - X_{\cdot}^u e^{r(T-\cdot)})]) (t,x) + (\mathrm{A}_h^u (\lambda_{\cdot} F(\rho_{\cdot} (X_T^{u_h^{v,t}} - X_{\cdot}^u e^{r(T-\cdot)})))) (t,x) \leq 0.
\end{align*}
Next, we divide by $h$ and take the limit of $h$ to $0$:
\begin{align*}
	(\mathrm{A}^u V)(t,x) - (\mathrm{A}^u \mathbb{E}_{\cdot} [\T(X_T^{\hat{u}} - X_{\cdot}^u e^{r(T-\cdot)})]) (t,x) + (\mathrm{A}^u (\lambda_{\cdot} F(\rho_{\cdot} (X_T^{\hat{u}} - X_{\cdot}^u e^{r(T-\cdot)})))) (t,x) \leq 0.
\end{align*}

Before defining the extended HJB, we want to specify the formulas for the infinitesimal generators, which we infer using a limiting approach from the constraints of the HJB in discrete time (Theorem \ref{hjb discrete}):

\begin{lemma} \label{au exp cont}
	Let $u$ be continuous and deterministic. Moreover, let Assumption \ref{assumption expected value derivative} and \ref{ass: lambda C1} below hold. Then, we have $(\mathrm{A}^{\hat{u}} \mathbb{E}_{\cdot} [\T(X_T^{\hat{u}} - X_{\cdot}^{\hat{u}} e^{r(T-\cdot)})]) (t,x) = \tfrac{\partial}{\partial h} \mathbb{E} [ \T(X_T^{\hat{u}} - X_{t+h}^{\hat{u}} e^{r(T-t-h)})] \big|_{h=0}.$
\end{lemma}

\begin{lemma} \label{risk measure formula in continuous time}
	Let $u$ be continuous and deterministic, and $m$, $w$, and $\hat{L}$ be as in Definition \ref{def m w}. Moreover, let Assumption \ref{assumption expected value derivative} and \ref{ass: lambda C1} below hold. Then, it holds that 
	\begin{align*}
		&(\mathrm{A}^{\hat{u}} (\lambda_{\cdot} F(\rho_{\cdot} (X_T^{\hat{u}} - X_{\cdot}^{\hat{u}} e^{r(T-\cdot)})))) (t,x)
		= \lambda'_t F(-\Int_{t}^T m_s (\hat{u})\diff s + \varrho^{\hat{L}_1} \norm{\leftidx{^{\alpha}}w_s (\hat{u})}_{\L^{\alpha}((t,T], \diff s)} ) \\
		&\hspace{0.6cm}+ \lambda_t F'(-\Int_{t}^T m_s (\hat{u}) \diff s + \varrho^{\hat{L}_1} \norm{\leftidx{^{\alpha}}w_s (\hat{u})}_{\L^{\alpha}((t,T], \diff s)} ) [ m_t (\hat{u}) - \varrho^{\hat{L}_1} \leftidx{^{\alpha}}w_t (\hat{u}) ( \Int_t^T | \leftidx{^{\alpha}}w_s (\hat{u}) |^{\alpha} \diff s )^{\frac{1}{\alpha}-1} ].
	\end{align*}
\end{lemma}

The proofs of these lemmas are given in \ref{proofs}.

\subsection{The extended HJB equation in continuous time} \label{sec: extended HJB}

We define the extended HJB equation by using the formal derivation of the previous Subsection \ref{derivation HJB} and the formulas from Lemmas \ref{au exp cont} and \ref{risk measure formula in continuous time} for the constraints.

\begin{definition}[extended HJB equation] \label{hjb}
	The extended HJB equation for the Nash equilibrium problem is for $0 \leq t \leq T$ defined as
	\begin{align} \label{hjb equation}
		\sup\nolimits_{u \in \mathcal{V}} \{ (\mathrm{A}^u V)(t,x) - (\mathrm{A}^u \mathbb{E}_{\cdot} [\T(X_T^{\hat{u}} - X_{\cdot}^u e^{r(T-\cdot)})])   (t,x)\hspace{3cm}& \notag \\
		+   (\mathrm{A}^u (\lambda_{\cdot} F(\rho_{\cdot} (X_T^{\hat{u}} - X_{\cdot}^u e^{r(T-\cdot)})))) (t,x) \} &= 0,  \\
		V (T,x) &= \T(0) - \lambda_T F(0), \notag
	\end{align}
	under the constraints:
	\begin{align*}
		&(\mathrm{A}^{\hat{u}} \mathbb{E}_{\cdot} [\T(X_T^{\hat{u}} - X_{\cdot}^{\hat{u}} e^{r(T-\cdot)})]) (t,x) = \textfrac{\partial}{\partial h} \mathbb{E} [ \T(X_T^{\hat{u}} - X_{t+h}^{\hat{u}} e^{r(T-t-h)})] |_{h=0} \text{ and} \\
		&(\mathrm{A}^{\hat{u}} (\lambda_{\cdot} F(\rho_{\cdot} (X_T^{\hat{u}} - X_{\cdot}^{\hat{u}} e^{r(T-\cdot)})))) (t,x)
		= \lambda'_t F(-\Int_{t}^T m_s (\hat{u}) \diff s + \varrho^{\hat{L}_1} \norm{\leftidx{^{\alpha}}w_s (\hat{u})}_{\L^{\alpha}((t,T], \diff s)} ) \\
		&\hspace{0.6cm}+ \lambda_t F'(-\Int_{t}^T m_s (\hat{u}) \diff s + \varrho^{\hat{L}_1} \norm{\leftidx{^{\alpha}}w_s (\hat{u})}_{\L^{\alpha}((t,T], \diff s)} ) [ m_t (\hat{u}) - \varrho^{\hat{L}_1} \leftidx{^{\alpha}}w_t (\hat{u}) ( \Int_t^T | \leftidx{^{\alpha}}w_s (\hat{u}) |^{\alpha} \diff s )^{\frac{1}{\alpha}-1} ],
	\end{align*}
	with $m$, $w$, $\hat{L}$ as in Definition \ref{def m w}, and $\hat{u}$ taking the supremum in \ref{hjb equation}.
\end{definition}

\begin{remark} \label{rem: hjb constraints}
	The constraints in Definition \ref{hjb} always hold if the strategy under question is deterministic and continuous.
\end{remark}

\subsection{Properties of the optimal control function} \label{section optimal cont}

This subsection proves that the continuous-time optimal control function is again deterministic under some assumption. Note that by Assumption \ref{alpha >1}, we assume $\alpha>1$.

\begin{proposition} \label{concave J cont}
	If $u$ is deterministic, then the functional $J$ from (\ref{J def cont}) is concave in $u$. If $\T$ or $-F$ is strictly concave, $J$ is strictly concave.
\end{proposition}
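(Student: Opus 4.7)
The plan is to adapt the proof of Proposition \ref{Jn concave} almost verbatim to the continuous-time setting, using that the two underlying tools, namely the semimartingale decomposition \eqref{eq: umschreibung xt-xn} and Proposition \ref{risk measure formula}, depend only on $u$ being deterministic (and left-continuous) and not on the existence of a finite discretization grid.

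First I would decompose
\begin{align*}
X_T^u - x e^{r(T-t)} = \Sum_{i=1}^d \Int_t^T u_s^i (\mu_s^i - r) e^{r(T-s)} \diff s + \Sum_{i,j=1}^d \Int_{t+}^T u_s^i \sigma_s^{ij} e^{r(T-s)} \diff L_s^j
\end{align*}
via \eqref{eq: umschreibung xt-xn}. This map $u \mapsto X_T^u - x e^{r(T-t)}$ is linear in $u$, so composing with the concave function $\T$ and taking the (linear) expectation yields a function of $u$ which is concave as the expectation of a concave function of a linear map. If $\T$ is strictly concave, strictness is preserved after taking expectation: since $\sigma_s$ is positive definite for every $s$, two distinct deterministic controls $u \neq u'$ produce $X_T^u - X_T^{u'}$ with a non-degenerate distribution (by the $\alpha$-stability representation used in Proposition \ref{risk measure formula}), so for any $\theta \in (0,1)$ the inequality $\T(\theta a + (1-\theta) b) > \theta \T(a) + (1-\theta)\T(b)$ holds on a set of positive $\PP$-measure, which survives the expectation.

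Second I would invoke Proposition \ref{risk measure formula} (whose derivation uses only left-continuity of $u$ and not piecewise constancy in any essential way) to write
\begin{align*}
\rho_{t,x}\bigl(X_T^u - x e^{r(T-t)}\bigr) = -\Int_t^T m_s(u) \diff s + \varrho^{\hat L_1} \norm{\leftidx{^{\alpha}}w_s(u)}_{\L^{\alpha}((t,T],\diff s)}.
\end{align*}
The first summand is linear in $u$ by definition of $m_s$. The second is an $\L^\alpha$-norm of a linear function of $u$ (the linearity being in $u$ for fixed $s$), hence convex in $u$ by the triangle inequality and positive homogeneity; it is in fact strictly convex in $u$ because $\alpha>1$ and $\sigma_s$ is positive definite, so the linear map $u \mapsto (\leftidx{^{\alpha}}w_s(u))_{s\in(t,T]}$ is injective. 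Applying $F$, which is non-decreasing and convex, yields a convex function of $u$, and multiplication by $-\lambda_t \leq 0$ produces a concave function; if $-F$ is strictly concave (that is, $F$ is strictly convex and strictly increasing on the relevant range), the composition becomes strictly concave by the injectivity just noted.

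Adding the two concave pieces gives the desired concavity of $J(t,x,\cdot)$, and strict concavity of either $\T$ or $-F$ propagates to strict concavity of the sum. The only non-routine step is the strict-concavity part, which is the main obstacle: one must argue carefully that positive definiteness of $\sigma_s$ (together with $\alpha > 1$) prevents the linear map $u \mapsto X_T^u - x e^{r(T-t)}$ from collapsing distinct controls to the same distribution, so that the strict inequality in $\T$ or $F$ does not degenerate after expectation or integration.
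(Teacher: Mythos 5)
Your proposal follows essentially the same route as the paper, whose own proof of this proposition simply declares it identical to the discrete-time Proposition \ref{Jn concave}: linearity of $u \mapsto X_T^u - x e^{r(T-t)}$ via \eqref{eq: umschreibung xt-xn}, concavity of $\T$ under the expectation, Proposition \ref{risk measure formula} together with convexity of the $\L^{\alpha}$-norm, monotone convexity of $F$, and summation of concave terms, with strictness attributed to positive definiteness of $\sigma_s$. The only caveat is your parenthetical assertion that the norm term is ``in fact strictly convex in $u$'': a positively homogeneous convex functional is affine along rays through the origin, so this is false as stated, but the remark is not needed for the argument and the strictness step you actually use is at the same level of detail as the paper's own.
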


\begin{proof}
	This proof is identical to the proof of Proposition \ref{Jn concave}.
\end{proof}

\begin{theorem} \label{ut deterministic}
	There exists a sequence $\leftidx{^{\delta}}{\hat{u}}$ of optimal control functions with step size $\delta$ and a $\hat{u} \in \mathcal{U}_{det}$ such that $\leftidx{^{\delta}}{\hat{u}} \xrightharpoonup{\delta \to 0} {\hat{u}}$ in $L^{\alpha}((0,T),\diff t)$ and $\limsup_{\delta \rightarrow 0 } J_t(x, \leftidx{^{\delta}}{\hat{u}} ) \leq J(t,x,{\hat{u}})$.\\
	Furthermore, if $\hat{u}$ is continuous, $\T(x)=x$, $F(x)=\max\{0,x\}^{\alpha}$ and $\rho$ shift-invariant, then $\hat{u}$ is a deterministic continuous-time equilibrium control function.
\end{theorem}

\begin{proof}
	First of all, Theorem \ref{deterministic control function} applied to the interval $(t,t+\delta]$ gives us a deterministic optimal control function, denoted by $\leftidx{^{\delta}}{\hat{u}}$, for each $\delta$. Since all $\leftidx{^{\delta}}{\hat{u}}$ are uniformly bounded in the reflexive Banach space $\L^{\alpha}((0,T],\diff t)$ by $M$ due to Proposition \ref{u unif bounded}, a corollary of the theorem of Banach-Alaoglu gives us weak convergence of $\leftidx{^{\delta}}{\hat{u}}$ by possibly switching to a subsequence. Denote the corresponding weak limit by $\tilde{u} \in \L^{\alpha}((0,T],\diff t)$. Since $M$ bounds the norm of all $\leftidx{^{\delta}}{\hat{u}} \in \tilde{\mathcal{U}}_{det}$ for all $\delta>0$, weak limits of deterministic processes are deterministic, and $\mathcal{U}$ is weakly closed since it is convex and closed, we get $\tilde{u} \in \mathcal{U}_{det}$. \\
	Next, we prove the claim $\limsup_{\delta \rightarrow 0 } J_t(x,\leftidx{^{\delta}}{\hat{u}}) \leq J(t,x,\tilde{u})$:
	Due to $\leftidx{^{\delta}}{\hat{u}}$ and $\tilde{u}$ being deterministic, the functional $J$ is concave by Proposition \ref{concave J cont}. Hence, the claim reduces to the weak upper semi-continuity for deterministic function sequences. This is equivalent to the strong upper semi-continuity in $\L^{\alpha}((0,T],\diff t)$ since $J$ is concave in $\L^{\alpha}((0,T],\diff t)$. So suppose that $\leftidx{^{\delta}}{\hat{u}}$ converges strongly to $\tilde{u}$ in $\L^{\alpha}((0,T],\diff t)$. \\
	First, we notice that by possibly switching to a subsequence $\leftidx{^{\delta}}{\hat{u}}$ converges $\diff \PP \times \diff t$-a.s. to $\tilde{u}$. Then, note that $|\leftidx{^{\delta}}{\hat{u}}_s^\intercal \sigma_s v e^{r(T-s)}|^\alpha$ ($\leq C \norm{\leftidx{^{\delta}}{\hat{u}}_s}^{\alpha}$ for some $C>0$) is dominated by a converging sequence and thus uniformly integrable.
	Hence, it follows using the exact same steps as in the proof of Proposition \ref{risk measure formula} by replacing the risk measure $\rho(\cdot)$ with equality in distribution that under Assumption \ref{ass: symmetric multidemensional} or \ref{ass: asymmetric} with $\alpha<2$ for all $t \in (0,T]$
	\begin{align*} 
		\Int_t^T \Sum_{i,j=1}^d \leftidx{^{\delta}}{\hat{u}}_{s}^i \sigma^{ij}_s e^{r(T-s)} \diff L^j_s \overset{d}{=} \sqrt[\alpha]{\Int_{t+}^T | \sqrt[\alpha]{\Int_{\S^d} | \Sum_{j=1}^d ( \Sum_{i=1}^d  \leftidx{^{\delta}}{\hat{u}}_{s}^i \sigma^{ij}_{s} e^{r(T-{s})} ) v^j |^{\alpha} \tilde{\sigma}(\diff v)} |^{\alpha} \diff s} \tilde{L}_1.
	\end{align*}
	Analog to the proof of Proposition \ref{risk measure formula}, a similar result also holds if $\alpha=2$ and in the case of Assumption \ref{ass: one dimensional}. Then, Slutsky's theorem implies that for all $i,j \in \{1,\ldots,d\}$ and for all $t \in (0,T]$
	\begin{align} \label{eq: L1+eps convergence}
		\Int_t^T \leftidx{^{\delta}}{\hat{u}}_{s}^i \sigma^{ij}_s e^{r(T-s)} \diff L^j_s \xrightarrow[\delta \to 0]{d} \Int_t^T \tilde{u}_{s}^i \sigma^{ij}_s e^{r(T-s)} \diff L^j_s,
	\end{align}
	where $\xrightarrow{d}$ denotes convergence in distribution. Now, Skohorod's representation theorem implies the existence of a common probability space such that the convergence holds almost surely.
	Furthermore, note that by possibly switching to a subsequence also $\leftidx{^{\delta}}{\hat{u}}_s^{i} (\mu^i_s-r) e^{r(T-s)}$ is a.s. convergent to $\tilde{u}^i_s (\mu^i_s-r) e^{r(T-s)}$ and uniformly integrable (by the same arguments as above). Hence, the Vitali convergence theorem implies for all $i \in \{1,\ldots,d\}$:
	\begin{align} \label{eq: convergence in the proof}
		\Int_t^T \leftidx{^{\delta}}{\hat{u}}_s^{i} (\mu^i_s-r) e^{r(T-s)} \diff s \xrightarrow{\delta \to 0} \Int_t^T \tilde{u}^i_s (\mu^i_s-r) e^{r(T-s)} \diff s.
	\end{align}
	Hence, using (\ref{eq: umschreibung xt-xn}) we obtain: $X_T^{{^{\delta}}{\hat{u}}}-X_t^{{^{\delta}}{\hat{u}}}e^{r(T-t)} \xrightarrow[\delta \to 0]{a.s.} X_T^{\tilde{u}}-X_t^{\tilde{u}}e^{r(T-t)}.$ This implies that
	\begin{align} \label{4.3 equi 1}
		\T(X_T^{{^{\delta}}{\hat{u}}}-X_t^{{^{\delta}}{\hat{u}}}e^{r(T-t)}) \xrightarrow[\delta \to 0]{a.s.} \T(X_T^{\tilde{u}}-X_t^{\tilde{u}}e^{r(T-t)})
	\end{align}
	due to the continuity of $\T$. Now we know for the positive part $\T_+$ of the function $\T$ that $\T_+(x) \leq (\T(0) + \T'(0) \cdot x)_+ \leq |\T(0)| + |\T'(0) \cdot x|$ since $\T$ is concave. Let $\varepsilon>0$ such that $1+\varepsilon<\alpha$. Then, we get:
	\begin{align*}
		\sup\limits_{\delta >0} \mathbb{E}[| X_T^{{^{\delta}}{\hat{u}}}-X_t^{{^{\delta}}{\hat{u}}}e^{r(T-t)} |^{1+\varepsilon}]
		&= \sup\limits_{\delta >0} \mathbb{E}[| \Sum_{i=1}^d \Int_t^T \leftidx{^{\delta}}{\hat{u}}^i_s (\mu^i_s - r) e^{r(T-s)}\diff s + \Sum_{i,j=1}^d \Int_t^T \leftidx{^{\delta}}{\hat{u}}^i_{s} \sigma^{ij}_s e^{r(T-s)} \diff L^j_s |^{1+\varepsilon}] \\
		&\leq \sup\limits_{\delta >0} | \Int_t^T \leftidx{^{\delta}}{\hat{u}}_s^\intercal (\mu_s - r) e^{r(T-s)}\diff s |^{1+\varepsilon} \\
		&\hspace{0.5cm}+ \left\{ \begin{matrix*}[l]
			\sup\limits_{\delta >0} \norm{\leftidx{^{\delta}}{\hat{u}}_s^\intercal \sigma_s v e^{r(T-s)}}_{\L^{\alpha}((t,T] \times \S^d, \diff s \times \tilde{\sigma}(\diff v))}^{1+\varepsilon} \mathbb{E}|\hat{L}_1|^{1+\varepsilon} & \text{if } \alpha < 2. \\
			\sup\limits_{\delta >0} \norm{\leftidx{^{\delta}}{\hat{u}}_s^\intercal \sigma_{s} e^{r(T-{s})}}_{\L^2((t,T],\diff s;H_s)}^{1+\varepsilon} \mathbb{E}|\hat{L}_1|^{1+\varepsilon} & \text{if } \alpha = 2.
		\end{matrix*}  \right. 
	\end{align*}
	For getting the second part of this estimation, we use again the exact same steps as in the proof of Proposition \ref{risk measure formula} by replacing the risk measure $\rho(\cdot)$ with $\E [|\cdot|^{1+\varepsilon}]$. 
	
	Since $\leftidx{^{\delta}}{\hat{u}}$ is uniformly bounded in $\L^{\alpha}$, the first norm in the last term is uniformly bounded by the argument from above. Hence, $X_T^{{^{\delta}}{\hat{u}}}-X_t^{{^{\delta}}{\hat{u}}}e^{r(T-t)}$ is integrable for all $\varepsilon$ with $1+\varepsilon<\alpha$.
	This property also holds for $| \T'(0) \cdot ( X_T^{{^{\delta}}{\hat{u}}}-X_t^{{^{\delta}}{\hat{u}}}e^{r(T-t)} ) |$, i.e., $\sup_{\delta >0} \mathbb{E}| \T'(0) \cdot ( X_T^{{^{\delta}}{\hat{u}}}-X_t^{{^{\delta}}{\hat{u}}}e^{r(T-t)} )|^{1+\varepsilon}  < \infty.$  
	Due to $\T_+(x) \leq |\T(0)| + | \T'(0) \cdot x |$, it follows directly that $
	\sup_{\delta >0} \mathbb{E} |\T ( X_T^{{^{\delta}}{\hat{u}}}-X_t^{{^{\delta}}{\hat{u}}}e^{r(T-t)} )_+|^{1+\varepsilon} < \infty.$
	This implies the uniform integrability of $\lbrace (\T(X_T^{{^{\delta}}{\hat{u}}}-X_t^{{^{\delta}}{\hat{u}}}e^{r(T-t)} ))_+ \rbrace_{\delta > 0}$, because of the lemma of de la Vall{\'{e}}e Poussin with $\phi(x)=x^{1+\varepsilon}$. Thus, a general version of Fatou's lemma (see Shiryaev \cite[p.4]{shiryaev2007optimal}) leads to:
	\begin{align*}
		\limsup\nolimits_{\delta \to 0} \mathbb{E}_{t,x} [ \T(X_T^{{^{\delta}}{\hat{u}}}-x e^{r(T-t)}) ] &\leq \mathbb{E}_{t,x} [ \limsup\nolimits_{\delta \to 0} \T(X_T^{{^{\delta}}{\hat{u}}}-x e^{r(T-t)}) ] \overset{(\ref{4.3 equi 1})}{=} \mathbb{E}_{t,x} [ \T(X_T^{\tilde{u}}-x e^{r(T-t)}) ].
	\end{align*}
	We must still prove the upper semi-continuity for the second part of $J$. Therefore, we start with the term inside of $F$. As above, we note that $|\leftidx{^{\delta}}{\hat{u}}_s^\intercal \sigma_s v e^{r(T-s)}|^\alpha$ (resp. $|\leftidx{^{\delta}}{\hat{u}}_{s}^\intercal \sigma_{s} e^{r(T-{s})}|^2$) is deterministic, uniformly integrable with respect to $\diff t \times \tilde{\sigma}(\diff v)$, and $\leftidx{^{\delta}}{\hat{u}}$ converges $\diff t$-a.s. to $\tilde{u}$. Hence, we get by the Vitali convergence theorem:
	\begin{align*}
		\lim\nolimits_{\delta \to 0} \norm{\leftidx{^{\delta}}{\hat{u}}_s^\intercal \sigma_s v e^{r(T-s)}}_{\L^{\alpha}((t,T] \times \S^d, \diff s \times \tilde{\sigma}(\diff v))} &= \norm{{\tilde{u}}_s^\intercal \sigma_s v e^{r(T-s)}}_{\L^{\alpha}((t,T] \times \S^d, \diff s \times \tilde{\sigma}(\diff v))}, \\
		\lim\nolimits_{\delta \to 0} \norm{\leftidx{^{\delta}}{\hat{u}}_{s}^\intercal \sigma_{s} e^{r(T-{s})}}_{\L^2((t,T],\diff s;H_s)} &= \norm{{\tilde{u}}_{s}^\intercal \sigma_{s} e^{r(T-{s})}}_{\L^2((t,T],\diff s;H_s)}.
	\end{align*}
	These equations and (\ref{eq: convergence in the proof}) imply that $\lim\nolimits_{\delta \to 0} \norm{\leftidx{^{\alpha}}w_s (\leftidx{^{\delta}}{\hat{u}})}_{\L^{\alpha}((t,T], \diff s)} = \norm{\leftidx{^{\alpha}}w_s ({\tilde{u}})}_{\L^{\alpha}((t,T], \diff s)}$ and $\lim\nolimits_{\delta \to 0} m_s(\leftidx{^{\delta}}{\hat{u}}) = m_s({\tilde{u}})$.
	Then, it follows from this combined with Proposition \ref{risk measure formula} and $\leftidx{^{\delta}}{\hat{u}}$ and ${\tilde{u}}$ being deterministic that
	\begin{align*}
		\rho_{t,x} &(X_T^{{^{\delta}}{\hat{u}}} - x e^{r(T-t)}) = -\Int_t^T m_s (\leftidx{^{\delta}}{\hat{u}}) \diff s + \varrho^{\hat{L}_1} \norm{\leftidx{^{\alpha}}w_s (\leftidx{^{\delta}}{\hat{u}})}_{\L^{\alpha}((t,T], \diff s)}, \\
		\rho_{t,x} &(X_T^{\tilde{u}} - x e^{r(T-t)}) = -\Int_t^T m_s ({\tilde{u}}) \diff s + \varrho^{\hat{L}_1} \norm{\leftidx{^{\alpha}}w_s ({\tilde{u}})}_{\L^{\alpha}((t,T], \diff s)}
	\end{align*}
	with $m$, $w$ and $\hat{L}$ as in Definition \ref{def m w}.\\
	Since we already discussed the continuity property of these integrals, and they are inserted into the continuous functions $\sqrt[\alpha]{\cdot}$ and/or $F(\cdot)$, it follows that $\lim\nolimits_{\delta \to 0} F(\rho_{t,x} (X_T^{{^{\delta}}{\hat{u}}} - x e^{r(T-t)})) = F(\rho_{t,x} (X_T^{\tilde{u}} - x e^{r(T-t)}))$.
	Combining the two parts gives us the strong upper semi-continuity and the first claim.	
	
	For the second claim, the additional assumptions imply that $J(t,x,u) = \Int_t^T u_s^\intercal (\mu_s-r) e^{r(T-s)} \diff s - \lambda_t (\varrho^{\hat{L}_1})^{\alpha} \norm{\leftidx{^{\alpha}}w_s (u)}_{\L^{\alpha}((t,T], \diff s)}^{\alpha}$ with $w$ and $\hat{L}$ as in Definition \ref{def m w} and $u$ being deterministic. In particular, this equation holds if $u$ is $\leftidx{^{\delta}}{\hat{u}}$ or $\tilde{u}$. We observe that for $F(x)=\max\{0,x\}^\alpha$, the problem is time-consistent. Now, we define by $\leftidx{_{\beta}}{v}$ a $\beta$-optimal solution for all $\beta>0$, i.e., $J(t,x,\leftidx{_{\beta}}{v}) \geq \sup_{u \in \mathcal{U}} J(t,x,u) - \beta$. Then, we define by $\leftidx{_{\beta}^{\delta}}{v}$ a discretization of $\leftidx{_{\beta}}{v}$ with step size $\delta >0$, such that $\leftidx{_{\beta}^{\delta}}{v}$ converges strongly to $\leftidx{_{\beta}}{v}$ in $\L^{\alpha}((0,T),\diff \PP \times \diff t)$ for $\delta \to 0$ and $\lim_{\delta \rightarrow 0 } J^{\delta}_t(x,\leftidx{_{\beta}^{\delta}}{v}) = J(t,x,\leftidx{_{\beta}}{v})$. Such a discretization exists due to monotone class arguments. By construction, $\leftidx{_{\beta}^{\delta}}{v}$ is piecewise constant and bounded. Because $\leftidx{^{\delta}}{\hat{u}}$ is the optimal control function for each $\delta$, we get that $J^{\delta}_t (x,\leftidx{_{\beta}^{\delta}}{v}) \leq J^{\delta}_t(x,\leftidx{^{\delta}}{\hat{u}})$ for all $\beta>0$. Taking the limes superior of $\delta \rightarrow 0$ results in $J(t,x,\leftidx{_{\beta}}{v}) \leq J(t,x,\tilde{u})$ for all $\beta>0$. Since $\leftidx{_{\beta}}{v}$ is $\beta$-optimal and the inequality holds for all $\beta>0$, $\tilde{u}$ is an optimal control function and hence an equilibrium function due to the time-consistency property of the problem.
\end{proof}

For the following theorem, we need in higher dimensions that $L$ is isotropic. That means that for $\alpha<2$, the probability measure $\tilde{\sigma}$ is the uniform distribution on $\S^d$, and for $\alpha=2$, that $R_s$ is the identity.

\begin{theorem} \label{Arzela}
	Let (a) $d=1$ and $\sigma_t \in C^1$; or (b) $L$ be isotropic, $\sigma_t = \bar{\sigma}_t \cdot Id$ with $\bar{\sigma}_t$ being a deterministic $C^1$-function which is bounded away from $0$, and $\mathcal{V} \subset \R_{\geq 0}$. Moreover, let $\mu_{min} := \min_{t \in [0,T]} \mu_t >r$, $\lambda_{min}:=\min_{t \in [0,T]} \lambda_t >0$, $\mu_t,\lambda_t \in C^1$, $\T(x)=x$, $F(x)=\max \{0,x\}^{\beta}$ with $\beta \geq \alpha > 1$, and $\rho$ shift-invariant. If either $\beta=\alpha$ or under Assumption \ref{ass: bounded strategies}, there exists a sequence $\leftidx{^{\delta}}{\hat{u}}$ of optimal control functions with step size $\delta$ and a continuous $\hat{u} \in \mathcal{U}_{det}$ such that $\leftidx{^{\delta}}{\hat{u}}_t \xrightarrow{\delta \to 0} {\hat{u}}_t$ uniformly in $t$ and $\lim_{\delta \rightarrow 0 } J_t(x, \leftidx{^{\delta}}{\hat{u}} ) = J(t,x,{\hat{u}})$.
\end{theorem}

\begin{proof} %Arxiv
	We show this statement with the theorem of Arzel{\`a}-Ascoli. Specifically, we show that $\leftidx{^{\delta}}{\hat{u}}$ is equi-continuous and bounded on the time grid $t_i^{\delta}$. Henceforth, we therefore always assume that $t, s, t+\delta, s+\delta, \ldots$ are on the time grid. In this proof, we denote by $|\cdot|$ the classical $l_\alpha$-norm for vectors. First, we note that if $L$ is isotropic and $\sigma_t = \bar{\sigma}_t \cdot Id$, then it holds that $\Int_{\S^d} |u^\intercal \sigma_s v e^{r(T-s)}|^{\alpha} \tilde{\sigma} (\diff v) = C_{\alpha d} \bar{\sigma}_t |u|^{\alpha} e^{\alpha r(T-s)}$ with $C_{\alpha d} = \frac{\Gamma(\frac{\alpha+1}{2})\Gamma(\frac{d}{2})}{ \Gamma(\frac{\alpha+d}{2})\Gamma(\frac{1}{2})}$ if $\alpha<2$ by \c{C}{\i}nlar \cite[pp.332,337,339]{Cinlar}. Additionally, we define $C_{2 d}:=1=:C_{\alpha 1}$ for $\alpha=2$ resp. $d=1$. Under this theorem's assumptions using \eqref{eq: umschreibung xt-xn}, the optimization problem in discrete time \eqref{J def} with step size $\delta$ reduces to
	\begin{align} \label{eq: J def Arzela}
		\argmax_{u \in [-M,M]^d} \{ u^\intercal \mu_\delta + \Int_{t+\delta}^T \leftidx{^{\delta}}{\hat{u}}_s^\intercal (\mu_s-r)e^{r(T-s)} \diff s - \lambda_t (\varrho^{\hat{L}_1} \sqrt[\alpha]{ \sigma_\delta |u|^{\alpha} + C_{\alpha d} \Int_{t+\delta}^T |\leftidx{^{\delta}}{\hat{u}}_s^\intercal \sigma_s e^{r (T-s)}|^{\alpha} \diff s }\,)^{\beta} \},
	\end{align}
	where $\mu_\delta := \Int_t^{t+\delta} (\mu_s-r)e^{r(T-s)} \diff s$ and $\sigma_\delta := \Int_t^{t+\delta} \sigma_s^{\alpha} e^{\alpha r (T-s)} \diff s$ if $d=1$ and $\sigma_\delta := C_{\alpha d} \cdot \Int_t^{t+\delta} \bar{\sigma}_s^\alpha e^{\alpha r (T-s)} \diff s$ if $d \geq 2$. Proposition \ref{un non-negative} (if $d=1$ and Assumption \ref{ass: symmetric multidemensional} holds) or $\mathcal{V} \subset \R_{\geq 0}$ (otherwise) imply that $u$ is non-negative. Hence, we can take the $\argmax$ over $[0,M]^d$. Now, we define
	\begin{align*}
		f_{C_1,C_2}(u) := u^\intercal \mu_\delta + C_1 - \lambda_t (\varrho^{\hat{L}_1} \sqrt[\alpha]{ \sigma_\delta |u|^{\alpha} + C_2 }\, )^{\beta}.
	\end{align*}
	We note that the $\argmax$ of $f$ is independent of $C_1$, and we get our main problem in discrete time with step size $\delta$ if $C_2^{\delta}(t) = C_{\alpha d} \Int_{t+\delta}^T |\leftidx{^{\delta}}{\hat{u}}_s^\intercal \sigma_s e^{r (T-s)}|^{\alpha} \diff s$. To show the equicontinuity of $\leftidx{^{\delta}}{\hat{u}}$ (to apply the theorem of Arzel{\`a}-Ascoli), we note that for $t<s$ with $|s-t| < \tilde{\delta}$, it holds if $\beta > \alpha$ that $|C_2^{\delta} (s)-C_2^{\delta}(t)| = C_{\alpha d} \Int_{t+\delta}^{s+\delta} |\leftidx{^{\delta}}{\hat{u}}_s^\intercal \sigma_s e^{r (T-s)}|^{\alpha} \diff s \leq C_{\alpha d} \tilde{\delta} d^\alpha M^{\alpha} \sigma_{max}^{\alpha} e^{\alpha r T}$ independent of $\delta$ using Assumption \ref{ass: bounded strategies} with $\sigma_{max} := \max_{t \in [0,T]} \sigma_t$ if $d=1$ and $\sigma_{max} := \max_{t \in [0,T]} \bar{\sigma}_t$ else. As $\lambda_t$, $\mu_t$, and $\sigma_t$ are Lipschitz continuous (since they are in $C^1$), it remains to show the equicontinuity of $\leftidx{^{\delta}}{\hat{u}}$ as a function of $C_2$, $\delta^{-1}\sigma_\delta$, $\delta^{-1}\mu_\delta$, and $\lambda_t$. 
	
	First, we consider $\leftidx{^{\delta}}{\hat{u}}$ as a function of $C_2$ and for simplicity suppress in the notation the dependence on $\delta^{-1}\sigma_\delta$, $\delta^{-1}\mu_\delta$, and $\lambda_t$. 
	Thus, from now on, we consider $u$ simply as a function of $C_2$ and set a $\delta$ in the index to express the step size directly. In the following, we prove that the optimal control function $\leftidx{^{\delta}}{\hat{u}}$ is uniformly Lipschitz continuous in $C_2$ with Lipschitz constant independent of $\delta$, which directly implies the equicontinuity in $t$. Note that $u_\delta$ denotes the $\argmax$ of $f$ for given $C_1$ and $C_2$. Thus, it holds that $\leftidx{^{\delta}}{\hat{u}}_t = u_\delta$ with the correct $C_1$ and $C_2$. Now, let $\beta>\alpha$ and $k \in \{1,\ldots,d\}$ fixed. If $u_{\delta}^i \in (0,M)$ for all $i \in \{1,\ldots,k\}$ and $u_{\delta}^i=M$ for $i >k$, then it holds for $i \in \{1,\ldots,k\}$ that 
	\begin{align} \label{eq: f prime}
		\textstyle\frac{\diff}{\diff u_{\delta}^i} f_{C_1,C_2} (u_{\delta}(C_2)) = \mu_\delta^i  - \lambda_t (\varrho^{\hat{L}_1})^{\beta} \beta (\sigma_\delta |u_{\delta} (C_2)|^{\alpha}  + C_2 )^{\frac{\beta-\alpha}{\alpha}} \sigma_\delta (u_{\delta}^i (C_2))^{\alpha-1} = 0.
	\end{align}
	Define $B_{\delta}^i := (\frac{\mu_\delta^i}{\sigma_\delta \lambda_t (\varrho^{\hat{L}_1})^{\beta} \beta})^{\frac{\alpha}{\beta-\alpha}}$, and $q:= \frac{\alpha(\alpha-1)}{\beta-\alpha} > 0$. Then rearranging \eqref{eq: f prime} implies
	\begin{align} \label{eq: B equation}
		B^i_{\delta} = ((u_{\delta}^1 (C_2))^\alpha+\ldots+(u_{\delta}^k (C_2))^\alpha+(d-k)M^\alpha)\sigma_\delta (u_{\delta}^i (C_2))^q + C_2 (u_{\delta}^i (C_2))^q.
	\end{align} 
	Since the implicit function theorem combined with Lemma \ref{arzela lemma} in the appendix implies the existence of the derivative of $u$ as a function of $C_2$, we get by taking this derivative (where we suppress again the $C_2$): $0 = (\alpha (u_{\delta}^1)^{\alpha-1} (u_{\delta}^1)' + \ldots + \alpha (u_{\delta}^k)^{\alpha-1} (u_{\delta}^k)')\sigma_\delta (u_{\delta}^i)^{q} + ((u_{\delta}^1)^\alpha+\ldots+(u_{\delta}^k)^\alpha+(d-k)M^\alpha) \sigma_\delta q (u_{\delta}^i)^{q-1} (u_{\delta}^i)' + (u_{\delta}^i)^q + C_2 q (u_{\delta}^i)^{q-1}(u_{\delta}^i)'$. Then, multiplying $u_\delta^i ((u_{\delta}^1)^\alpha+\ldots+(u_{\delta}^k)^\alpha+(d-k)M^\alpha)$, inserting \eqref{eq: B equation} twice and canceling terms leads to:
	\begin{align} \label{eq: B2 equation}
		0 =& \alpha (\textstyle\sum_{j=1}^k (u_{\delta}^j)^{\alpha-1} (u_{\delta}^j)') u_{\delta}^i (B_\delta^i - C_2 (u_{\delta}^i)^q)+ D_\delta q (u_{\delta}^i)' B_\delta^i + D_\delta (u_{\delta}^i)^{q+1} \notag \\
		=& \alpha (u_{\delta}^1)^{\alpha-1} (u_{\delta}^1)' u_{\delta}^i (B_\delta^i - C_2 (u_{\delta}^i)^q) + \alpha (\textstyle\sum_{j=2}^k (u_{\delta}^j)^{\alpha-1} (u_{\delta}^j)') u_{\delta}^i (B_\delta^i - C_2 (u_{\delta}^i)^q)+ D_\delta q (u_{\delta}^i)' B_\delta^i \notag \\
		&+ D_\delta (u_{\delta}^i)^{q+1},
	\end{align}
	where $D_\delta := \sum_{j=1}^k (u_{\delta}^j)^\alpha+(d-k)M^\alpha$. Now, we show that $(u_{\delta}^1)'$ is bounded independent of $\delta$ and $k$. First, we assume that $k \geq 2$ (the proof for $k=1$ is actually an easier version of the following). Then, we multiply the equations for $i \in \{2,\ldots,k\}$ with $(u_{\delta}^i)^{\alpha-1} u_\delta^1 (B_\delta^1-C_2 (u_\delta^1)^q) (B_\delta^i)^{-1}$ and sum over all those $i$ to receive: 
	\begin{align} \label{eq: B3 equation}
		0 =& \alpha (u_\delta^1)^{\alpha} (u_\delta^1)' (B_\delta^1-C_2 (u_\delta^1)^q) (\textstyle\sum_{i=2}^k (u_\delta^i)^\alpha (B_\delta^i-C_2 (u_\delta^i)^q)(B_\delta^i)^{-1})  \notag \\
		&+ (\alpha \textstyle\sum_{j=2}^k (u_\delta^j)^{\alpha-1} (u_\delta^j)') u_\delta^1 (B_\delta^1-C_2 (u_\delta^1)^q) (\textstyle\sum_{i=2}^k (u_\delta^i)^\alpha (B_\delta^i-C_2 (u_\delta^i)^q)(B_\delta^i)^{-1}) \notag \\
		&+ D_\delta q u_\delta^1 (B_\delta^1-C_2 (u_\delta^1)^q) (\textstyle\sum_{i=2}^k (u_\delta^i)^{\alpha-1} (u_\delta^i)') + D_\delta u_\delta^1 (B_\delta^1-C_2 (u_\delta^1)^q) (\textstyle\sum_{i=2}^k (u_\delta^i)^{q+\alpha}(B_\delta^i)^{-1}) \notag \\
		=& (\textstyle\sum_{i=2}^k (u_\delta^i)^\alpha (B_\delta^i-C_2 (u_\delta^i)^q)(B_\delta^i)^{-1}) \\
		&\qquad \cdot [\alpha (u_\delta^1)^{\alpha} (u_\delta^1)' (B_\delta^1-C_2 (u_\delta^1)^q) + \underline{\alpha (\textstyle\sum_{j=2}^k (u_\delta^j)^{\alpha-1} (u_\delta^j)') u_\delta^1 (B_\delta^1-C_2 (u_\delta^1)^q)}] \notag \\
		&+ D_\delta q \alpha^{-1} [\underline{\alpha (\textstyle\sum_{j=2}^k (u_\delta^j)^{\alpha-1} (u_\delta^j)') u_\delta^1 (B_\delta^1-C_2 (u_\delta^1)^q)}] + D_\delta u_\delta^1 (B_\delta^1-C_2 (u_\delta^1)^q) (\textstyle\sum_{i=2}^k (u_\delta^i)^{q+\alpha}(B_\delta^i)^{-1}). \notag
	\end{align}
	Then, rearranging \eqref{eq: B2 equation} for $i=1$ gives us $\alpha (\textstyle\sum_{j=2}^k (u_{\delta}^j)^{\alpha-1} (u_{\delta}^j)') u_{\delta}^1 (B_\delta^1 - C_2 (u_{\delta}^1)^q) = -\alpha (u_{\delta}^1)^{\alpha} (u_{\delta}^1)' \cdot (B_\delta^1 - C_2 (u_{\delta}^1)^q) - D_\delta q (u_{\delta}^1)' B_\delta^1 - D_\delta (u_{\delta}^1)^{q+1}$. Inserting this twice into \eqref{eq: B3 equation} (underlined terms) leads to:
	\begin{align*}
		0 =& (\textstyle\sum_{i=2}^k (u_\delta^i)^\alpha (B_\delta^i-C_2 (u_\delta^i)^q)(B_\delta^i)^{-1}) [- D_\delta q (u_{\delta}^1)' B_\delta^1 - D_\delta (u_{\delta}^1)^{q+1}]   \\
		&+ D_\delta q \alpha^{-1}[- \alpha (u_{\delta}^1)^{\alpha} (u_{\delta}^1)' (B_\delta^1 - C_2 (u_{\delta}^1)^q) -  D_\delta q (u_{\delta}^1)' B_\delta^1 - D_\delta (u_{\delta}^1)^{q+1}] \\
		&+ D_\delta u_\delta^1 (B_\delta^1-C_2 (u_\delta^1)^q) (\textstyle\sum_{i=2}^k (u_\delta^i)^{q+\alpha}(B_\delta^i)^{-1})  
	\end{align*}
	Now, canceling $D_\delta$ in each term and solving for $(u_\delta^1)'$ gives us:
	\begin{align} \label{eq: B4 equation}
		(u_\delta^1)' =& \dfrac{u_\delta^1 (B_\delta^1-C_2 (u_\delta^1)^q)(\textstyle\sum_{i=2}^k (u_\delta^i)^{q+\alpha} (B_\delta^i)^{-1})- (u_\delta^1)^{q+1} (\textstyle\sum_{i=2}^k (u_\delta^i)^{\alpha} (B_\delta^i-C_2 (u_\delta^i)^q)(B_\delta^i)^{-1})}{q B_\delta^1 (\textstyle\sum_{i=2}^k (u_\delta^i)^\alpha (B_\delta^i-C_2 (u_\delta^i)^q)(B_\delta^i)^{-1}) +q (u_\delta^1)^\alpha(B_\delta^1-C_2 (u_\delta^1)^q)+ D_\delta q^2 \alpha^{-1} B_\delta^1} \notag \\
		&- \dfrac{D_\delta \alpha^{-1}q (u_\delta^1)^{q+1}}{q B_\delta^1 (\textstyle\sum_{i=2}^k (u_\delta^i)^\alpha (B_\delta^i-C_2 (u_\delta^i)^q)(B_\delta^i)^{-1}) + q (u_\delta^1)^\alpha(B_\delta^1-C_2 (u_\delta^1)^q)+ D_\delta q^2 \alpha^{-1} B_\delta^1}.
	\end{align}
	We note that $B_\delta^i - C_2 (u_\delta^i)^q \geq 0$ for all $i$ by \eqref{eq: B equation} since the $u^i$ are non-negative. Moreover, there exist $B_d,B_u \in (0,\infty)$ independent of $\delta$ and $i$ such that $B_d \leq B_\delta^i \leq B_u$ using the original and additional assumptions of the model parameters. Combining these properties also implies that $B_\delta^i - C_2 (u_\delta^i)^q \leq B_u$. Now, since $q,\alpha > 0$, $(u_\delta^1)'$ is uniformly bounded if the nominator of the first line in \eqref{eq: B4 equation} divided by $D_\delta$ is uniformly bounded since the first two terms in the denominator are non-negative and the last term divided by $D_\delta$ is bounded away from 0 due to $q^2 \alpha^{-1} B_\delta^1 \geq q^2 \alpha^{-1} B_d > 0$ independent of $\delta$. If $k<d$, $D_\delta \geq M^\alpha$ and the claim follows directly. If $k=d$, we observe that $D_\delta = \sum_{i=1}^d (u_\delta^i)^\alpha$ and $B_\delta^i - C_2 (u_\delta^i)^q = (\sum_{j=1}^d (u_\delta^j)^\alpha) \sigma_\delta (u_{\delta}^i)^q = D_\delta \sigma_\delta (u_{\delta}^i)^q$ by \eqref{eq: B equation}. Hence, the $D_\delta$ cancels out and the nominator is uniformly bounded due to $(B_\delta^i )^{-1}$, $B_\delta^i - C_2 (u_\delta^i)^q$, $u_\delta^i$, and $D_\delta$ being all bounded independent of $\delta$. By symmetry, we can show analogously that $(u_\delta^i)'$ is uniformly bounded for every $i \in \{2,\ldots,k\}$. \\
	From \eqref{eq: f prime}, it follows that $\frac{\diff}{\diff u_\delta^i}f(0) = \mu_\delta^i >0$. Thus, the $\argmax$ cannot be attained at $u_{\delta}^i=0$. Since the $u^i$ are uniform Lipschitz continuous in $C_2$ and constant if $u^i=M$, we get the uniform Lipschitz continuity of $u_{\delta}$. Hence, $u_{\delta}$ is equicontinuous for $\beta>\alpha$. If $\alpha=\beta$, we conclude from \eqref{eq: f prime} that ${\leftidx{^\delta}{\hat{u}}^i_t} = \sqrt[\alpha-1]{\frac{\mu_\delta^i}{\sigma_\delta \alpha \lambda_t (\varrho^{\hat{L}_1})^{\beta}}}$ (possibly being additional truncated at $M$ under Assumption \ref{ass: bounded strategies}). Now since $\lambda$, $\delta^{-1} \sigma_\delta$, and $\delta^{-1} \mu_\delta$ are in $\delta$ uniformly bounded away from $0$ and uniformly bounded, $u$ is equicontinuous.
	
	It remains to show the uniform boundedness in $t$. If $\beta\neq \alpha$, this property is given by Assumption \ref{ass: bounded strategies}. Now, we define $\sigma_{min} := \min_{t \in [0,T]} \sigma_t$ if $d=1$ resp. $\sigma_{min} := C_{\alpha d} \min_{t \in [0,T]} \bar{\sigma}_t$ if $d \geq 1$. If $\alpha=\beta$, we conclude as before that ${\leftidx{^\delta}{\hat{u}}_t} = \sqrt[\alpha-1]{\frac{\mu_\delta^i}{\sigma_\delta \alpha \lambda_t (\varrho^{\hat{L}_1})^{\beta}}} \leq \sqrt[\alpha-1]{\frac{\max_{t \in [0,T], i \in \{1,\ldots,d\}} (\mu^i_t-r) e^{rT}}{\sigma_{min}^\alpha \alpha \lambda_{min} (\varrho^{\hat{L}_1})^{\beta}}} < \infty$ independent of $\delta$. Hence, $u$ is uniformly bounded.
	
	The theorem of Arzel{\`a}-Ascoli implies the existence of a uniform convergent subsequence to a continuous limit. The convergence of the value functional $J$ for this subsequence follows directly since $J$ is continuous in the control function under the theorem's assumptions.
	
	Indeed for $\delta^{-1}\sigma_\delta$:
	It holds: For the equicontinuity of $\leftidx{^{\delta}}{\hat{u}}$, we note that for $t<s$ with $|s-t| < \tilde{\delta}$, it holds if $\beta > \alpha$ that $|\frac{C_2^\delta (s)}{\delta^{-1} \sigma_\delta(s)}-\frac{C_2^\delta (t)}{\delta^{-1} \sigma_\delta(t)}| = |\frac{C_2^\delta (s)\delta^{-1} \sigma_\delta(t)-C_2^\delta (t)\delta^{-1} \sigma_\delta(s)}{\delta^{-1} \sigma_\delta(s)\delta^{-1} \sigma_\delta(t)}| \leq \frac{|C_2^\delta (s)| \, |\delta^{-1} \sigma_\delta (t)-\delta^{-1} \sigma_\delta (s)|}{|\delta^{-1} \sigma_\delta(s)\delta^{-1} \sigma_\delta(t)|} + \frac{ |\delta^{-1} \sigma_\delta (s)| \, |C_2^\delta (s)-C_2^\delta (t)|}{|\delta^{-1} \sigma_\delta(s)\delta^{-1} \sigma_\delta(t)|} \leq \frac{C_{\alpha d}^2 T M^\alpha \sigma_{max}^\alpha e^{\alpha rT}}{(\sigma_{min}^\alpha C_{\alpha d})^2} |s-t| \frac{|\delta^{-1} \sigma_\delta (t)-\delta^{-1} \sigma_\delta (s)|}{|s-t|} + \frac{C_{\alpha d}^2 \sigma_{max}^\alpha e^{\alpha rT}}{(\sigma_{min}^\alpha C_{\alpha d})^2}  \Int_{t+\delta}^{s+\delta} |\leftidx{^{\delta}}{\hat{u}}_s^\intercal \sigma_s e^{r (T-\tau)}|^{\alpha} \diff \tau \leq K \tilde{\delta} TM^\alpha \frac{(\sigma_{max}^{\alpha})}{(\sigma_{min}^{\alpha})^2} e^{\alpha rT} + \tilde{\delta} d^\alpha M^{\alpha} \frac{(\sigma_{max}^{\alpha})^2}{(\sigma_{min}^{\alpha})^2} e^{2 \alpha r T}$ with $K$ such that $\frac{|\delta^{-1} \sigma_\delta (t)-\delta^{-1} \sigma_\delta (s)|}{|s-t|} \leq K$ (exists since $\delta^{-1} \sigma_\delta(t)$ has a bounded derivative independent of $\delta$ since for $d=1$ it holds (if $d \geq 2$: replace $\sigma$ by $\tilde{\sigma}$): $|\frac{\diff \delta^{-1} \sigma_\delta(t)}{\diff t}| = |C_{\alpha d} \delta^{-1} (\sigma_{t+\delta}^\alpha e^{\alpha r (T-(t+\delta))}- \sigma_{t}^\alpha e^{\alpha r (T-t)})| = |C_{\alpha d} \alpha \sigma_\xi^{\alpha-1} e^{\alpha r (T-\xi)} + C_{\alpha d} \sigma_\xi^\alpha (-\alpha r) e^{\alpha r (T-\xi)}| \leq C_{\alpha d} \alpha \sigma_{max}^{\alpha-1} e^{\alpha r T} + C_{\alpha d} \sigma_{max}^\alpha \alpha r e^{\alpha rT}$ with $\xi \in (t,t+\delta)$). The right-hand side is independent of $\delta$ using Assumption \ref{ass: bounded strategies} with $\sigma_{max} := \max_{t \in [0,T]} \sigma_t$ if $d=1$ resp. $\sigma_{max} := \max_{t \in [0,T]} \bar{\sigma}_t$ else and $\sigma_{min}$ defined analogously. Note that $\sigma_{min}>0$ by assumption. Now, we can replace $f$ by $(f-C_1) (\frac{1}{\delta^{-1} \sigma_\delta})^\frac{\beta}{\alpha}$ which has the same $\argmax$ as $f$ and then use that $\delta^{-1}\sigma_\delta$ is Lipschitz (with Lipschitz constant independent of $\delta$), bounded, and bounded away from 0 so that we can replace $C_2$ by $\frac{C_2}{\delta^{-1} \sigma_\delta}$ and $\mu_\delta$ by $\tilde{\mu}_\delta := {\mu_\delta}{(\delta^{-1} \sigma_\delta)}^{-{\beta}/{\alpha}}$.
	
	Indeed for $\delta^{-1}\mu_\delta^i$ and $\lambda_t$:
	Equal until:\\
	Define $B_{\delta}^i := (\frac{\mu_\delta^i}{\sigma_\delta \lambda_t (\varrho^{\hat{L}_1})^{\beta} \beta})^{\frac{\alpha}{\beta-\alpha}}$, and $q:= \frac{\alpha(\alpha-1)}{\beta-\alpha} > 0$. Since $B_\delta^i$ is Lipschitz continuous in $t$ since all terms in its definition are Lipschitz continuous in $t$, bounded, and bounded away from 0, we show that $u_\delta^i$ is Lipschitz continuous in $B_\delta^i$. Then rearranging \eqref{eq: f prime} implies
	\begin{align} \label{eq: B equation 2}
		B^i_{\delta} = ((u_{\delta}^1 (\cdot))^\alpha+\ldots+(u_{\delta}^k (\cdot))^\alpha+(d-k)M^\alpha)\sigma_\delta (u_{\delta}^i (\cdot))^q + C_2 (u_{\delta}^i (\cdot))^q.
	\end{align} 
	Since the implicit function theorem combined with Lemma \ref{arzela lemma} in the appendix implies the existence of the derivative of $u$ as a function of $B_\delta^i$, we get by taking this derivative (and with a slight abuse of notation again denoting it with $u'$): $1 = (\alpha (u_{\delta}^1)^{\alpha-1} (u_{\delta}^1)' + \ldots + \alpha (u_{\delta}^k)^{\alpha-1} (u_{\delta}^k)')\sigma_\delta (u_{\delta}^i)^{q} + ((u_{\delta}^1)^\alpha+\ldots+(u_{\delta}^k)^\alpha+(d-k)M^\alpha) \sigma_\delta q (u_{\delta}^i)^{q-1} (u_{\delta}^i)' +  C_2 q (u_{\delta}^i)^{q-1}(u_{\delta}^i)'$. Then, multiplying $u_\delta^i ((u_{\delta}^1)^\alpha+\ldots+(u_{\delta}^k)^\alpha+(d-k)M^\alpha)$, inserting \eqref{eq: B equation 2} twice and canceling terms leads to:
	\begin{align} \label{eq: B2 equation 2}
		0 =& \alpha (\textstyle\sum_{j=1}^k (u_{\delta}^j)^{\alpha-1} (u_{\delta}^j)') u_{\delta}^i (B_\delta^i - C_2 (u_{\delta}^i)^q)+ D_\delta q (u_{\delta}^i)' B_\delta^i - D_\delta u_{\delta}^i \notag \\
		=& \alpha (u_{\delta}^1)^{\alpha-1} (u_{\delta}^1)' u_{\delta}^i (B_\delta^i - C_2 (u_{\delta}^i)^q) + \alpha (\textstyle\sum_{j=2}^k (u_{\delta}^j)^{\alpha-1} (u_{\delta}^j)') u_{\delta}^i (B_\delta^i - C_2 (u_{\delta}^i)^q)+ D_\delta q (u_{\delta}^i)' B_\delta^i \notag \\ &- D_\delta u_{\delta}^i 
	\end{align}
	where $D_\delta := \sum_{j=1}^k (u_{\delta}^j)^\alpha+(d-k)M^\alpha$. Now, we show that $(u_{\delta}^1)'$ is bounded independent of $\delta$ and $k$. First, we assume that $k \geq 2$ (the proof for $k=1$ is actually an easier version of the following). Then, we multiply the equations for $i \in \{2,\ldots,k\}$ with $(u_{\delta}^i)^{\alpha-1} u_\delta^1 (B_\delta^1-C_2 (u_\delta^1)^q) (B_\delta^i)^{-1}$ and sum over all those $i$ to receive: 
	\begin{align} \label{eq: B3 equation 2}
		0 =& \alpha (u_\delta^1)^{\alpha} (u_\delta^1)' (B_\delta^1-C_2 (u_\delta^1)^q) (\textstyle\sum_{i=2}^k (u_\delta^i)^\alpha (B_\delta^i-C_2 (u_\delta^i)^q)(B_\delta^i)^{-1})  \notag \\
		&+ (\alpha \textstyle\sum_{j=2}^k (u_\delta^j)^{\alpha-1} (u_\delta^j)') u_\delta^1 (B_\delta^1-C_2 (u_\delta^1)^q) (\textstyle\sum_{i=2}^k (u_\delta^i)^\alpha (B_\delta^i-C_2 (u_\delta^i)^q)(B_\delta^i)^{-1}) \notag \\
		&+ D_\delta q u_\delta^1 (B_\delta^1-C_2 (u_\delta^1)^q) (\textstyle\sum_{i=2}^k (u_\delta^i)^{\alpha-1} (u_\delta^i)') - D_\delta u_\delta^1 (B_\delta^1-C_2 (u_\delta^1)^q) (\textstyle\sum_{i=2}^k (u_\delta^i)^{\alpha}(B_\delta^i)^{-1}) \notag \\
		=& (\textstyle\sum_{i=2}^k (u_\delta^i)^\alpha (B_\delta^i-C_2 (u_\delta^i)^q)(B_\delta^i)^{-1}) \\
		&\qquad \cdot [\alpha (u_\delta^1)^{\alpha} (u_\delta^1)' (B_\delta^1-C_2 (u_\delta^1)^q) + \underline{\alpha (\textstyle\sum_{j=2}^k (u_\delta^j)^{\alpha-1} (u_\delta^j)') u_\delta^1 (B_\delta^1-C_2 (u_\delta^1)^q)}] \notag \\
		&+ D_\delta q \alpha^{-1} [\underline{\alpha (\textstyle\sum_{j=2}^k (u_\delta^j)^{\alpha-1} (u_\delta^j)') u_\delta^1 (B_\delta^1-C_2 (u_\delta^1)^q)}] - D_\delta u_\delta^1 (B_\delta^1-C_2 (u_\delta^1)^q) (\textstyle\sum_{i=2}^k (u_\delta^i)^{\alpha}(B_\delta^i)^{-1}). \notag
	\end{align}
	Then, rearranging \eqref{eq: B2 equation 2} for $i=1$ gives us $\alpha (\textstyle\sum_{j=2}^k (u_{\delta}^j)^{\alpha-1} (u_{\delta}^j)') u_{\delta}^1 (B_\delta^1 - C_2 (u_{\delta}^1)^q) = -\alpha (u_{\delta}^1)^{\alpha} (u_{\delta}^1)' \cdot (B_\delta^1 - C_2 (u_{\delta}^1)^q) - D_\delta q (u_{\delta}^1)' B_\delta^1 + D_\delta (u_{\delta}^1)$. Inserting this twice into \eqref{eq: B3 equation 2} (underlined terms) leads to:
	\begin{align*}
		0 =& (\textstyle\sum_{i=2}^k (u_\delta^i)^\alpha (B_\delta^i-C_2 (u_\delta^i)^q)(B_\delta^i)^{-1}) [- D_\delta q (u_{\delta}^1)' B_\delta^1 + D_\delta u_{\delta}^1 ]   \\
		&+ D_\delta q \alpha^{-1}[- \alpha (u_{\delta}^1)^{\alpha} (u_{\delta}^1)' (B_\delta^1 - C_2 (u_{\delta}^1)^q) -  D_\delta q (u_{\delta}^1)' B_\delta^1 + D_\delta u_{\delta}^1 ] \\
		&- D_\delta u_\delta^1 (B_\delta^1-C_2 (u_\delta^1)^q) (\textstyle\sum_{i=2}^k (u_\delta^i)^{\alpha}(B_\delta^i)^{-1})  
	\end{align*}
	Now, canceling $D_\delta$ in each term and solving for $(u_\delta^1)'$ gives us:
	\begin{align} \label{eq: B4 equation 2}
		(u_\delta^1)' =& \dfrac{u_\delta^1 (B_\delta^1-C_2 (u_\delta^1)^q)(\textstyle\sum_{i=2}^k (u_\delta^i)^{\alpha} (B_\delta^i)^{-1})+ u_\delta^1  (\textstyle\sum_{i=2}^k (u_\delta^i)^{\alpha} (B_\delta^i-C_2 (u_\delta^i)^q)(B_\delta^i)^{-1})}{q B_\delta^1 (\textstyle\sum_{i=2}^k (u_\delta^i)^\alpha (B_\delta^i-C_2 (u_\delta^i)^q)(B_\delta^i)^{-1}) +q (u_\delta^1)^\alpha(B_\delta^1-C_2 (u_\delta^1)^q)+ D_\delta q^2 \alpha^{-1} B_\delta^1} \notag \\
		&+ \dfrac{D_\delta \alpha^{-1}q u_\delta^1 }{q B_\delta^1 (\textstyle\sum_{i=2}^k (u_\delta^i)^\alpha (B_\delta^i-C_2 (u_\delta^i)^q)(B_\delta^i)^{-1}) + q (u_\delta^1)^\alpha(B_\delta^1-C_2 (u_\delta^1)^q)+ D_\delta q^2 \alpha^{-1} B_\delta^1}.
	\end{align}
	We note that $B_\delta^i - C_2 (u_\delta^i)^q \geq 0$ for all $i$ by \eqref{eq: B equation 2} since the $u^i$ are non-negative. Moreover, there exist $B_d,B_u \in (0,\infty)$ independent of $\delta$ and $i$ such that $B_d \leq B_\delta^i \leq B_u$ using the original and additional assumptions of the model parameters. Combining these properties also implies that $B_\delta^i - C_2 (u_\delta^i)^q \leq B^u$. Now, since $q,\alpha > 0$, $(u_\delta^1)'$ is uniformly bounded if and only if the nominator of the first line in \eqref{eq: B4 equation 2} divided by $D_\delta$ is uniformly bounded. If $k<d$, $D_\delta \geq M^\alpha$ and the claim follows directly. If $k=d$, we observe that $D_\delta = \sum_{i=1}^d (u_\delta^i)^\alpha$ and $B_\delta^i - C_2 (u_\delta^i)^q = (\sum_{j=1}^d (u_\delta^j)^\alpha) \sigma_\delta (u_{\delta}^i)^q = D_\delta \sigma_\delta (u_{\delta}^i)^q$ by \eqref{eq: B equation 2}. Hence, the $D_\delta$ cancels out in the last term of the denominator (and the other terms are non-negative) when dividing the nominator by $D_\delta$ which implies that $(u_\delta^1)'$ is uniformly bounded since $q^2 \alpha^{-1} B_\delta^1 \geq q^2 \alpha^{-1} B_d > 0$ independent of $\delta$. Note that the nominator is uniformly bounded due to $(B_\delta^i )^{-1}$, $B_\delta^i - C_2 (u_\delta^i)^q$, $u_\delta^i$, and $D_\delta$ being all bounded independent of $\delta$. By symmetry, we can show analogously that $(u_\delta^i)'$ is uniformly bounded for every $i \in \{2,\ldots,k\}$.
\end{proof}

\subsection{Verification theorem} \label{sec: verification}

To prove the verification theorem, we need the following assumption:

\begin{assumption} \label{assumption expected value derivative}
	If $\mathcal{V} \subset \R^d_{\geq 0}$ (resp. $\mathcal{V} \subset \R^d_{\leq 0}$), it holds that $\E [|L_t \T'(a+b L_t)|] < \infty$ for all $a\in \R$, $b>0$ (resp. $b<0$), and for all $t \in (0,T]$. Otherwise, it holds that $\E [|L_t \T'(a+b L_t)|] < \infty$ for all $a,b \in \R$ and for all $t \in (0,T]$.
\end{assumption}

\begin{assumption} \label{ass: lambda C1}
	$\lambda_t \in C^{1}$.
\end{assumption}

\begin{proposition} \label{discussion continuous}
	Let $u$ be deterministic and continuous. Then, it holds that $(\lambda_{\cdot} F(\rho_{\cdot} (X_T^{u} - X_{\cdot}^{u} e^{r(T-{\cdot})})))  \in C^{1,2}$ and $\mathbb{E}_{\cdot} [\T(X_T^{u} - X_{\cdot}^{u} e^{r(T-{\cdot})})] \in C^{1,2}$.
\end{proposition}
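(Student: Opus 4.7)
The key observation, which trivializes $C^2$-regularity in $x$, is that under Theorem \ref{ut deterministic} the optimal control $\hat u$ can be chosen deterministic, so from \eqref{eq: umschreibung xt-xn}
\begin{align*}
X_T^{\hat u}-X_t^{\hat u}e^{r(T-t)}=\Sum_{i=1}^d \Int_t^T \hat u^i_s(\mu^i_s-r)e^{r(T-s)}\diff s+\Sum_{i,j=1}^d \Int_{t+}^T \hat u^i_s\sigma^{ij}_s e^{r(T-s)}\diff L^j_s
\end{align*}
is measurable with respect to $\sigma(L_r-L_t:t\le r\le T)$, hence independent of $\mathcal F_t$ and in particular of $X_t^{\hat u}=x$. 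Together with the cash/shift invariance and positive homogeneity of $\rho$ (used through Proposition \ref{risk measure formula}), this implies that both quantities $\E_{\cdot}[\T(\cdots)]$ and $\lambda_{\cdot}F(\rho_{\cdot}(\cdots))$ depend on $(t,x)$ only through $t$. Thus $C^2$ in $x$ holds trivially (they are constant in $x$), and the problem reduces to checking $C^1$ in $t$ on $[0,T)$.

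For the risk-measure term I would plug the deterministic $\hat u$ into Proposition \ref{risk measure formula} to get the closed form
\begin{align*}
G(t):=\lambda_t F\!\left(-\Int_t^T m_s(\hat u)\,\diff s+\varrho^{\hat L_1}\!\left(\Int_t^T |\leftidx{^{\alpha}}w_s(\hat u)|^{\alpha}\diff s\right)^{1/\alpha}\right).
\end{align*}
Under Assumption \ref{ass: cont} the integrands $m_s(\hat u)$ and $\leftidx{^{\alpha}}w_s(\hat u)$ are continuous in $s$, so the fundamental theorem of calculus makes both $\Int_t^T m_s(\hat u)\diff s$ and $\Int_t^T |\leftidx{^{\alpha}}w_s(\hat u)|^{\alpha}\diff s$ continuously differentiable in $t$. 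Composing with $x\mapsto x^{1/\alpha}$ (which is $C^1$ on $(0,\infty)$; the only potential failure is when the inner integral vanishes, but then by $\sigma_s$ positive-definite this forces $\hat u\equiv 0$ on $(t,T]$, in which case $G\equiv\lambda_tF(0)$ is trivially $C^1$), and then with the $C^1$ functions $F$ and $\lambda_\cdot$ (Assumption \ref{ass: lambda C1}), the chain rule yields $G\in C^1([0,T))$.

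For the reward term I would use Lemma \ref{alpha dist} (applied componentwise, combined with the argument in the proof of Proposition \ref{risk measure formula}) to obtain the distributional identity
\begin{align*}
X_T^{\hat u}-X_t^{\hat u}e^{r(T-t)}\overset{d}{=}A(t)+B(t)\hat L_1,\quad A(t):=\Int_t^T \hat u_s^T(\mu_s-r)e^{r(T-s)}\diff s,\ B(t):=\norm{\leftidx{^{\alpha}}w_s(\hat u)}_{\L^{\alpha}((t,T],\diff s)},
\end{align*}
so that $H(t):=\E_{t,x}[\T(X_T^{\hat u}-xe^{r(T-t)})]=\E[\T(A(t)+B(t)\hat L_1)]$. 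By the argument of the previous paragraph, $A$ and $B$ are $C^1$ on $[0,T)$. To differentiate $H$ under the expectation I would apply the dominated convergence theorem to the difference quotient $h^{-1}(\T(A(t+h)+B(t+h)\hat L_1)-\T(A(t)+B(t)\hat L_1))$; by concavity of $\T$ (so that $\T'$ is monotone) this quotient is bounded in absolute value, uniformly in small $h$, by $(|A'|_\infty+|B'|_\infty|\hat L_1|)\cdot\sup_{s}|\T'(A(s)+B(s)\hat L_1)|$ on a neighborhood of $t$, and the integrability of this dominating function is exactly what Assumption \ref{assumption expected value derivative} (combined with the linear growth of $\T'$ from concavity) guarantees. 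This yields $H'(t)=\E[\T'(A(t)+B(t)\hat L_1)(A'(t)+B'(t)\hat L_1)]$ and continuity of $H'$ in $t$ again follows by dominated convergence, giving $H\in C^1([0,T))$.

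The only genuine obstacle is the justification of differentiation under the expectation when $\alpha<2$: the integrand $\T'(A(t)+B(t)\hat L_1)(A'(t)+B'(t)\hat L_1)$ grows like $|\hat L_1|$ multiplied by an (at most constant) factor, and since $\hat L_1$ is only $\theta$-integrable for $\theta<\alpha$, we need $\alpha>1$ (which is assumed throughout) and the explicit integrability hypothesis of Assumption \ref{assumption expected value derivative} to close the estimate; all the other steps are a routine combination of Leibniz's rule, the chain rule, and the closed-form representations of Proposition \ref{risk measure formula} and Lemma \ref{alpha dist}.
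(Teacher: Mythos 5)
Your proposal is correct and follows essentially the same route as the paper: independence of $x$ via the deterministic control and the gains representation \eqref{eq: umschreibung xt-xn}, the closed form from Proposition \ref{risk measure formula} plus the chain rule (with Assumptions \ref{ass: cont} and \ref{ass: lambda C1}) for the risk term, and the identity $X_T^{\hat u}-xe^{r(T-t)}\overset{d}{=}A(t)+B(t)\hat L_1$ with differentiation under the expectation justified by Assumptions \ref{assumption expected value} and \ref{assumption expected value derivative} for the reward term. Your explicit treatment of the degenerate case $B(t)=0$ and the concavity-based dominating function are minor refinements of the same argument.
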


\begin{lemma} \label{lemma: infinitesimal generator = derivative}
	Let $u$ be continuous and deterministic. If the infinitesimal generator from Definition \ref{inf gen def cont} is build with $v=u_t$, then it holds:
	\begin{align*}
		(\mathrm{A}^u \mathbb{E}_{\cdot} [\T(X_T^u - X_{\cdot}^u e^{r(T-\cdot)})]) (t,x) &= \tfrac{\diff}{\diff t} \mathbb{E}_{t,x} [\T(X_T^u-x e^{r(T-t)})] ,\\
		(\mathrm{A}^u (\lambda_{\cdot}F(\rho_{\cdot} (X_T^u - X_{\cdot}^u e^{r(T-\cdot)})))) (t,x) &= \tfrac{\diff}{\diff t} \lambda_t F(\rho_{t,x} (X_T^u - x e^{r(T-t)})).
	\end{align*}
\end{lemma}

We give these proofs in \ref{proofs}.

\begin{assumption} \label{ass: V C1}
	$V \in C^{1,2}$.
\end{assumption}

\begin{assumption} \label{ass: cont}
	The optimal equilibrium function $\hat{u}$ is continuous and deterministic.
\end{assumption}

If the assumptions of Theorem \ref{Arzela} are fulfilled, this assumption is redundant.

Following the steps of Bj{\"o}rk and Murgoci \cite{Bjork}, we prove that our extended HJB equation, which we derived by a limiting approach, is correct.

\begin{theorem} \label{verification}
	Assume $\tilde{V}$ is a solution of the extended HJB, and the control law $\hat{u}$ attains the supremum in Equation (\ref{hjb equation}). Then $\hat{u}$ is an equilibrium law, and $\tilde{V}$ is the corresponding value function.
\end{theorem}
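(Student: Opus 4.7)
I adopt a Björk--Murgoci style verification: starting from a smooth solution $\tilde V$ of the extended HJB with maximizer $\hat u$, I verify directly the infinitesimal Nash inequality of Definition~\ref{equi control law cont} and then identify $\tilde V$ with the equilibrium value. Fix $(t,x)\in[0,T)\times\R$, an arbitrary $u\in\mathcal U$ and $h>0$, and let $u_h$ denote the perturbation equal to $u$ on $[t,t+h)$ and to $\hat u$ on $[t+h,T]$ as in that definition. The goal is to show
\[\liminf_{h\downarrow 0}\frac{J(t,x,\hat u)-J(t,x,u_h)}{h}\ge 0.\]

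Next, I would rewrite $J(t,x,u_h)$ in terms of continuation quantities along $\hat u$. Because the reward and risk functionals inside $J$ are anchored at the initial wealth $x$ and not at the intermediate wealth $X_{t+h}^{u_h}$, the tower property does not directly yield a conditional expectation of $\tilde V$ but instead produces correction terms. Introducing the auxiliary maps $g(s,y):=\mathbb{E}_{s,y}[\T(X_T^{\hat u}-ye^{r(T-s)})]$ and $G(s,y):=\lambda_s F(\rho_{s,y}(X_T^{\hat u}-ye^{r(T-s)}))$, which lie in $C^{1,2}$ by Proposition~\ref{discussion continuous}, and using that on $[t+h,T]$ the identity $\tilde V(s,y)=g(s,y)-G(s,y)$ holds once $\hat u$ is followed, a bookkeeping of the anchoring mismatches gives
\begin{align*}
J(t,x,\hat u)-J(t,x,u_h)&=-(\mathrm{A}_h^u\tilde V)(t,x)+(\mathrm{A}_h^u\,\mathbb{E}_{\cdot}[\T(X_T^{u_h}-X_{\cdot}^{u_h}e^{r(T-\cdot)})])(t,x)\\
&\quad-(\mathrm{A}_h^u(\lambda_{\cdot}F(\rho_{\cdot}(X_T^{u_h}-X_{\cdot}^{u_h}e^{r(T-\cdot)}))))(t,x)+o(h).
\end{align*}

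Dividing by $h$ and letting $h\downarrow 0$, the operator $\mathrm{A}_h^u/h$ converges to $\mathrm{A}^u$ evaluated at the constant control $u$ on the infinitesimal interval, and the three generator terms assemble exactly into the expression appearing inside the supremum of the extended HJB, whose explicit formulas are provided by Lemmas~\ref{au exp cont} and~\ref{risk measure formula in continuous time}. Since that supremum equals $0$ and is attained at $\hat u$, the competing control $u$ yields a non-positive value, so the limit of the ratio above is $\ge 0$, proving that $\hat u$ is an equilibrium law. To identify $V=\tilde V$, I plug $u=\hat u$ into the HJB, which becomes an identity, together with the terminal condition $\tilde V(T,\cdot)=\T(0)-\lambda_T F(0)=J(T,\cdot,\hat u)$; integrating the resulting first-order relation along the trajectory of $X^{\hat u}$ from $t$ to $T$ then yields $\tilde V(t,x)=J(t,x,\hat u)=V(t,x)$.

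The main obstacle is the non-local, non-Markovian nature of the two functionals entering the generators: both $\mathbb{E}_{\cdot}[\T(X_T-X_{\cdot}e^{r(T-\cdot)})]$ and $\rho_{\cdot}(X_T-X_{\cdot}e^{r(T-\cdot)})$ depend on the change of wealth over the \emph{entire} remaining interval, and $\rho$ is neither sub-additive nor automatically continuous. Standard dynamic programming therefore does not apply, and the argument relies essentially on Theorem~\ref{ut deterministic} (so that $\hat u$ may be taken deterministic and pathwise continuations are well-defined), on Proposition~\ref{risk measure formula} (which reduces the risk term under $\hat u$ to a deterministic $\L^\alpha$-integral of $m_s$ and $\leftidx{^{\alpha}}w_s$), and on the explicit generator identity of Lemma~\ref{risk measure formula in continuous time} to make the $h\downarrow 0$ limit rigorous.
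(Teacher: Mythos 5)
Your proposal is correct and takes essentially the same route as the paper's proof: identify $\tilde V(t,x)=J(t,x,\hat u)$ by exploiting that the HJB holds with equality at $\hat u$ and integrating along the trajectory of $X^{\hat u}$ (Dynkin), and obtain the equilibrium property by writing $J(t,x,u_h)$ through the recursion over $[t,t+h]$ (your ``anchoring mismatch'' bookkeeping is exactly the paper's Equation (\ref{eq:veri 1})), then invoking the HJB inequality for arbitrary $u$ and letting $h\downarrow 0$. The only difference is ordering --- the paper proves the identification first so that $J(t+h,X_{t+h}^{u_h},u_h)=\tilde V(t+h,X_{t+h}^{u_h})$ is already available in the equilibrium step, whereas you use it before proving it at the end --- but since the identification argument is independent of the equilibrium argument, this is presentational rather than a gap.
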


\begin{proof}
	\underline{Step 1: Let us show that $\tilde{V}(t,x)=J(t,x,\hat{u})$:}
	
	It holds by assumption that
	\begin{align} \label{eq: verification theorem eq 1}
		(\mathrm{A}^{\hat{u}} \tilde{V})(t,x) - (\mathrm{A}^{\hat{u}} \mathbb{E}_{\cdot} [\T(X_T^{\hat{u}} - X_{\cdot}^{\hat{u}} e^{r(T-\cdot)})]) (t,x) + (\mathrm{A}^{\hat{u}} (\lambda_{\cdot}F(\rho_{\cdot} (X_T^{\hat{u}} - X_{\cdot}^{\hat{u}} e^{r(T-\cdot)})))) (t,x) = 0 .
	\end{align}
	Next, we replace $(t,x)$ by $(s,X_s^{\hat{u}})$, integrate over $\diff s$ in $(t,T)$, and take the conditional expectation. Now, from Dynkin's theorem, which we are allowed to use due to Lemma \ref{lemma: infinitesimal generator = derivative} and Assumption \ref{ass: V C1}, we conclude that $\mathbb{E}_{t,x} [\tilde{V}(T,X_T^{\hat{u}})]=\tilde{V}(t,x) + \mathbb{E}_{t,x} [ \Int_t^T (\mathrm{A}^{\hat{u}} \tilde{V}) (s,X_s^{\hat{u}}) \diff s ].$
	Due to Assumption \ref{ass: cont} (which can be replaced by the assumptions in Theorem \ref{Arzela}) and Proposition \ref{discussion continuous}, we use Dynkin again for the other two terms from \eqref{eq: verification theorem eq 1}. Solving this equation for $\tilde{V}$ leads to 
	\begin{align*}
		\tilde{V}(t,x) =& \mathbb{E}_{t,x} [\tilde{V} (T,X_T^{\hat{u}})] - \mathbb{E}_{t,x} [ \T(X_T^{\hat{u}} - X_{\cdot}^{\hat{u}} e^{r(T-\cdot)}) (T,X_T^{\hat{u}}) ] + \mathbb{E}_{t,x} [ \T(X_T^{\hat{u}} - X_{\cdot}^{\hat{u}} e^{r(T-\cdot)}) (t,x) ] \\
		&+ \mathbb{E}_{t,x} [\lambda_T F(\rho (X_T^{\hat{u}} - X_{\cdot}^{\hat{u}} e^{r(T-\cdot)})) (T,X_T^{\hat{u}}) ] - \lambda_t F(\rho (X_T^{\hat{u}} - X_{\cdot}^{\hat{u}} e^{r(T-\cdot)})) (t,x).
	\end{align*}
	Lastly, we insert the final value to infer:
	\begin{align*}
		\tilde{V}(t,x) =& \T(0) - \lambda_T F(0) -\T(0) + \mathbb{E}_{t,x} [ \T(X_T^{\hat{u}} - x e^{r(T-t)}) ] + \lambda_T F(0) - \lambda_t F(\rho (X_T^{\hat{u}} - x e^{r(T-t)})) \\
		=& \mathbb{E}_{t,x} [ \T(X_T^{\hat{u}} - x e^{r(T-t)}) ] - \lambda_t F(\rho (X_T^{\hat{u}} - x e^{r(T-t)})) = J(t,x,\hat{u}).
	\end{align*}
	
	\underline{Step 2: Let us show that $\hat{u}$ is an equilibrium function:}\\
	First, we construct a control law $u_h^{v,t}$ as in Definition \ref{inf gen def}. Now, \eqref{rec J}
	applied to the interval $t$ to $t+h$ gives us (suppressing the $h$ and the ${v,t}$ in $u_h$):
	\begin{align} \label{eq:veri 1}
		J (t,x,u) =& \mathbb{E}_{t,x} [\tilde{V} (t+h,X_{t+h}^u)] -\mathbb{E}_{t,x} [\T(X_T^u-X_{t+h}^u e^{r(T-t-h)})-\T(X_T^u-x e^{r(T-t)})] \notag \\
		&+ \mathbb{E}_{t,x} [\lambda_{t+h} F(\rho_{t+h,X_{t+h}^u} (X_T^u - X_{t+h}^u e^{r(T-t-h)})) - \lambda_t F(\rho_{t,x} (X_T^u - x e^{r(T-t)}))],
	\end{align}
	where we used $J (t+h,X_{t+h}^u,u)=\tilde{V} (t+h,X_{t+h}^u)$, which follows from the definition of Nash equilibria (see Definition \ref{equi control law cont}).
	Since $\tilde{V}$ is a solution of the extended HJB as in Definition \ref{hjb}, it holds that for all $u \in \mathcal{U}$
	\begin{align*}
		(\mathrm{A}^u \tilde{V} )(t,x) - (\mathrm{A}^u \T(X_T^u - X_{\cdot}^u e^{r(T-\cdot)})) (t,x) + (\mathrm{A}^u (\lambda_{\cdot} F(\rho_{\cdot} (X_T^u - X_{\cdot}^u e^{r(T-\cdot)})))) (t,x) \leq 0.
	\end{align*}
	A discretization of this equation leads to
	\begin{align} \label{eq:veri 2}
		\mathbb{E}_{t,x} [\tilde{V} (t+h,X_{t+h}^u)] - \tilde{V} (t,x) - \mathbb{E}_{t,x} [\T(X_T^u-X_{t+h}^u e^{r(T-t-h)})-\T(X_T^u-x e^{r(T-t)})] \notag \\
		+ \mathbb{E}_{t,x} [\lambda_{t+h} F(\rho_{t+h,X_{t+h}^u} (X_T^u - x e^{r(T-t-h)})) - \lambda_t F(\rho_{t,x} ( X_T^u - x e^{r(T-t)}))] &\leq \mathrm{o}(h).
	\end{align}
	Since $\tilde{V}(t,x)=J(t,x,\hat{u})$, it follows that $J(t,x,\hat{u}) \geq J(t,x,u) + \mathrm{o} (h)$ from (\ref{eq:veri 1}) and (\ref{eq:veri 2}), which implies that $\liminf_{h \rightarrow 0} \tfrac{J(t,x,\hat{u}) - J(t,x,u)}{h} \geq 0.$
	Definition \ref{equi control law cont} yields the proof.
\end{proof}

Now, we come back to the setting of Theorem \ref{Arzela} to show that this limit of the deterministic equilibria is also an optimal equilibrium in continuous time.

\begin{theorem}
	Let the assumptions of Theorem \ref{Arzela} hold. Then, $\hat{u}$ from Theorem \ref{Arzela} is an optimal equilibrium function.
\end{theorem}

\begin{proof}
	First, note that for convenience of the reader, we denote $\hat{u}$ from Theorem \ref{Arzela} by $\tilde{u}$.
	For this proof, we rewrite the HJB equations in discrete and continuous time using the special setting properties given by the assumptions. We remark that for continuous and deterministic optimal control functions, the infinitesimal generator equals the derivative with respect to $t$ using Lemma \ref{lemma: infinitesimal generator = derivative}. Now, similar to the calculations in the proof of Proposition \ref{discussion continuous}, we get for $\bar{u}$ with $\bar{u}_s = \hat{u}_s$ for $s>t$, where $\hat{u}$ denotes the optimal solution, and $\bar{u}_t = u$ for an arbitrary, but fixed $u \in \mathcal{V}$:
	\begin{align*}
		\tfrac{\diff}{\diff t}J (t,x,\bar{u}) 
		=& \,-u^\intercal (\mu_t-r)e^{r(T-t)} + \lambda_t (\varrho^{\hat{L}_1} \sqrt[\alpha]{C_{\alpha d}})^{\beta} \tfrac{\beta}{\alpha} (\Int_t^T |\hat{u}_s^\intercal \sigma_s e^{r(T-s)}|^{\alpha} \diff s)^{\frac{\beta-\alpha}{\alpha}} |u^\intercal \sigma_te^{r(T-t)}|^{\alpha} \\
		&- \lambda_t' (\varrho^{\hat{L}_1} \sqrt[\alpha]{C_{\alpha d}})^{\beta} (\Int_t^T |\hat{u}_s^\intercal \sigma_s e^{r(T-s)}|^{\alpha} \diff s)^{\frac{\beta}{\alpha}},
	\end{align*}
	where $C_{\alpha d}$ is defined as in the proof of Theorem \ref{Arzela}. Here, we used that $L$ is isotropic or $d=1$. Hence, the first part of the continuous HJB equation \eqref{hjb equation} becomes after noting that $V(t,x)=J(t,x,\hat{u})$ for $\hat{u}$ being optimal (since the $\lambda_t'$-part cancels):
	\begin{align} \label{eq: hjb convergence 1}
		\textstyle\sup\limits_{u \in \mathcal{V}} \{ &(u^\intercal-\hat{u}_t^\intercal) (\mu_t-r)e^{r(T-t)} \notag \\
		&- \lambda_t (\varrho^{\hat{L}_1}\sqrt[\alpha]{C_{\alpha d}})^{\beta} \tfrac{\beta}{\alpha} (\Int_t^T |\hat{u}_s^\intercal \sigma_s e^{r(T-s)}|^{\alpha} \diff s)^{\frac{\beta-\alpha}{\alpha}} (|u^\intercal \sigma_te^{r(T-t)}|^{\alpha}-|\hat{u}_t^\intercal\sigma_te^{r(T-t)}|^{\alpha}) \} = 0.
	\end{align}
	The other two parts are similar. Thus, we note that $\tilde{u}$ solves the HJB equation \eqref{hjb equation} if and only if the previous equation holds with $\tilde{u}$ instead of $\hat{u}$.
	Similarly, we can reduce the first part of the discrete HJB equation \eqref{hjb equation disc} to:
	\begin{align*} 
		\textstyle\sup\limits_{u \in \mathcal{V}} \{ (u^\intercal-\leftidx{^\delta}{\hat{u}}_t^\intercal) &\Int_t^{t+\delta}(\mu_s-r)e^{r(T-s)} \diff s \\
		&-\lambda_t (\varrho^{\hat{L}_1} \sqrt[\alpha]{C_{\alpha d}})^{\beta} ((|u|^{\alpha} \Int_t^{t+\delta}|\sigma_s^{11} e^{r(T-s)}|^{\alpha} \diff s + \Int_{t+\delta}^T |\leftidx{^\delta}{\hat{u}}_s^\intercal \sigma_s e^{r(T-s)}|^{\alpha} \diff s)^{\frac{\beta}{\alpha}} \notag \\ &- (|\leftidx{^\delta}{\hat{u}}_t|^{\alpha} \Int_t^{t+\delta}|\sigma_s^{11} e^{r(T-s)}|^{\alpha} \diff s + \Int_{t+\delta}^T |\leftidx{^\delta}{\hat{u}}_s^\intercal \sigma_s e^{r(T-s)}|^{\alpha} \diff s)^{\frac{\beta}{\alpha}}) \} = 0.
	\end{align*}
	Here, we also used that $|\hat{u}_s^\intercal \sigma_s e^{r(T-s)}|^{\alpha} = |\hat{u}_s|^\alpha |\sigma_s^{11} e^{r(T-s)}|^{\alpha}$ since either $d=1$ or $\sigma$ is the identity times possibly a constant. Again, the two other parts are similar.
	Now, we divide the equation by $\delta$ and show that for every arbitrary but fixed $u \in \mathcal{V}$ by taking the limit of $\delta \to 0$, we get the continuous HJB with $\tilde{u}$ instead of $\hat{u}$ on the left side. Note that for every fixed $u \in \mathcal{V}$ the discrete HJB equation holds with a ``$\leq 0$'' instead of a ``$=0$''.\\
	First, it holds that $\lim_{\delta \to 0} \delta^{-1}(u^\intercal-\leftidx{^\delta}{\hat{u}}_t^\intercal) \Int_t^{t+\delta}(\mu_s-r)e^{r(T-s)} \diff s = (u^\intercal-\tilde{u}_t^\intercal) (\mu_t-r)e^{r(T-t)}$ since $\leftidx{^\delta}{\hat{u}}$ converges to $\tilde{u}$ by Theorem \ref{Arzela} and $\mu$ is continuous. For the second term, the argumentation is trickier, and for the sake of simplicity we introduce $\sigma_\delta := \Int_t^{t+\delta}|\sigma_s^{11} e^{r(T-s)}|^{\alpha} \diff s$ and $C_{\delta} (x):= \Int_{t+\delta}^T |x_s^\intercal \sigma_s e^{r(T-s)}|^{\alpha} \diff s$. Now, we get
	\begin{align*}
		&\frac{(|u|^{\alpha} \sigma_\delta + C_{\delta} (\leftidx{^\delta}{\hat{u}}))^{\frac{\beta}{\alpha}} - (|\leftidx{^\delta}{\hat{u}}_t|^{\alpha}\sigma_\delta + C_{\delta} (\leftidx{^\delta}{\hat{u}}))^{\frac{\beta}{\alpha}}}{\delta} \\ &\qquad= \frac{(|u|^{\alpha} \sigma_\delta + C_{\delta} (\leftidx{^\delta}{\hat{u}}))^{\frac{\beta}{\alpha}} - (|\leftidx{^\delta}{\hat{u}}_t|^{\alpha}\sigma_\delta + C_{\delta} (\leftidx{^\delta}{\hat{u}}))^{\frac{\beta}{\alpha}}}{(|u|^{\alpha} \sigma_\delta + C_{\delta} (\leftidx{^\delta}{\hat{u}})) - (|\leftidx{^\delta}{\hat{u}}_t|^{\alpha}\sigma_\delta + C_{\delta} (\leftidx{^\delta}{\hat{u}}))} \cdot \frac{(|u|^{\alpha} \sigma_\delta + C_{\delta} (\leftidx{^\delta}{\hat{u}})) - (|\leftidx{^\delta}{\hat{u}}_t|^{\alpha}\sigma_\delta + C_{\delta} (\leftidx{^\delta}{\hat{u}}))}{\delta}.
	\end{align*}
	The second term converges to $(|u|^{\alpha}-|\tilde{u}|^{\alpha}) |\sigma_t^{11} e^{r(T-t)}|^{\alpha}= |u^\intercal\sigma_te^{r(T-t)}|^{\alpha}-|\hat{u}_t^\intercal\sigma_te^{r(T-t)}|^{\alpha}$ for $\delta \to 0$ due to the convergence of $\leftidx{^\delta}{\hat{u}}$ to $\tilde{u}$ and $\sigma$ being continuous if $d=1$ or a constant times the identity if $d \geq 2$. On the first term, we apply the mean value theorem of differentiation and thus get a $\xi_{\delta}$ between $|u|^{\alpha} \sigma_\delta + C_{\delta} (\leftidx{^\delta}{\hat{u}})$ and $|\leftidx{^\delta}{\hat{u}}_t|^{\alpha}\sigma_\delta + C_{\delta} (\leftidx{^\delta}{\hat{u}})$ such that the first term is equal to $\frac{\beta}{\alpha} \xi_{\delta}^{\frac{\beta-\alpha}{\alpha}}$. Since both boundary values for $\xi_{\delta}$ converge to $C_0 (\tilde{u})$ due to the convergence of $\leftidx{^\delta}{\hat{u}}$ to $\tilde{u}$, also $\xi_{\delta}$ converges to $C_0 (\tilde{u})$. Overall, we get the convergence of the inner part of the left side from the discrete to the continuous case for a fixed $u$ after noticing that the argumentation is similar for the other two terms. Hence, for a fixed $u$, we have \eqref{eq: hjb convergence 1} with a ``$\leq$'' for $\tilde{u}$. Now, we take the supremum over all $u \in \mathcal{V}$ which leads to the continuous HJB equation \eqref{eq: hjb convergence 1} with a ``$\leq 0$'' instead of a ``$=0$''. However, choosing $u = \tilde{u}_t$ gives immediately a ``$0$'' in the equation by the previously shown convergence results for the discrete optimal strategies. Hence, $\tilde{u}$ solves the continuous HJB equation \eqref{hjb equation}. Moreover, the constraints hold by Remark \ref{rem: hjb constraints}. Thus, $\tilde{u}$ is an optimal equilibrium function by Theorem \ref{verification}, and the theorem follows.
\end{proof}

\section{Formulas for the optimal control function in special cases} \label{sec: Black-Scholes}

In the following, we consider an exponential target function $\T(x) = c^{-1} (1- \exp ({-\gamma x}))$ with $c,\gamma>0$ in the first two subsections and the identity for $\T$ in the third one. Since the exponential target function is strictly concave, the optimal control function in discrete time is unique (see Theorem \ref{deterministic control function}). Moreover, we consider a multidimensional Black-Scholes model in the first, a multidimensional pure upward jump L{\'e}vy model in the second, and a one-dimensional general L{\'e}vy model in the third subsection. We provide proofs in \ref{proofs}. Note that under Assumption \ref{ass: bounded strategies}, the optimal strategies need to be truncated at $\pm M$. However, for the ease of exposition, we will assume in the sequel that under Assumption \ref{ass: bounded strategies} $M$ is large enough.

\subsection{Multidimensional Black-Scholes model with an exponential target function}

This section considers the Black-Scholes model, i.e., $\alpha=2$. Since Brownian Motions are symmetric, we are in the case of Assumption \ref{ass: symmetric multidemensional}.

\subsubsection{An explicit recursion}

We give a formula for calculating the optimal control function in discrete and continuous time. Recall that we restricted in discrete time the optimization to piecewise constant control functions.

\begin{proposition} \label{formula exp target disc}
	The optimal solution in discrete time is given by going backward in time from $n=T-1$ to $n=0$ solving the following equation for $\hat{u}_{n+1}^k$ for all $k \in \{ 1,\ldots, d\}$:
	\begin{align}
		0=&- c^{-1} \exp ({-\gamma \Int_n^{T} {\hat{u}}_s^\intercal (\mu_s-r)e^{r(T-s)} \diff s + \frac{\gamma^2}{2} \norm{{\hat{u}}_s^\intercal \sigma_s e^{r(T-s)}}^2_{\L^2((n,T],\diff s; H_s)}}) \notag \\
		&\hspace{0.5cm}\cdot ( -\gamma \Int_n^{n+1} (\mu_s^k-r)e^{r(T-s)} \diff s +\gamma^2 \Int_n^{n+1} \langle {\hat{u}}_s^\intercal \sigma_s e^{r(T-s)}, \sigma_{s}^{k \cdot} e^{r(T-s)}\rangle_{R_s} \diff s) \notag \\
		&- \lambda_n F' ( -\Int_n^T m_s ({\hat{u}}) \diff s + \varrho^{W_1^1} \norm{{\hat{u}}_s^\intercal \sigma_s e^{r(T-s)}}_{\L^{2}((n,T], \diff s; H_s)} )\notag \\
		&\hspace{0.5cm}\cdot ( - \1_{\{\text{\scriptsize$\rho$ cash-invariant}\}}\Int_n^{n+1} (\mu^k_s - r) e^{r(T-s)}\diff s + \varrho^{W_1^1}\textfrac{ \Int_n^{n+1} \langle {\hat{u}}_s^\intercal \sigma_{s} e^{r(T-s)},\sigma_{s}^{k \cdot} e^{r(T-s)} \rangle_{R_s} \diff s}{\norm{{\hat{u}}_s^\intercal \sigma_s e^{r(T-s)}}_{\L^{2}((n,T], \diff s; H_s)}} ), \label{eq: Prop. 6.1}
	\end{align}
	with $m$ as in Definition \ref{def m w}.
\end{proposition}

\begin{remark}
	Since $J$ is not differentiable in $0$, it is not ensured that \eqref{eq: Prop. 6.1} admits a solution. Whenever there is no solution, the optimal control function is then given by $0$ for this asset at this time point. Note that this remark also applies to all following results, where the solution is not given analytically.
\end{remark}

\begin{remark} \label{formula exp target cont}
	The continuous-time analog of \eqref{eq: Prop. 6.1} is given by going backward in time solving the following equation for $\hat{u}_t^k$ for all $k \in \{ 1,\ldots, d\}$:
	\begin{align*}
		0=&-c^{-1} \exp ({-\gamma \Int_t^{T} {\hat{u}}_s^\intercal (\mu_s-r)e^{r(T-s)} \diff s + \frac{\gamma^2}{2} \norm{{\hat{u}}_s^\intercal \sigma_s e^{r(T-s)}}^2_{\L^2((t,T],\diff s; H_s)}}) \\
		&\hspace{0.5cm}\cdot ( -\gamma (\mu_t^k-r) e^{r(T-t)} +\gamma^2 \langle {\hat{u}}_t^\intercal \sigma_t e^{r(T-t)}, \sigma_{t}^{k \cdot} e^{r(T-t)}\rangle_{R_t} ) \\
		&- \lambda_t F' ( -\Int_t^T m_s ({\hat{u}}) \diff s + \varrho^{W_1^1} \norm{{\hat{u}}_s^\intercal \sigma_s e^{r(T-s)}}_{\L^{2}((t,T], \diff s; H_s)} )\\
		&\hspace{0.5cm}\cdot ( - \1_{\{\text{\scriptsize$\rho$ cash-invariant}\}} (\mu^k_t - r) e^{r(T-t)} + \varrho^{W_1^1}\textfrac{ \langle {\hat{u}}_t^\intercal \sigma_t e^{r(T-t)}, \sigma_{t}^{k \cdot} e^{r(T-t)}\rangle_{R_t}}{\norm{{\hat{u}}_s^\intercal \sigma_s e^{r(T-s)}}_{\L^{2}((t,T], \diff s; H_s)}} ),
	\end{align*}
	with $m$ as in Definition \ref{def m w}. This can be seen by replacing $n$ and $n+1$ by $t$ and $t+\delta$, dividing by $\delta$, and taking the limit $\delta \to 0$.
\end{remark}

\subsubsection{No risk penalization}

The following paragraph considers the special case where we do not penalize risk, i.e., $F \equiv 0$. In this case, we can calculate the optimal control function analytically:

\begin{proposition} \label{optimal control no risk, disc}
	In discrete time, the optimal control function can be calculated by solving for each point $n \in \{0,\ldots, T-1\}$ the following $d$-dimensional system, where the $k$'th element is given by:
	\begin{align} \label{eq: Prop. 6.3}
		\hat{u}^k_{n+1} = \textfrac{\Int_n^{n+1} (\mu_s^k-r) e^{r(T-s)} \diff s - \gamma \Sum_{i=1,i\neq k}^d \hat{u}_{n+1}^i \Int_n^{n+1} \sigma_{s}^{i\cdot} R_s \sigma_{s}^{\cdot k} e^{2r(T-s)} \diff s}{\gamma \Int_n^{n+1} \sigma_{s}^{k\cdot} R_s \sigma_{s}^{\cdot k} e^{2r(T-s)} \diff s}.
	\end{align}
\end{proposition}

\begin{remark} \label{optimal control no risk, cont}
	The continuous-time analog of \eqref{eq: Prop. 6.3} for $\hat{u}$ can be calculated by solving for each point $t \in (0, T]$ the following $d$-dimensional system, where the $k$'th element is given by $\hat{u}^k_t = ({\mu_t^k-r - \gamma \Sum_{i=1,i\neq k}^d \hat{u}_t^i \sigma_{t}^{i\cdot} R_t \sigma_{t}^{\cdot k} e^{2r(T-t)} })({\gamma \sigma_{t}^{k\cdot} R_t \sigma_{t}^{\cdot k} e^{2r(T-t)} })^{-1}$. Again, this can be seen by replacing $n$ and $n+1$ by $t$ and $t+\delta$, dividing by $\delta$ and taking the limit $\delta \to 0$.
\end{remark}

\subsection{An explicit recursion for a pure upward L{\'e}vy model with an exponential target function}

In this subsection, we take for $L$ a pure upward jump L{\'e}vy process, i.e., $\tilde{\sigma} (v) > 0$ only if $v^i \geq 0$ for all $i \in \{1,\ldots,d\}$. We restrict to such L{\'e}vy processes for the exponential target function since the Laplace function $\E[e^{-tL}]$ only exists for such processes if $t \geq 0$.
\begin{proposition} \label{formula levy exp target disc}
	The optimal solution in discrete time is given by going backward in time from $n=T-1$ to $n=0$ solving the following equation for $\hat{u}_{n+1}^k$ for all $k \in \{ 1,\ldots, d\}$:
	\begin{align} \label{eq: Prop. 6.5}
		0=&-c^{-1} \exp \big({-\gamma \Int_n^{T} {\hat{u}}_s^\intercal (\mu_s-r)e^{r(T-s)} \diff s + \frac{c_{\alpha} \gamma^{\alpha}}{- \cos (\frac{\pi \alpha}{2})} \norm{{\hat{u}}_s^\intercal \sigma_s v e^{r(T-s)}}^{\alpha}_{\L^{\alpha}((n,T] \times \S^d,\diff s \times \tilde{\sigma} (\diff v))}} \big) \\
		&\hspace{0.5cm}\cdot ( -\gamma \Int_n^{n+1} (\mu_s^k-r)e^{r(T-s)} \diff s + \frac{\alpha c_{\alpha} \gamma^{\alpha}}{- \cos (\frac{\pi \alpha}{2})} \Int_n^{n+1} \Int_{\S^d} | {\hat{u}}_s^\intercal \sigma_s v e^{r(T-s)} |^{\alpha-1} \sigma_s^{k \cdot} v e^{r(T-s)} \tilde{\sigma} (\diff v)\diff s) \notag \\
		&- \lambda_n F' ( -\Int_n^T m_s ({\hat{u}}) \diff s + \varrho^{\tilde{L}_1} \norm{{\hat{u}}_s^\intercal \sigma_s v e^{r(T-s)}}_{\L^{\alpha}((n,T] \times \S^d,\diff s \times \tilde{\sigma} (\diff v))} ) \notag \\
		&\hspace{0.5cm}\cdot \big( - \1_{\{\text{\scriptsize$\rho$ cash-invariant}\}}\Int_n^{n+1} (\mu^k_s - r) e^{r(T-s)}\diff s \notag \\
		&\hspace{1cm}+ \varrho^{\tilde{L}_1}\textfrac{ \Int_n^{n+1} \Int_{\S^d} | {\hat{u}}_s^\intercal \sigma_s v e^{r(T-s)} |^{\alpha-1} \sigma_s^{k \cdot} v e^{r(T-s)} \tilde{\sigma} (\diff v) \diff s}{\norm{{\hat{u}}_s^\intercal \sigma_s v e^{r(T-s)}}^{1-\frac{1}{\alpha}}_{\L^{\alpha}((n,T] \times \S^d,\diff s \times \tilde{\sigma} (\diff v))}} \big), \notag
	\end{align}
	with $m$ as in Definition \ref{def m w} and $c_{\alpha}$ as in \eqref{eq: characteristic function}.
\end{proposition}

\begin{remark} \label{formula levy exp target cont}
	The continuous-time analog of \eqref{eq: Prop. 6.5} is given by going backward in time solving the following equation for $\hat{u}_t^k$ for all $k \in \{ 1,\ldots, d\}$:
	\begin{align*}
		0=&-c^{-1} \exp \big({-\gamma \Int_t^{T} {\hat{u}}_s^\intercal (\mu_s-r)e^{r(T-s)} \diff s + \frac{c_{\alpha} \gamma^{\alpha}}{- \cos (\frac{\pi \alpha}{2})} \norm{{\hat{u}}_s^\intercal \sigma_s v e^{r(T-s)}}^{\alpha}_{\L^{\alpha}((t,T] \times \S^d,\diff s \times \tilde{\sigma} (\diff v))}} \big) \\
		&\hspace{0.5cm}\cdot ( -\gamma (\mu_t^k-r) e^{r(T-t)} + \textstyle\textfrac{\alpha c_{\alpha} \gamma^{\alpha}}{- \cos (\frac{\pi \alpha}{2})} \Int_{\S^d} | {\hat{u}}_t^\intercal \sigma_t v e^{r(T-t)} |^{\alpha-1} \sigma_t^{k \cdot} v e^{r(T-t)} \tilde{\sigma} (\diff v) ) \\
		&- \lambda_t F' ( -\Int_t^T m_s ({\hat{u}}) \diff s + \varrho^{\tilde{L}_1} \norm{{\hat{u}}_s^\intercal \sigma_s v e^{r(T-s)}}_{\L^{\alpha}((t,T] \times \S^d,\diff s \times \tilde{\sigma} (\diff v))} )\\
		&\hspace{0.5cm}\cdot \big( - \1_{\{\text{\scriptsize$\rho$ cash-invariant}\}} (\mu^k_t - r) e^{r(T-t)} + \varrho^{\tilde{L}_1}\textfrac{ \Int_{\S^d} | {\hat{u}}_t^\intercal \sigma_t v e^{r(T-t)} |^{\alpha-1} \sigma_t^{k \cdot} v e^{r(T-t)} \tilde{\sigma} (\diff v)}{\norm{{\hat{u}}_s^\intercal \sigma_s v e^{r(T-s)}}^{1-\frac{1}{\alpha}}_{\L^{\alpha}((t,T] \times \S^d,\diff s \times \tilde{\sigma} (\diff v))}} \big),
	\end{align*}
	with $m$ as in Definition \ref{def m w} and $c_{\alpha}$ as in \eqref{eq: characteristic function}. Again, this can be seen by replacing $n$ and $n+1$ by $t$ and $t+\delta$, dividing by $\delta$ and taking the limit $\delta \to 0$.
\end{remark}

\subsection{Mean-fractional central moment in a one-dimensional L{\'e}vy model} \label{MV formula}

In this subsection, we use the identity as the target function, i.e., $\T(x)=x$, and fractional central moments as the risk measure, i.e., $\rho(L) = \sqrt[\eta]{\E |L-\E[L]|^\eta} = \sqrt[\eta]{\E |L|^\eta}$ (since $\E[L]=0$) and $F(x)=\max \{0,x\}^\eta$ with $1<\eta<\alpha$ if $1<\alpha<2$ and $1<\eta\leq \alpha$ if $\alpha=2$, which are shift-invariant. This includes the variance if $\eta=\alpha=2$. Moreover, we compare the mean-variance optimal solution from our setting (i.e., considering gains and losses) to the mean-variance optimal solution which optimizes terminal wealth (see, e.g., \cite{Basak}, \cite{Bjork}, \cite{Forsyth}, \cite{lim2002mean}, \cite{Lindensjo}, \cite{pedersen2016optimal}, \cite{Wang}). Following Bj{\"o}rk and Murgoci \cite[p.57]{Bjork}, we set $d=1$, $\mu_t \equiv \mu$, $\sigma_{t} \equiv \sigma$, and $\lambda_t \equiv \lambda$. In this subsection, we give the optimal solution in closed formulas. We get $F(\varrho^{\hat{L}_1})=F(\varrho^{W^1_1}) = 2^{\eta/2} \frac{\Gamma (\frac{\eta+1}{2})}{\sqrt{\pi}}$ if $\alpha=2$ (see Winkelbauer \cite{winkelbauer2012moments}), and $F(\varrho^{\hat{L}_1}) = \frac{2}{\pi} \Gamma(\eta+1) \sin (\frac{\pi \eta}{2} ) \Int_0^{\infty} \frac{1-e^{-c_{\alpha} |t|^{\alpha}}}{t^{\eta+1}} \diff t$ if $\alpha<2$ (see Lin and Hu \cite{lin2018absolute}), where $\Gamma$ denotes the Gamma function and $c_{\alpha}$ is as in \eqref{eq: characteristic function}. In particular, it holds that $F(\varrho^{\hat{L}_1})=F(\varrho^{W^1_1}) = 1$ if $\eta=\alpha=2$.

\begin{proposition} \label{terminal wealth disc}
	The unique discrete-time optimal solution $\hat{u}$ is given by going backward in time from $n=T-1$ to $n=0$ solving the following equation for $\hat{u}_{n+1}$
	\begin{align} \label{eq: Prop. 6.7}
		0 =& (\mu-r) \Int_{n}^{n+1} e^{r(T-s)} \diff s \notag \\
		&- \sign(\hat{u}_{n+1}) \eta \lambda F(\varrho^{\hat{L}_1}) \sigma^\eta ( \Int_n^T |\hat{u}_s|^\alpha e^{\alpha r (T-s)} \diff s )^{\eta/\alpha-1} |\hat{u}_{n+1}|^{\alpha-1} \Int_n^{n+1} e^{\alpha r (T-s)} \diff s.
	\end{align}
\end{proposition} 

\begin{remark} \label{terminal wealth cont}
	The continuous-time analog of \eqref{eq: Prop. 6.7} for $\hat{u}$ is given by $\hat{u}_{t} = (\textfrac{\mu-r}{\eta\lambda F(\varrho^{\hat{L}_1}) \sigma^\eta})^{\frac{1}{\alpha-1}} e^{-r(T-t)}$ $( \Int_t^T |\hat{u}_s|^\alpha e^{\alpha r (T-s)} \diff s)^{\frac{\alpha-\eta}{\alpha(\alpha-1)}}$ for all $t \in (0,T]$. Also this can be seen by replacing $n$ and $n+1$ by $t$ and $t+\delta$, dividing by $\delta$ and taking the limit $\delta \to 0$.
\end{remark} 
From Remark \ref{terminal wealth cont}, we conclude that the continuous-time optimal solution for mean-variance optimization in the Black-Scholes model, i.e., $\eta = \alpha=2$, is given by $\hat{u}_t = \frac{\mu-r}{2\lambda\sigma^2} e^{-r(T-t)}$ which is the same as in Bj{\"o}rk and Murgoci \cite[p.57]{Bjork} (with a different notation). Hence, optimizing gains and losses gives identical solutions as optimizing terminal wealth for mean-variance in the Black-Scholes model.

\section{Numerical results} \label{sec: numeric}

In the following section, we calculate numerical values of the optimal control function, applying the results from the previous sections. For the solutions shown, we use VaR (see Example \ref{ex: risk measures}(\ref{ex: VaR})) as the main risk measure and compare it to mean-variance optimization. In the case of VaR, it holds that $\varrho^{W^1_1} = -z_p = z_{1-p}$ for $\alpha=2$, where $z_{\beta}$ denotes the $\beta$-quantile of a standard normal distribution. If $\alpha<2$, we use a simulation based on the algorithm of Misiorek and Weron \cite{misiorek2012heavy} to calculate VaR.

So, we start to consider a two-dimensional geometric Brownian Motion with the risk-free interest rate $r=0.02$, expected return $\mu := \mu_t \equiv \begin{psmallmatrix} 0.08 \\ 0.06 \end{psmallmatrix}$, volatility matrix $\sigma := \sigma_t \equiv \begin{psmallmatrix} 0.20 & 0.10 \\ 0.10 & 0.15 \end{psmallmatrix}$, and correlation matrix $R := R_t \equiv \begin{psmallmatrix} 1 & 0.5 \\ 0.5 & 1 \end{psmallmatrix}$. We set $T=10$ with a step size of $\delta = 0.01$, i.e., there are $1000$ intermediate points. Moreover, we take $\lambda := \lambda_t \equiv 0.25$ for the risk aversion parameter and $p=0.01=1\%$ for the risk level. In the one-dimensional optimizations, we just take the first asset from this model.

In Figure \ref{fig: expTarget}, we compare the time-consistent optimal investments under Value-at-Risk (with different target and/or $F$ functions) to the optimal investment under mean-variance. In the two-dimensional Black-Scholes model in the left plot, we illustrate this for the cases that $F \equiv 0$, $F(x)=x$ or $F(x)=\max \{x,0\}^2$ when $\T$ is the exponential target function with parameters $c = \gamma = 1$, and $F(x)=x$ or $F(x)=\max \{x,0\}^2$ when $\T$ is the identity target function $\T(x)=x$. In the one-dimensional L{\'e}vy model in the middle plot, we replace the Brownian Motion with a pure upward jump $\alpha$-stable L{\'e}vy process with $c=0.05$. We give the result for the exponential target function and the Value at Risk with $F(x)=\max \{x,0\}^2$. In the one-dimensional L{\'e}vy model of the right plot, we use a pure upward jump, a symmetric, and a pure downward jump L{\'e}vy model with $c=0.5$, the identity target function and the Value at Risk with $F(x)=\max \{x,0\}^2$. We display the optimal solutions as a fraction of the mean-variance optimal solution (with $\alpha=2$). The fraction for the second asset in the 2-dimensional Black-Scholes model is not shown since the curves differ only slightly.
\begin{figure}[!htb]
	\centering
	\begin{minipage}{0.3\textwidth}
		\includegraphics[trim= 0mm 6mm 10mm 19mm, clip,width=\textwidth]{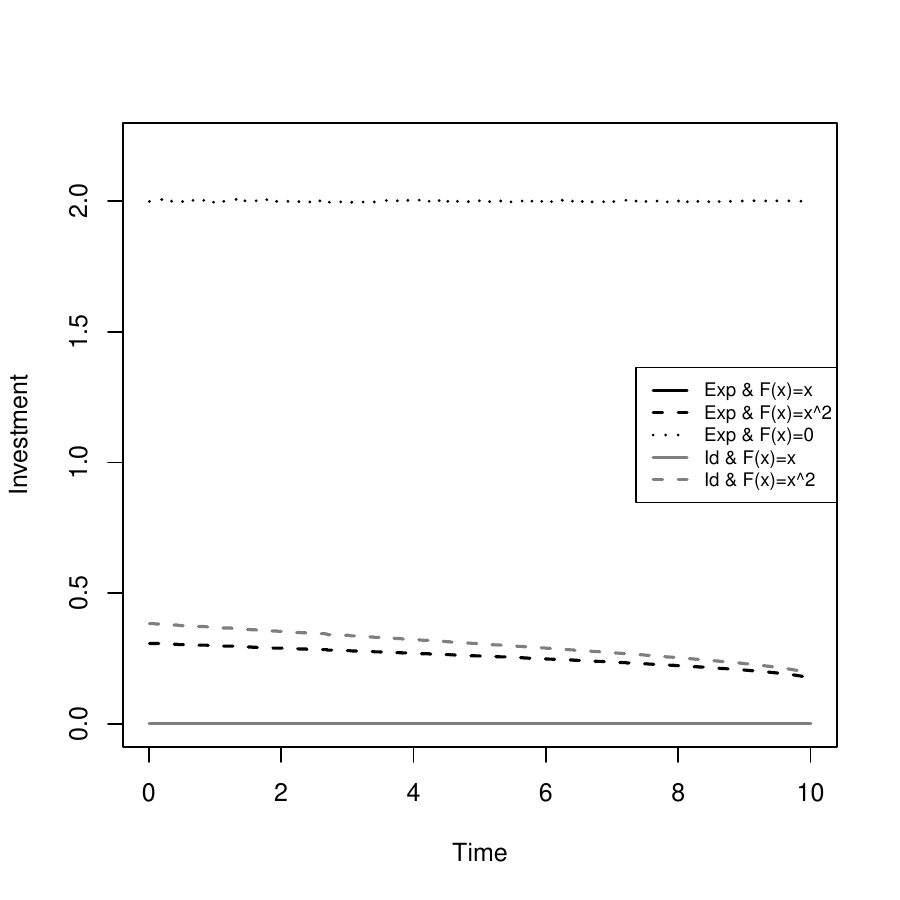}
	\end{minipage}
	\quad
	\begin{minipage}{0.3\textwidth}
		\includegraphics[trim= 0mm 6mm 10mm 19mm, clip,width=\textwidth]{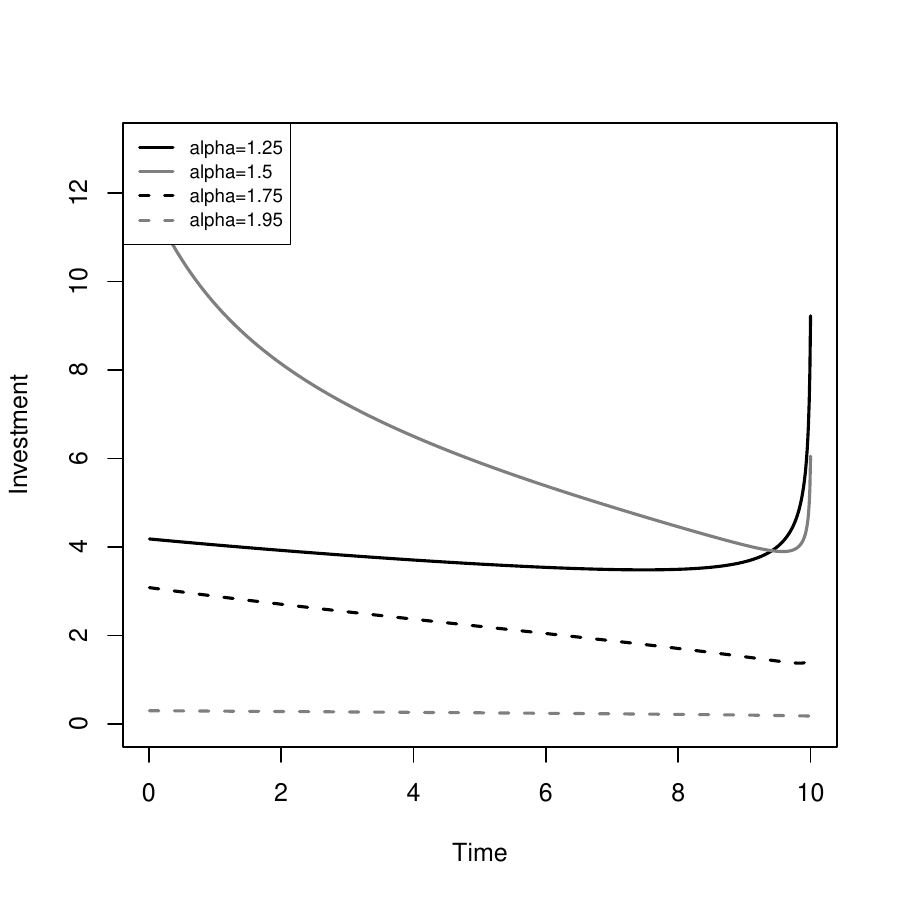}
	\end{minipage}
	\quad
	\begin{minipage}{0.3\textwidth}
		\includegraphics[trim= 0mm 6mm 10mm 19mm, clip,width=\textwidth]{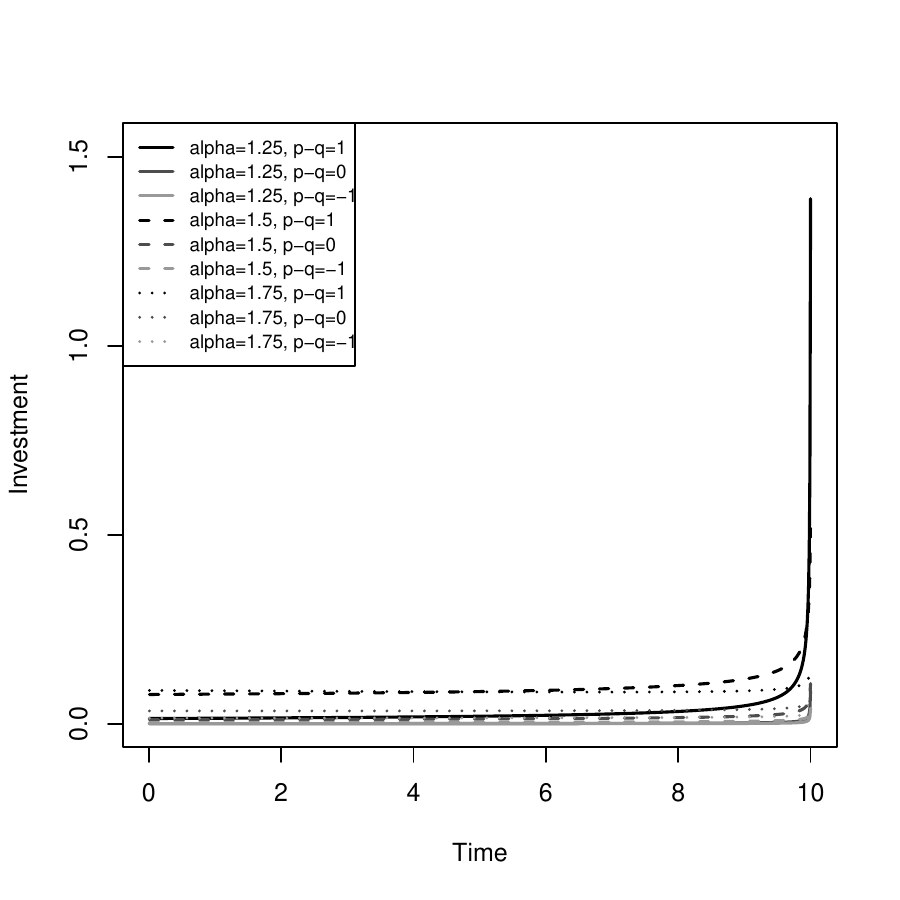}
	\end{minipage}
	\caption{Comparison of optimal control functions using the exponential or identity target function with different $F$ functions to the mean-variance optimization for the 2-dimensional Black-Scholes model (left) and the one-dimensional L{\'e}vy model (middle and right).}
	\label{fig: expTarget}
\end{figure}	

The left plot of Figure \ref{fig: expTarget} shows that the investment with VaR optimization non-increases over time compared to the mean-variance optimal investment in all cases. In particular, under a shorter time horizon, the mean-variance investor tends to take riskier positions than the optimal VaR investor. Moreover, we notice that the optimal investment into the risky asset is zero if $F(x)=x$. Furthermore, we see that the exponential utility investment without risk penalization has an investment structure similar to mean-variance optimization. This stems from the similar structure of these problems, which can be seen by considering the Taylor series of the exponential utility function.
Hence, we conclude that VaR optimization leads to more conservative investments for shorter maturities and comparably larger investments in the risky asset than mean-variance. On top of that, we notice that using the exponential target function compared to the identity leads to a more conservative investment. This is reasonable since when taking $\gamma=c$ and their limit to $0$, the exponential target function reduces to the identity. If these values are positive, the investor is risk-averse, whereas the investor is risk-neutral in the identity case.\\
The other plots of Figure \ref{fig: expTarget} show that the style of the investment strategy is different when using L{\'e}vy processes instead of the Brownian Motion. Contrary to Brownian Motions, L{\'e}vy processes show the feature that compared to the Black-Scholes mean-variance optimization, the riskiness increases for very short maturities compared to longer maturities. Furthermore, the middle plot displays no monotonicity in the investment behavior concerning $\alpha$. This stems from the observation that Value at Risk of $\hat{L}_1$ is actually not monotone in $\alpha$. In fact, Value at Risk value describes a $U$-shape for this parametrization. Moreover, we observe that for smaller $\alpha$'s, the investment strategy gets comparably riskier with shorter maturities. This effect decreases and inverses for higher $\alpha$'s as observed for $\alpha=2$ in the left plot. From the right plot, we can observe that a model with pure upward jumps tends to entail riskier strategies than one with pure downward jumps. This stems from the fact that the Value at Risk of $\hat{L}_1$ is smaller for pure upward jumps, and hence, the penalization of the risk is smaller. In addition, when comparing the middle and the right plots, we observe that a higher $c$ in the L{\'e}vy model induces less risky strategies.

In Figure \ref{fig: Compare}, we compare the Nash equilibrium strategy of a sophisticated investor to a pre-committed investor who optimizes the problem only once in the beginning and sticks to this strategy until maturity. We consider only the first risky asset from the parametrization at the start of the section and set $T=1$. We analyze mean-variance optimization in the left plot and VaR optimization using the exponential target function with $\gamma=c=0.25$ (and $F(x)=\max\{0,x\}^2$) in the right plot. Both plots give the investments as a fraction of the optimal pre-commitment investment. In the mean-variance case, the pre-commitment strategy is taken from Zhou and Li \cite{zhou2000continuous}. In the VaR case, we calculate the pre-commitment solution for the exponential target function in the same way as Basak and Shapiro \cite{Shapiro} did for power utility. Note that the pre-commitment strategies are not deterministic. Hence, we take the expected value of the optimal strategy instead. Since this cannot be done explicitly for VaR, we use, for this case, the average of a Monte-Carlo-Simulation with $N=100000$.
\begin{figure}[!htb]
	\centering
	\begin{minipage}{0.3\textwidth}
		\includegraphics[trim= 0mm 6mm 10mm 19mm, clip,width=\textwidth]{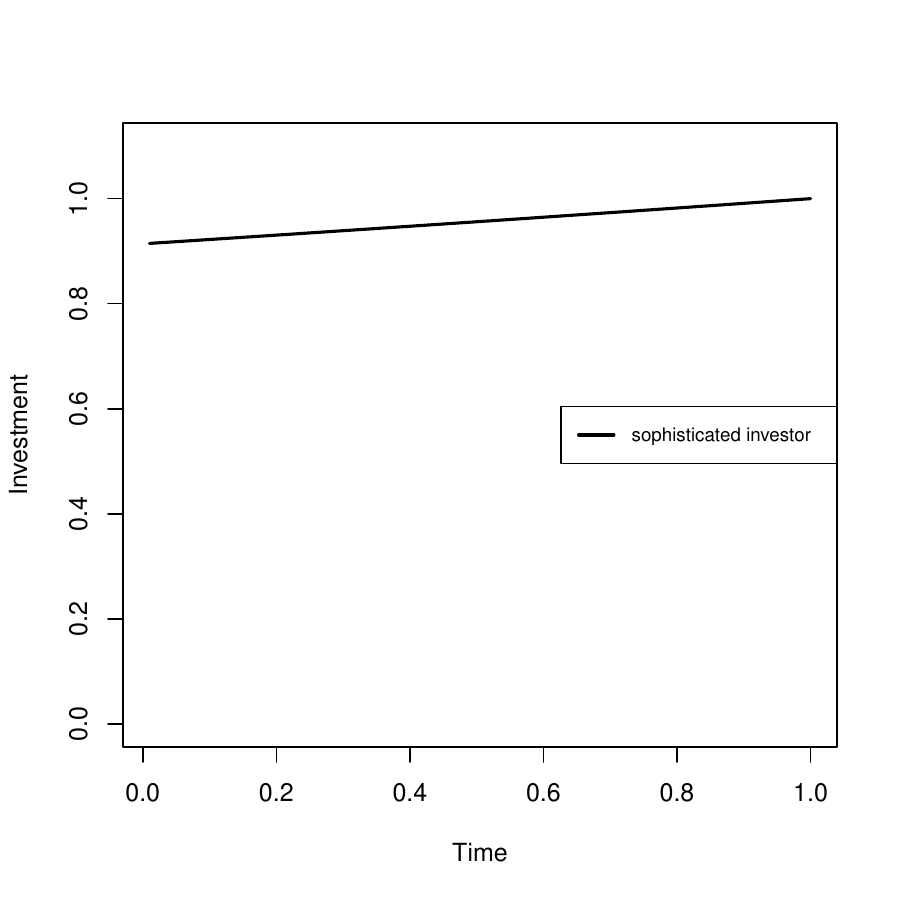}
	\end{minipage}
	\qquad
	\begin{minipage}{0.3\textwidth}
		\includegraphics[trim= 0mm 6mm 10mm 19mm, clip,width=\textwidth]{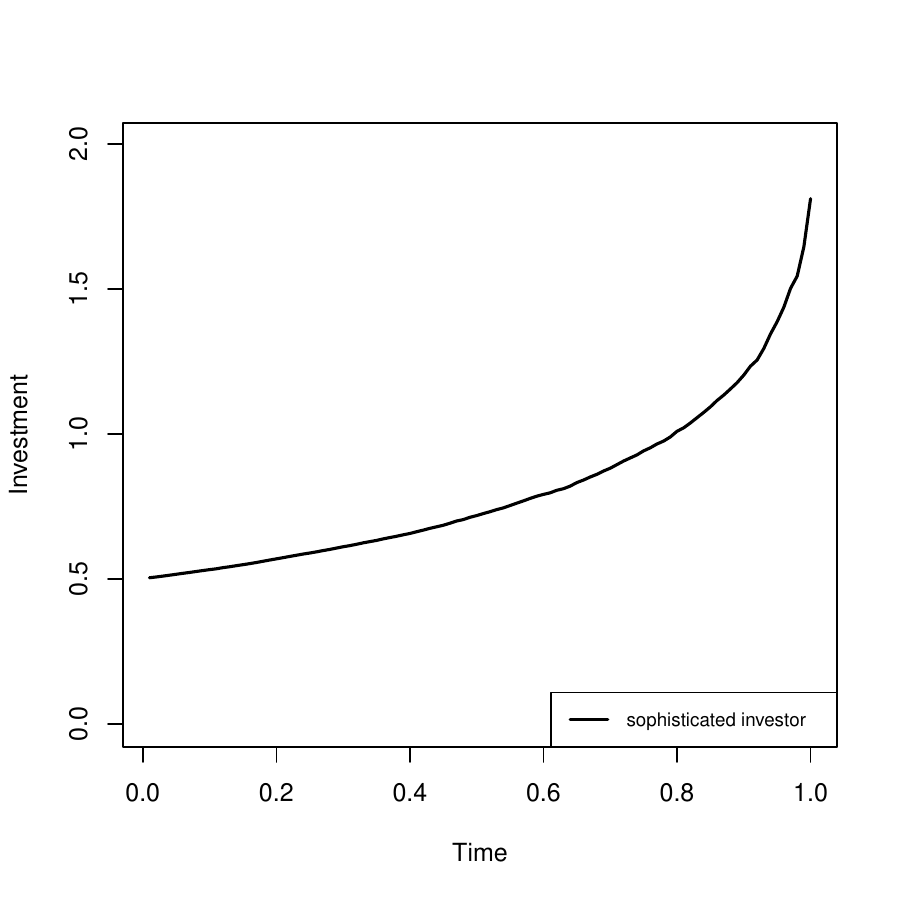}
	\end{minipage}
	\caption{Comparison of Nash equilibrium to pre-commitment optimal strategies for mean-variance (left) and mean-VaR$^2$ (right) optimization.}
	\label{fig: Compare}
\end{figure}	

Figure \ref{fig: Compare} illustrates that time-consistent investment strategies tend to be more conservative than pre-commitment strategies over longer time horizons. For shorter maturities, however, this pattern holds only under the mean-variance criterion, where the difference between the two approaches remains relatively small. In contrast, under the mean-VaR$^2$ framework, the time-consistent sophisticated strategy is noticeably riskier for short-term horizons than the pre-commitment strategy.

\section{Conclusion}

In this paper, we derived an HJB equation for a multidimensional asset allocation problem focusing on gains \& losses with rather general risk measures in an $\alpha$-stable L{\'e}vy model. We showed that under appropriate conditions an optimal strategy in the sense of a Nash subgame equilibrium exists and is deterministic. We also proved a convergence result from discrete to continuous time of the optimal strategies. In continuous time, we needed additional assumptions on the model. A numerical analysis shows that Value at Risk leads to less risky investments compared to mean-variance for short maturities and that a time-consistent investor invests less risky than a pre-commitment investor. %(Future research directions include possibly generalizing this approach to other L{\'e}vy processes and risky assets.)

\section*{Funding}

This research did not receive any specific grant from funding agencies in the public, commercial, or not-for-profit sectors.

\setcounter{section}{0}
\renewcommand{\thesection}{\Alph{section}}
\appendix
\section{L{\'e}vy processes and their stochastic integral} \label{app: Levy}

In this section, we describe more properties of L{\'e}vy processes as a continuation of Section \ref{sec: levy} and construct their stochastic integral. The description is based on \c{C}{\i}nlar \cite[pp.314-315,322-323]{Cinlar}, and the stochastic integral construction on Protter \cite[pp.51-65]{protter2005stochastic}.

\subsection{L{\'e}vy processes} \label{app-section: Levy Erklärung}

Let $L$ be a L{\'e}vy process, i.e., it is adapted to $\mathcal{F}$, has c{\`a}dl{\`a}g paths for almost all $\omega$ with $L_0 (\omega)=0$, and has independent increments with $L_{t+u}-L_t \overset{d}{=} L_u$ for all $t,u \in \R_{\geq 0}$. The last property is equivalent to having independent and stationary increments. Let $L$ be $\alpha$-stable with $\alpha \in (0,2]\backslash\{1\}$. Then, the characteristic function $\varphi$ of $L$ is given by
\begin{align} \label{eq: characteristic function}
	\varphi_{L_t} (r) = \E [\exp ({i\scprod{r,L_t}})] = \exp(-t c_{\alpha} \Int_{\S^d} |\scprod{r,v}|^{\alpha} [1-i \tan (\frac{\pi \alpha}{2}) \sign (\scprod{r,v})] \tilde{\sigma} (\diff v))
\end{align}
where $r \in \R^d$, $c_{\alpha} := c \frac{\Gamma(1-\alpha)}{\alpha} \cos ( \frac{\pi \alpha}{2}  )$, and $\Gamma(\cdot)$ denotes the Gamma function. If $\alpha \in (1,2]$, it holds that $\E[L_t]= 0$ and that $L_t$ is a martingale if $L_t$ is additionally symmetric.
In the one-dimensional case, the characteristic function reduces to
\begin{align} \label{eq: characteristic function 1dim}
	\varphi_{L_t} (r) = \E [\exp({ir L_t})] = \exp(-t c_{\alpha} |r|^{\alpha} [1-i (p-q)\tan \textstyle (\frac{\pi \alpha}{2}) \sign (r) ] ),
\end{align}
where $p$ is the (conditional) probability of an upward jump and $q$ is the (conditional) probability of a downward jump. In particular, $p+q=1$.

\subsection{Stochastic integral for L{\'e}vy processes} \label{def: stochastic integral}

Every L{\'e}vy process is a semimartingale; hence, we use the definition of stochastic integrals for semimartingales here. \emph{A sequence of processes $(X^n)_{n \geq 1}$ converges in $ucp$ to the process $X$, if $\sup_{0 \leq s \leq t} | X_s^n-X_s | \xrightarrow{\PP} 0$ for all $t>0$} (\cite[Cf. p.57]{protter2005stochastic}). We define $\cL$ as the set of all adapted processes with c{\`a}gl{\`a}d (left-continuous with right limits) paths and $\cD$ as the set of all adapted processes with c{\`a}dl{\`a}g (right-continuous with left limits) paths. Moreover, we write $\cL_{ucp}$ for $\cL$ endowed with the $ucp$-topology and $\cL^{det}_{sup}$ for all functions from $\cL_{ucp}$ which are deterministic. For $\cD$, we use analogous notations. \emph{We call the process $H$ simple predictable if we can write $H$ as $H_t = H_0 \1_{\{0\}} (t) + \Sum_{j=1}^n H_j \1_{(T_{j-1},T_{j}]} (t)$ where $0=T_0 \leq \ldots \T_n < \infty$ is a sequence of stopping times and for all $j \in \{0,\ldots,n\}$ holds: $H_j \in \mathcal{F}_{T_j}$} (\cite[Cf. p.51]{protter2005stochastic}). Let $\bS$ be the set of all simple predictable processes.

Let $L$ be a L{\'e}vy process and $H \in \bS$. Then the stochastic integral of $H$ with respect to $L$ is defined as $\Int_0^t H_s \diff L_s = H_0 L_0 + \Sum_{j=1}^n H_j (L_{T_j}-L_{T_{j-1}})$ with $H$ constructed as above and $T_j \leq t$ for all $j$. This is a mapping from $\bS$ to $\cD$. We define the general stochastic integral as the extension of this from $\cL_{ucp}$ to $\cD_{ucp}$. If we want to exclude the $0$, we write $\Int_{0+}^t H_s \diff L_s$. For this paper, an important property of the stochastic integral over $\cL$ is that it holds for $Y \in \cL_{sup}^{det}$ and $t>0$ that
\begin{align} \label{eq: formula integral construction}
	\Int_{0+}^t Y_s^{\pi_n} \diff L_s \xrightarrow{\PP} \Int_{0+}^t Y_s \diff L_s
\end{align}
where $\pi_n: 0=t_0 \leq \ldots \leq t_n = t$ is the equidistant partition of $[0,T]$ in $n$ intervals. Note: $\Int_{0+}^t Y_s^{\pi_n} \diff L_s = \Sum_{j=1}^n Y_{t_j} (L_{t_j}-L_{t_{j-1}})$. 

One can generalize this definition to predictable processes such that the stochastic integral is linear in the integrand, see Applebaum \cite[p.237]{applebaum2009levy} and the references therein. 

\section{Lemmas and Proofs} \label{proofs}

\renewcommand{\thetheorem}{\Alph{section}.\arabic{theorem}}

\begin{myproof}[Proof of Lemma \ref{alpha additive}] 
	First, we define $Z_{t-s} = \scprod{a,L_t - L_s}$. Then we get from (\ref{eq: characteristic function}) since $r \in \R$:
	\begin{align*}
		\varphi_{Z_{t-s}} (r) &= \E [ \exp ({irZ_{t-s}}) ] = \E [ \exp ({ir\scprod{a,L_t - L_s}}) ] = \E [ \exp ({i\scprod{ra,L_t - L_s}}) ] \\
		&= \exp \{ -(t-s) c_{\alpha} \Int_{\S^d} |\scprod{ra,v} |^{\alpha} [1-i \tan (\frac{\pi \alpha}{2}) \sign (\scprod{ra,v})] \tilde{\sigma} (\diff v) \} \\
		&= \exp \{ -(t-s) c_{\alpha} |r|^{\alpha} [ \Int_{\S^d} | \scprod{a,v} |^{\alpha} \tilde{\sigma} (\diff v) - i \tan (\frac{\pi \alpha}{2}) \Int_{\S^d} | \scprod{a,v} |^{\alpha} \sign (r) \sign (\scprod{a,v}) \tilde{\sigma} (\diff v)] \}  \\
		&= \exp \big\{ -(t-s) c_{\alpha} |r|^{\alpha} \Int_{\S^d} | \scprod{a,v} |^{\alpha} \tilde{\sigma} (\diff v) [ 1 - i \tan (\frac{\pi \alpha}{2}) \sign (r) \dfrac{\Int_{\S^d} | \scprod{a,v} |^{\alpha} \sign (\scprod{a,v}) \tilde{\sigma} (\diff v)}{\Int_{\S^d} | \scprod{a,v} |^{\alpha} \tilde{\sigma} (\diff v)}] \big\}  \\
		&= \varphi_{\tilde{L}_{(t-s)  \Int_{\S^d} | \scprod{a,v} |^{\alpha} \tilde{\sigma} (\diff v)}} (r),
	\end{align*}
	when comparing this formula with \eqref{eq: characteristic function 1dim}, since $p-q := {\Int_{\S^d} | \scprod{a,v} |^{\alpha} \sign (\scprod{a,v}) \tilde{\sigma} (\diff v)}({\Int_{\S^d} | \scprod{a,v} |^{\alpha} \tilde{\sigma} (\diff v)})^{-1}$ is independent of $a$ (see Remark \ref{remark: p-q ind a}) under our assumptions.\\
	Finally, we obtain $\tilde{L}_{(t-s)\Int_{\S^d} | \scprod{a,v} |^{\alpha} \tilde{\sigma} (\diff v)} \overset{d}{=} \sqrt[\alpha]{\Int_{\S^d} |\scprod{a,v}|^{\alpha} \tilde{\sigma}(\diff v)} \tilde{L}_{(t-s)} \overset{d}{=} \sqrt[\alpha]{\Int_{\S^d} |\scprod{a,v}|^{\alpha} \tilde{\sigma}(\diff v)} (\tilde{L}_t-\tilde{L}_s )$ due to the $\alpha$-stability of $\tilde{L}$.
\end{myproof}

\begin{myproof}[Proof of Lemma \ref{alpha dist}] 
	Here, we must distinguish the cases $\alpha=2$ and $\alpha<2$. If $\alpha=2$, then the L{\'e}vy process is a Brownian Motion, and hence, this lemma follows from the well-known properties of a Brownian Motion and Definition \ref{remark: alpha=2} as
	\begin{align*}
		\Int_{t+}^T f(s) \diff W_s^1 \overset{d}{=} \normal (0,\Int_{t+}^T f(s)^2 \diff s ) \overset{d}{=} \sqrt{\Int_{t+}^T f(s)^2 \diff s} W_1^1 \overset{d}{=} \sqrt{\Int_{t+}^T f(s)^2 \diff s} \tilde{L}_1.
	\end{align*}
	In the case $\alpha<2$, we have $\hat{L}=\tilde{L}$. Define $Z := {\Int_{t+}^T f(s) \diff \tilde{L}_s}({\sqrt[\alpha]{\Int_t^T |f(s)|^{\alpha} \diff s}})^{-1}$, and let us show that $Z \overset{d}{=} \tilde{L}_1$. We use the time equidistant discretizations $t = s_0 < \ldots < s_n = T$, $\Delta \tilde{L}_{s_k} = \tilde{L}_{s_k}-\tilde{L}_{s_{k-1}}$ and $\Delta s_k = s_k-s_{k-1}$ for $k \in \{1,\ldots,n\}$ to define $Z_n := ({\Sum_{k=1}^n f(s_k) \Delta \tilde{L}_{s_k}})({\sqrt[\alpha]{\Sum_{k=1}^n |f(s_k)|^{\alpha} \Delta s_k}})^{-1}$. We prove the result in three steps:
	
	In Step 1 we show $\varphi_{Z_n} = \varphi_{\tilde{L}_1}$:
	Using that the $\Delta \tilde{L}_{s_k}$'s are independent for different $k$'s, and (\ref{eq: characteristic function 1dim}), we get :
	\begin{align*}
		\varphi_{Z_n}(r) =& \E [\exp({irZ_n})] 
		=  \E \big[\exp \big( i\textfrac{r}{\sqrt[\alpha]{\Sum_{k=1}^n |f(s_k)|^{\alpha} \Delta s_k}} \Sum_{k=1}^n f(s_k) \Delta \tilde{L}_{s_k}\big)\big] \\
		=& \textstyle\prod_{k=1}^n \E \big[\exp \big(i\textfrac{r f(s_k)}{\sqrt[\alpha]{\Sum_{l=1}^n |f(s_l)|^{\alpha} \Delta s_l}} \Delta \tilde{L}_{s_k}\big)\big] \\
		=& \textstyle\prod_{k=1}^n \exp \big(- \Delta s_k c_{{\alpha}} | \textfrac{r f(s_k)}{\sqrt[\alpha]{\Sum_{l=1}^n |f(s_l)|^{\alpha} \Delta s_l}} |^{\alpha} [ 1 - i (p-q) \tan ( \frac{\pi \alpha}{2}) \sign(\textfrac{r f(s_k)}{\sqrt[\alpha]{\Sum_{l=1}^n |f(s_l)|^{\alpha} \Delta s_l}}) ]  \big) \\
		=& \exp (- c_{\alpha} \textfrac{1}{\Sum_{l=1}^n |f(s_l)|^{\alpha} \Delta s_l} \Sum_{k=1}^n |f(s_k)|^{\alpha} \Delta s_k |r |^{\alpha} [ 1 - i (p-q) \tan ( \frac{\pi \alpha}{2}) \sign(r) ] ) \\
		=& \exp (-c_{\alpha} |r |^{\alpha} [ 1 - i (p-q) \textstyle \tan ( \frac{\pi \alpha}{2}) \sign(r) ] ) = \E [ \exp (ir\tilde{L}_1) ] = \varphi_{\tilde{L}_1} (r),
	\end{align*}
	where we used that either $f \geq 0$ in the case of Assumptions \ref{ass: asymmetric} and \ref{ass: one dimensional} or $p-q=0$ in the case of Assumption \ref{ass: symmetric multidemensional}.
	
	In Step 2 we prove $Z_n \xrightarrow{\PP} Z$:
	For this purpose, we define $\phi_n = \Sum_{k=1}^n | f(s_{k}) |^{\alpha} \1_{(s_{k-1},s_{k}]}$ and $\chi_n = \Sum_{k=1}^n f(s_{k}) \1_{(s_{k-1},s_{k}]}$. We see that both functions are deterministic, $\phi_n \xrightarrow{n \to \infty} |f|^{\alpha}$, and $\chi_n \xrightarrow{n \to \infty} f$ pointwise on $(t,T]$ since $f$ is continuous and deterministic. Now, the convergence of Riemann sums and \eqref{eq: formula integral construction} give us with $\tilde{\lambda}$ being the Lebesgue measure: 
	\begin{align*}
		\lim\nolimits_{n \to \infty} \Sum_{k=1}^n |f(s_k)|^{\alpha} \Delta s_k &= \lim\nolimits_{n \to \infty} \Int_{(t,T]} \phi_n \diff \tilde{\lambda} = \Int_t^T |f(s)|^{\alpha} \diff s, \\
		\Sum_{k=1}^n f(s_k) \Delta \tilde{L}_{s_k} &= \Int_{(t,T]} \chi_n \diff \tilde{L}_s \xrightarrow{\PP} \Int_{t+}^T f(s) \diff \tilde{L}_s.
	\end{align*}
	These two limits give us the claim of Step 2.
	
	In Step 3 we finally show the claim $Z \overset{d}{=} \tilde{L}_1$:
	We know $\lim_{n \to \infty} Z_n \overset{\PP}{=} Z$ from Step 2, which implies the convergence of the characteristic functions, i.e., $\lim_{n \to \infty} \varphi_{Z_{n}} = \varphi_Z$. Due to Step 1, we then get $\varphi_{\tilde{L}_1} = \lim_{n \to \infty} \varphi_{\tilde{L}_1} = \lim_{n \to \infty} \varphi_{Z_{n}} = \varphi_Z$ which is equivalent to the claim.
\end{myproof}

\begin{myproof}[Proof of Proposition \ref{Jn concave}] 
	We prove these propositions with properties for compositions of concave and convex functions, see Z{\u{a}}linescu \cite[p.43]{Zalinescu}.
	
	First of all, we show that the first term of $J_n$ is concave in $u$:\\
	We know from (\ref{eq: umschreibung xt-xn}):
	\begin{align*}
		X_T^{u} - X_{n}^{u} e^{r(T-n)} &= \Sum_{i=1}^d \Int_{n}^T u^i_s (\mu^i_s-r) e^{r(T-s)} \diff s + \Sum_{i,j=1}^d  \Int_{n+}^T u^i_{s} \sigma^{ij}_s e^{r(T-s)} \diff L^j_s.
	\end{align*}
	Due to the linearity of the integrals, $X_T^{u} - X_{n}^{u} e^{r(T-n)}$ is a linear mapping. Since $\T$ is a concave function and the expected value is linear, it follows that $\mathbb{E}_{n,x} [ \T(X_T^{u} - x e^{r(T-n)}) ]$ is concave in $u$.
	
	The next step is to show that $\rho_{n,x} (X_T^{u} - x e^{r(T-n)})$ is convex in $u$. 
	We know from Proposition \ref{risk measure formula} that $\rho_{n,x} (X_T^u - x e^{r(T-n)}) = -\Int_n^T m_s (u) \diff s + \left\{ \begin{matrix*}[l]
		\varrho^{\hat{L}_1} \norm{u_s^\intercal \sigma_s v e^{r(T-s)}}_{\L^{\alpha}((n,T] \times \S^d, \diff s \times  \tilde{\sigma}(\diff v))} & \text{if }\alpha<2. \\
		\varrho^{W_1^1} \norm{u_{s}^\intercal \sigma_{s} e^{r(T-{s})}}_{\L^2((n,T],\diff s;H_s)} & \text{if }\alpha=2.
	\end{matrix*} \right.  $
	The first term is linear since $m_s$ is linear. Furthermore, since $\alpha>1$ by assumption, the second term consists in each case of an $\L^{\alpha}$-norm, which is convex due to the positive homogeneity and the triangular inequality of norms. Hence, the whole second term is convex in $u$ as well. Since $F$ is non-decreasing and convex, $F(\rho_{n,x} (X_T^{\hat{u}} - x e^{r(T-n)}))$ is convex in $u$.	Then we know that $-\lambda_{n} F(\rho_{n,x} (X_T^{u} - x e^{r(T-n)}))$ is concave in $u$ and so the functional $J_{n}$ is concave in $u$ as the sum of concave functions.
	
	If $\T$ or $-F$ is strictly concave, then the respective term is also strictly concave since $\sigma_t$ is positive definite. Hence, $J_n$ is strictly concave as well as the sum of a strictly concave and a concave function.
\end{myproof}

\begin{myproof}[Proof of Proposition \ref{u unif bounded}] 
	In this proof, we denote by $\leftidx{^\delta}{\hat{u}}$ the optimal strategy with respect to the step size $\delta$. However, before showing Proposition \ref{u unif bounded}, we prove the following two lemmas:
	\begin{lemma} \label{Jn non-negative}
		If the control function $u$ is deterministic and $\lambda_n \equiv \lambda$, then the functional $J_n$ is always non-negative in the optimum, i.e., $J_n (\hat{u}) \geq 0$ for all $n \in \{0,\ldots,T-1\}$.
	\end{lemma}
	\begin{proof}
		We show this proposition by induction, where the induction start is similar to the induction step. Thus, let $n \in \{0,\ldots,T-1\}$. Now, \eqref{eq: umschreibung xt-xn} and the assumption that $\lambda$ is constant imply that $J_n (x,u = (\hat{u}_{n+1},\hat{u}_{n+2},\ldots,\hat{u}_T)^\intercal) \geq J_n (x,u = (0,\hat{u}_{n+2},\ldots,\hat{u}_T)^\intercal) = J_{n+1} (x,u = (\hat{u}_{n+2},\ldots,\hat{u}_T)^\intercal) \geq 0$ by the induction step where the equality follows by Proposition \ref{risk measure formula}.
	\end{proof}
	
	\begin{lemma} \label{u inf w inf}
		We have $\norm{\leftidx{^{\alpha}}{w}_s (\leftidx{^\delta}{\hat{u}})}^{\alpha}_{\L^{\alpha} ((0,T],\diff s)} \geq \tilde{\lambda}_{min}^{\alpha} \min\{1,\mathbb{A}_d\} \cos^{\alpha} (\tilde{\varepsilon}) \hat{\varepsilon}  \norm{\leftidx{^\delta}{\hat{u}}_s}^{\alpha}_{\L^{\alpha} ((0,T],\diff s)} $ under Assumption \ref{ass: shift-inv and cash-inv and condition}.
	\end{lemma}

	\begin{proof}
		First, we can assume without loss of generality that $\leftidx{^\delta}{\hat{u}}_s \neq 0$ for all $s$. Otherwise, since $\leftidx{^\delta}{\hat{u}}_s$ is piecewise constant, the right hand side of the equation is already ``$0$'' in this interval and can be ignored in the following proof.
		
		Now, we show the lemma in the case that $\tilde{\sigma}$ has a density which is bigger than $\hat{\varepsilon}$ on the $(d-1)$-dimensional unit ball around a vector with respect to the radial distance with radius $12 \tilde{\varepsilon}$. We denote this vector by $w \in \S^d$, the unit ball on the sphere around $w$ by $O_w$, and the radial distance between two vectors by $d$. Moreover, we define the distance $\tilde{d}$ as the maximum norm on the sphere, i.e., using polar coordinates, one can describe a point in $\S^d$ by $d-1$ angles. For the maximal norm $\tilde{d}(x,y)$ on the sphere, the standard (componentwise) maximal norm is then applied to the $d-1$ angles between the two vectors $x$ and $y$ in their polar coordinates (see the set $A$ in Figure \ref{fig: sphere}).	
		
		\begin{floatingfigure}[r]{5cm}
			\mbox{\includegraphics[trim= 10mm 10mm -10mm 10mm, clip,width=4cm]{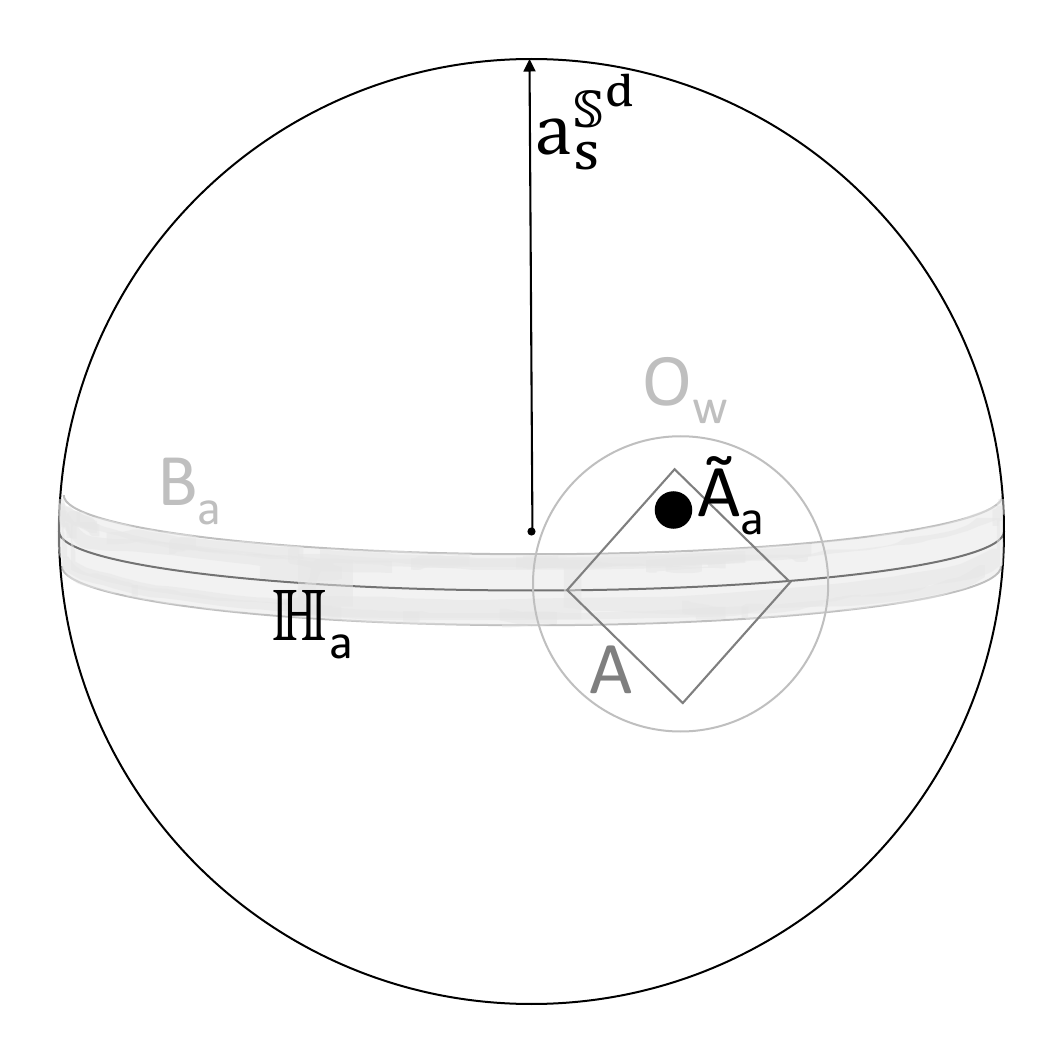}}
			\caption{Graphical illustration of the proof for $d=3$ if $w \in \mathbb{H}_a$. If $w \not\in \mathbb{H}_a$, the situation is simpler.}
			\label{fig: sphere}
		\end{floatingfigure} \pagebreak
		
		Then, it holds that $A:= \{\hat{w} \in \S^d | \tilde{d}(\hat{w},w) \leq 6 \tilde{\varepsilon} \} \subset O_w$ by construction. We can describe $A$ as some $(d-1)$-dimensional hyper-`cube' with curved sides. Define $a_s^{\S^d} (\leftidx{^\delta}{\hat{u}}) := \frac{\leftidx{^\delta}{\hat{u}}_s^\intercal \sigma_{s} e^{r(T-s)}}{\norm{\leftidx{^\delta}{\hat{u}}_s^\intercal \sigma_{s} e^{r(T-s)}}_2} \in \S^d$. Then, the set of vectors on $\S^d$ which are orthogonal to $a_s^{\S^d}$ form a $(d-2)$-dimensional curved hyperplane, denoted by $\mathbb{H}_a$. By orthogonality, for all $\tilde{w} \in \mathbb{H}_a$ the angle between $a_s^{\S^d} (\leftidx{^\delta}{\hat{u}})$ and $\tilde{w}$ is $\frac{\pi}{2}$. Now, define $B_a := \{ \hat{w} \in \S^d | dist(\hat{w},\mathbb{H}_a) \leq \tilde{\varepsilon} \} \subset \S^d$, where $dist$ denotes the radial distance between a vector and a set on $\S^d$, i.e., $dist(x,C)=\inf_{y \in C} d(x,y)$ for $x \in \S^d$ and $C \subset \S^d$. Hence, $B_a$ is a hyperspherical segment. By construction, there exists a point $\hat{v}_a \in A \backslash B_a$ such that $\tilde{A}_a := \{\hat{w} \in \S^d | d(\hat{w},\hat{v}_a) \leq \tilde{\varepsilon} \} \subset A \backslash B_a$. In particular, it holds for all $\hat{w} \in \tilde{A}_a$ that the angle between $\hat{w}$ and $a_s^{\S^d}$ is at least $\tilde{\varepsilon}$ away from $\frac{\pi}{2}$ and the density of $\tilde{\sigma}$ is bigger or equal than $\hat{\varepsilon}$. Moreover, note that the surface area of $\tilde{A}_a$ is equal to $\mathbb{A}_d$ with $\mathbb{A}_d$ as in Assumption \ref{ass: shift-inv and cash-inv and condition}.
		Hence, we get since $|a \cdot v| = |\cos(\phi)| |a|$ for $v \in \S^d$ and $\phi$ denoting the angle between $a$ and $v$:
		\begin{align*}
			\norm{\leftidx{^{\alpha}}{w}_s (\leftidx{^\delta}{\hat{u}})}_{\L^{\alpha} ((0,T],\diff s)}^{\alpha} &= \Int_0^T \Int_{\S^d} | \leftidx{^\delta}{\hat{u}}_s^\intercal \sigma_{s} e^{r(T-s)} v |^{\alpha} \tilde{\sigma}(\diff v) \diff s \\
			&\geq |\cos|^{\alpha}(\tilde{\varepsilon}) \Int_0^T | \leftidx{^\delta}{\hat{u}}_s^\intercal \sigma_{s} e^{r(T-s)} |^{\alpha} \Int_{\tilde{A}_a} \tilde{\sigma}(\diff v) \diff s \\
			&\geq \mathbb{A}_d \cos^{\alpha}(\tilde{\varepsilon}) \hat{\varepsilon} \Int_0^T | \leftidx{^\delta}{\hat{u}}_s^\intercal \sigma_{s} e^{r(T-s)} |^{\alpha}  \diff s 
			\geq \mathbb{A}_d \cos^{\alpha}(\tilde{\varepsilon}) \hat{\varepsilon} \tilde{\lambda}^{\alpha}_{min} \Int_0^T | \leftidx{^\delta}{\hat{u}}_s^\intercal  |^{\alpha}  \diff s,
		\end{align*}
		i.e., the claim follows. For the other case, the argumentation is similar, just without $\mathbb{A}_d$. \vspace{12pt}
	\end{proof} \linebreak 
	Now, we are ready to prove Proposition \ref{u unif bounded}: \\
	In the case of Assumption \ref{ass: bounded strategies}, the claims follow directly by possibly increasing $M$. Otherwise, we show this proposition by contradiction, i.e., we assume that $\norm{\leftidx{^\delta}{\hat{u}}}_{\L^{\alpha} ((0,T],\diff s)} \xrightarrow{\delta \to 0} \infty$ by possibly switching to a subsequence. Now, since $\lim_{x \to \infty} \T'(x) =0$, for all $\gamma_\T >0$ exists a $C_{\gamma}>0$ such that $\T(x) \leq C_{\gamma} + \gamma_\T x$ and since $F$ is convex, there exists a $C_F \leq 0$ and a $\gamma_F>0$ such that $F(x)\geq C_F + \gamma_F x$. Also, note that $\norm{\cdot}_{\L^1 ((0,T],\diff s)} \leq C_{\alpha} \norm{\cdot}_{\L^{\alpha} ((0,T],\diff s)}$ with $C_{\alpha} := T^{1-1/\alpha}$ since $\alpha>1$ by assumption. Then, we get:
	\begin{align*}
		J_0 (x,\leftidx{^\delta}{\hat{u}}) =& \mathbb{E}_{0,x} [\T(\Int_{0}^T \leftidx{^\delta}{\hat{u}}^\intercal_s (\mu_s-r) e^{r(T-s)} \diff s +  \Int_{0+}^T \leftidx{^\delta}{\hat{u}}^\intercal_{s} \sigma_s e^{r(T-s)} \diff L_s)]  \\
		&- \lambda F(-\Int_0^T m_s (\leftidx{^\delta}{\hat{u}}) \diff s + \varrho^{\hat{L}_1} \norm{\leftidx{^{\alpha}}w_s (\leftidx{^\delta}{\hat{u}})}_{\L^{\alpha}((0,T],\diff s)}) \\
		\leq& C_{\gamma} + \gamma_\T \tilde{\mu}_{max} C_{\alpha} \norm{\leftidx{^\delta}{\hat{u}}_s}_{\L^{\alpha} ((0,T],\diff s)} - \lambda C_F + \lambda \gamma_F \Int_0^T |m_s (\leftidx{^\delta}{\hat{u}})| \diff s - \lambda \gamma_F \varrho^{\hat{L}_1} \norm{\leftidx{^{\alpha}}w_s (\leftidx{^\delta}{\hat{u}})}_{\L^{\alpha}((0,T],\diff s)},
	\end{align*} 
	where we recall that $\tilde{\mu}_{max} = \max\{|\mu^i_s-r|e^{r(T-s)} | s \in [0,T], i \in \{1,\ldots,d\}\}$. We used Proposition \ref{risk measure formula} in the last inequality.
	Now, we consider Assumption \ref{ass: shift-inv and cash-inv and condition}. If $\rho$ is shift-invariant, we get with Lemma \ref{u inf w inf}:
	\begin{align*}
		J_0 (x,\leftidx{^\delta}{\hat{u}}) \leq& C_{\gamma} + \gamma_\T \tilde{\mu}_{max} C_{\alpha} \norm{\leftidx{^\delta}{\hat{u}}_s}_{\L^{\alpha} ((0,T],\diff s)} - \lambda C_F \\
		&- \lambda \gamma_F \varrho^{\hat{L}_1} \tilde{\lambda}_{min} \min\{1,\mathbb{A}_d\}^{1/\alpha} \cos (\tilde{\varepsilon}) \hat{\varepsilon}^{1/\alpha}  \norm{\leftidx{^\delta}{\hat{u}}_s}_{\L^{\alpha} ((0,T],\diff s)}.
	\end{align*}
	Choosing $\gamma_\T < \frac{\lambda \gamma_F \varrho^{\hat{L}_1} \tilde{\lambda}_{min} \min\{1,\mathbb{A}_d\}^{1/\alpha} \cos (\tilde{\varepsilon}) \hat{\varepsilon}^{1/\alpha}}{\tilde{\mu}_{max} C_{\alpha}}$ leads to $J_0 (x,u) \xrightarrow{\delta \to 0} - \infty$ which is a contradiction to Lemma \ref{Jn non-negative} and thus we get the claim. If $\rho$ is cash-invariant, we get with Lemma \ref{u inf w inf} that
	\begin{align*}
		J_0 (x,u) \leq& C_{\gamma}- \lambda C_F+ \norm{\leftidx{^\delta}{\hat{u}}_s}_{\L^{\alpha} ((0,T],\diff s)} (\gamma_\T \tilde{\mu}_{max} C_{\alpha}+ \lambda \gamma_F \tilde{\mu}_{max} \\
		&- \lambda \gamma_F \varrho^{\hat{L}_1} \tilde{\lambda}_{min} \min\{1,\mathbb{A}_d\}^{1/\alpha} \cos (\tilde{\varepsilon}) \hat{\varepsilon}^{1/\alpha}). 
	\end{align*}
	Now, due to the assumption that $\varrho^{\hat{L}_1} \tilde{\lambda}_{min} \min\{1,\mathbb{A}_d\}^{1/\alpha} \cos (\tilde{\varepsilon}) \hat{\varepsilon}^{1/\alpha} > \tilde{\mu}_{max}$, it is possible to choose $0<\gamma_\T< \frac{\lambda \gamma_F (\varrho^{\hat{L}_1} \tilde{\lambda}_{min} \min\{1,\mathbb{A}_d\}^{1/\alpha} \cos (\tilde{\varepsilon}) \hat{\varepsilon}^{1/\alpha} -\tilde{\mu}_{max})}{\tilde{\mu}_{max} C_{\alpha}}$ and the contradiction can be derived similarly. The case of Assumption \ref{ass: one dimensional 2} is a simpler variant of this argument where \eqref{eq: one dimensional simplification} replaces Lemma \ref{u inf w inf}. Also, Assumption \ref{ass: alpha=2} leads to a simpler variant after noticing that $\norm{\leftidx{^{\alpha}}{w}_s (\leftidx{^\delta}{\hat{u}})}_{\L^{2} ((0,T],\diff s)}^{2} \geq \tilde{\lambda}_{min}^2 \norm{\leftidx{^\delta}{\hat{u}_s}}_{\L^{2} ((0,T],\diff s)}^{2}$ by the definition of $\tilde{\lambda}_{min}^2$.
\end{myproof}

\begin{myproof}[Proof of Proposition \ref{Jn cont}]
	Plugging Equation \eqref{eq: umschreibung xt-xn} and Proposition \ref{risk measure formula} into Equation \eqref{J def} leads to:
	\begin{align*}
		J_n (x,u) =& \mathbb{E}_{n,x} [\T(\Int_{n}^T u^\intercal_s (\mu_s-r) e^{r(T-s)} \diff s +  \Int_{n+}^T u^\intercal_{s} \sigma_s e^{r(T-s)} \diff L_s)]  \\
		&- \lambda_n F(-\Int_n^T m_s (u) \diff s + \varrho^{\hat{L}_1} \norm{\leftidx{^{\alpha}}w_s (u)}_{\L^{\alpha}((n,T],\diff s)}).
	\end{align*} 
	That $m$ and $w$ are continuous in $u$ follows directly from their definitions. Hence, the second part of $J_n$ is continuous in $u$ since $F \in C^1$.
	In the first part, it follows similar to the proof of Proposition \ref{risk measure formula} that $\Int_{n+}^T u^\intercal_{s} \sigma_s e^{r(T-s)} \diff L_s \overset{d}{=} \norm{\leftidx{^{\alpha}}w_s (u)}_{\L^{\alpha}((n,T],\diff s)} \hat{L}_1$.
	Moreover, if $u^k \xrightarrow{k \to \infty} u$, $u$ and $u^k$ are uniformly bounded converging vectors and therefore uniformly bounded. 
	Hence, Assumption \ref{assumption expected value} and Lebesgue's convergence theorem (since $\T$ is monotone) also give us the continuity of $J_n$ in $u$ for the first part.
\end{myproof}

\begin{myproof}[Proof of Proposition \ref{un non-negative}] 
	Denote the optimal deterministic control function by $\hat{u}$. We show the desired property by backward induction. Therefore, we show at each time point $n \in \{0,\ldots,T-1\}$ that the derivative of $J_n$ with respect to ${\hat{u}}_{n+1}$ at the point $0+$ is non-negative. Combining this with the concavity of $J_n$ (Proposition \ref{Jn concave}) and the independence of $x$ (Theorem \ref{deterministic control function}), i.e., we can identify $J_n (x,{\hat{u}}) = J_n({\hat{u}})$ (see Equation (\ref{J def})), giving the claim. Indeed, for a maximum of a differentiable one-dimensional concave function, the derivative is then non-negative for all smaller values and non-positive for bigger values.
	
	The induction start is similar to the induction step. Hence, we only show the induction step for an arbitrary $n \in \{0,\ldots,T-1\}$:\\
	Using Equation (\ref{eq: umschreibung xt-xn}), where the existence of the expected value is guaranteed due to the second assumption of this theorem, the monotonicity of $\T$ and $\T'$, and Proposition \ref{risk measure formula}, it holds that
	\begin{align*} 
		\textfrac{\partial}{\partial {\hat{u}}_{n+1}^1} J_n ({\hat{u}}) =& \E_{n,x} [ \T' ( \Int_{n}^T {\hat{u}}^1_s (\mu^1_s-r) e^{r(T-s)} \diff s + \Int_{n+}^T {\hat{u}}^1_{s} \sigma^{11}_s e^{r(T-s)} \diff L^1_s )   \notag \\
		&\hspace{1cm} \cdot   ( \Int_{n}^{n+1} (\mu^1_s-r) e^{r(T-s)} \diff s + \Int_{n+}^{n+1} \sigma^{11}_s e^{r(T-s)} \diff L^1_s ) ] \notag \\
		&- \lambda_n F' ( -\Int_n^T {\hat{u}}_s^1 (\mu^1_s - r) e^{r(T-s)}\diff s + \varrho^{\hat{L}_1} \norm{\hat{u}_s^1 \sigma_{s}^{11} e^{r(T-s)}}_{\L^{\alpha}((n,T], \diff s)} ) \notag \\
		&\hspace{0.5cm}\cdot ( -\Int_n^{n+1} (\mu^1_s - r) e^{r(T-s)}\diff s + \varrho^{\hat{L}_1} \textfrac{|\hat{u}_{n+1}^1 |^{\alpha-1} \sign(\hat{u}_{n+1}^1) \Int_n^{n+1} | \sigma_{s}^{11} e^{r(T-s)}|^{\alpha} \diff s }{\norm{\hat{u}_s^1 \sigma_{s}^{11} e^{r(T-s)}}^{\alpha-1}_{\L^{\alpha}((n,T], \diff s}}  ).
	\end{align*}
	We show that the last fraction is bounded. Hence, the second part of the term converges to $0$ for $u$ going to $0+$. 
	We get with the H{\"o}lder inequality using $p = \frac{\alpha}{\alpha-1}$ and $q=\alpha$:
	\begin{align*}
		\Int_n^{n+1} |\hat{u}_{n+1}^1 |^{\alpha-1} \sign(\hat{u}_{n+1}^1) &| \sigma_{s}^{11} e^{r(T-s)}|^{\alpha} \diff s	\\
		&\leq \norm{|\hat{u}_{n+1}^1 \sigma_{s}^{11} e^{r(T-s)} |^{\alpha-1} | \sigma_{s}^{11} e^{r(T-s)}| }_{\L^{1}((n,n+1], \diff s)} \\
		&\leq \norm{|\hat{u}_{n+1}^1 \sigma_{s}^{11} e^{r(T-s)}|^{\alpha-1}}_{\L^{p}((n,n+1], \diff s)} \norm{\sigma_{s}^{11} e^{r(T-s)}}_{\L^{q}((n,n+1], \diff s)} \\
		&\leq \norm{\hat{u}_{s}^1 \sigma_{s}^{11} e^{r(T-s)}}_{\L^{\alpha}((n,T], \diff s)}^{\alpha-1} \norm{\sigma_{s}^{11} e^{r(T-s)}}_{\L^{\alpha}((n,n+1], \diff s)}.
	\end{align*}
	Thus, the fraction is bounded by $\norm{\sigma_{s}^{11} e^{r(T-s)}}_{\L^{\alpha}((n,n+1], \diff s)}$ ($\leq \max_{\{s \in [0,T]\}} \sigma_s^{11} e^{rT} T$). Since $L^1$ is symmetric by assumption, the L{\'e}vy measure $\nu$ is also symmetric, and $L^1$ is therefore a martingale. Thus, we get the following when inserting $0+$ into the derivative:
	\begin{align*}
		\textfrac{\partial}{\partial {\hat{u}}_{n+1}^1} J_n (0+) &= \E_{n,x} [ \T' ( 0 ) ( \Int_{n}^{n+1} (\mu^1_s-r) e^{r(T-s)} \diff s + \Int_{n+}^{n+1} \sigma^{11}_s e^{r(T-s)} \diff L^1_s ) ] - 0 \\
		&= \T' (0) \Int_{n}^{n+1} (\mu^1_s-r) e^{r(T-s)} \diff s \geq 0
	\end{align*}
	due to the assumption $\mu_t \geq r$ for all $t$, and $\T$ being an increasing function.
\end{myproof}

\begin{myproof}[Proof of Theorem \ref{hjb discrete}]
	
	We start with the main formula of the HJB equation. We notice that the value functional $J_n$ (see Equation (\ref{J def})) satisfies the following recursion formula:
	\begin{align} \label{rec J}
		J_n (x,u) = \ &\mathbb{E}_{n,x} [J_{n+1} (X_{n+1}^u,u)] -\mathbb{E}_{n,x} [\mathbb{E}_{n+1,X_{n+1}^u} [\T(X_T^u-X_{n+1}^u e^{r(T-n-1)})]-\T(X_T^u-x e^{r(T-n)})] \notag \\
		&+ \mathbb{E}_{n,x} [\lambda_{n+1} F(\rho_{n+1,X_{n+1}^u} (X_T^u - X_{n+1}^u e^{r(T-n-1)})) - \lambda_{n} F(\rho_{n,x} (X_T^u - x e^{r(T-n)}))]. 
	\end{align}
	This characterization and the definition of $V$ (see Definition \ref{def: Nash equilibrium disc}) imply that
	\begin{align*}
		\sup\nolimits_{u \in \mathcal{V}} &\{ \mathbb{E}_{n,x} [J_{n+1} (X_{n+1}^u,\bar{u})] - V_n (x)  \\
		&-\mathbb{E}_{n,x} [\mathbb{E}_{n+1,X_{n+1}^u} [\T(X_T^{\bar{u}}-X_{n+1}^u e^{r(T-n-1)})]-\T(X_T^{\bar{u}}-x e^{r(T-n)})] \\
		&+   \mathbb{E}_{n,x} [\lambda_{n+1} F(\rho_{n+1,X_{n+1}^u} (X_T^{\bar{u}} - X_{n+1}^u e^{r(T-n-1)})) - \lambda_{n}F(\rho_{n,x} (X_T^{\bar{u}} - x e^{r(T-n)}))] \} = 0.
	\end{align*}
	By the definition of Nash equilibria (see Definition \ref{def: Nash equilibrium disc}),	we get $\mathbb{E}_{n,x} [J_{n+1} (X_{n+1}^u,\bar{u})] = V_{n+1} (X_{n+1}^u)$. With that and the infinitesimal operator $\mathrm{A}^u$, this recursion simplifies to the main part of our HJB equation:
	\begin{align*}
		&\sup\nolimits_{u \in \mathcal{V}} \{ (\mathrm{A}^u V)_n (x) - (\mathrm{A}^u \mathbb{E}_{\cdot} [\T(X_T^{\bar{u}} - X_{\cdot}^u e^{r(T-\cdot)})])_n (x)  +   (\mathrm{A}^u (\lambda_{\cdot}F(\rho_{\cdot} (X_T^{\bar{u}} - X_{\cdot}^u e^{r(T-\cdot)}))))_n (x)
		\} = 0.
	\end{align*}
	For the final value, we get:
	\begin{align*}
		V_T (x) &= J_T (X_T^{\hat{u}},\hat{u}) = \mathbb{E}_{T,X_T^u} [\T(X_T^{\hat{u}}- X_T^{\hat{u}} e^{r(T-T)})]  - \lambda_T F(\rho_{T,X_T^u} (X_T^{\hat{u}} - X_T^{\hat{u}} e^{r(T-T)})) \notag \\ 
		&= \T(X_T^{\hat{u}} - X_T^{\hat{u}}) - \lambda_T F(\rho_{T,X_T^u} (0)) = \T(0) - \lambda_T F(0).
	\end{align*}
	
	It remains to prove the formulas for the constraints where the first one is simply the definition of the infinitesimal generator (see Definition \ref{inf gen def}). We show the formula for the second constraint by backward induction:	
	
	\underline{$n=T$:}
	$\lambda_T F(\rho_{T,x} (X_T^{\hat{u}} - x e^{r(T-T)})) = \lambda_T F(\rho_{T,x} (0)) = \lambda_T F(0)$.
	
	\underline{$n=T-1$:}
	This is a more straightforward case than for $T-2$ and will be omitted.
	
	\underline{$n=T-2$:}
	Due to the optimal control function being deterministic (Theorem \ref{deterministic control function}), we can use Proposition \ref{risk measure formula} to calculate:
	\begin{align*}
		&(\mathrm{A}^{\hat{u}} (\lambda_{\cdot} F(\rho_{\cdot} (X_T^{\hat{u}} - X_{\cdot}^{\hat{u}} e^{r(T-\cdot)}))))_{T-2} (x) \\
		&\hspace{0.6cm}= \lambda_{T-1} \mathbb{E}_{T-2,x} [F( -\Int_{T-1}^T m_s (\hat{u}) \diff s + \varrho^{\hat{L}_1} \norm{\leftidx{^{\alpha}}w_s (\hat{u})}_{\L^{\alpha}((T-1,T], \diff s)} ) ] \\
		&\hspace{1cm} - \lambda_{T-2} F( -\Int_{T-2}^T m_s (\hat{u}) \diff s + \varrho^{\hat{L}_1} \norm{\leftidx{^{\alpha}}w_s (\hat{u})}_{\L^{\alpha}((T-2,T], \diff s)} ) \\
		&\hspace{0.6cm}= \lambda_{T-1} [ F (-\Int_{T-1}^T m_s (\hat{u}) \diff s + \varrho^{\hat{L}_1} \norm{\leftidx{^{\alpha}}w_s (\hat{u})}_{\L^{\alpha}((T-1,T], \diff s)} )   \\
		&\hspace{2.2cm}-  F (-\Int_{T-2}^T m_s (\hat{u}) \diff s + \varrho^{\hat{L}_1} \norm{\leftidx{^{\alpha}}w_s (\hat{u})}_{\L^{\alpha}((T-2,T], \diff s)} ) ] \\
		&\hspace{1cm}- (\lambda_{T-2}-\lambda_{T-1}) F(-\Int_{T-2}^T m_s (\hat{u}) \diff s + \varrho^{\hat{L}_1} \norm{\leftidx{^{\alpha}}w_s (\hat{u})}_{\L^{\alpha}((T-2,T], \diff s)} ).
	\end{align*}
	
	\underline{for all other $n$:} 
	Similarly.
\end{myproof}

\begin{myproof}[Proof of Lemma \ref{au exp cont}]
	Using the definition for the period from $t$ to $t+h$, it holds that
	\begin{align*}
		(\mathrm{A}^{\hat{u}} \mathbb{E}_{\cdot} &[\T(X_T^{\hat{u}} - X_{\cdot}^{\hat{u}} e^{r(T-\cdot)})]) (t,x) \\
		&= \lim\nolimits_{h \rightarrow 0} h^{-1} (\mathrm{A}_h^{\hat{u}} \mathbb{E}_{\cdot} [\T(X_T^{\hat{u}} - X_{\cdot}^{\hat{u}} e^{r(T-\cdot)})]) (t,x) \\
		&= \lim\nolimits_{h \rightarrow 0} h^{-1} \mathbb{E}_{t,x}[{\mathbb{E}_{t+h,X_{t+h}^{\hat{u}}} [ \T(X_T^{\hat{u}} - X_{t+h}^{\hat{u}} e^{r(T-t-h)})] -  (\T(X_T^{\hat{u}} - x e^{r(T-t)}))}] \\
		&= \lim\nolimits_{h \rightarrow 0} h^{-1} ({\mathbb{E} [ \T(X_T^{\hat{u}} - X_{t+h}^{\hat{u}} e^{r(T-t-h)})] - \mathbb{E} [ \T(X_T^{\hat{u}} - X_t^{\hat{u}} e^{r(T-t)})]}) \\
		&=   \textfrac{\partial}{\partial h} \mathbb{E} [ \T(X_T^{\hat{u}} - X_{t+h}^{\hat{u}} e^{r(T-t-h)})] |_{h=0},
	\end{align*}
	where we used in the third equation that the term inside the expected value is deterministic. The existence of the derivative follows immediately from Proposition \ref{discussion continuous}.
\end{myproof}

\begin{myproof}[Proof of Lemma \ref{risk measure formula in continuous time}]
	It holds with Proposition \ref{risk measure formula} and the second constraint of Theorem \ref{hjb discrete} used for the interval from $t$ to $t+h$ that
	\begin{align*}
		&(\mathrm{A}^{\hat{u}} (\lambda_{\cdot} F(\rho_{\cdot} (X_T^{\hat{u}} - X_{\cdot}^{\hat{u}} e^{r(T-\cdot)})))) (t,x) \\
		&\hspace{0.6cm}= \lim\nolimits_{h \rightarrow 0} h^{-1} (\mathrm{A}_h^{\hat{u}} (\lambda_{\cdot}F(\rho_{\cdot} (X_T^{\hat{u}} - X_{\cdot}^{\hat{u}} e^{r(T-\cdot)})))) (t,x))\\
		&\hspace{0.6cm}= \lim\nolimits_{h \rightarrow 0} \lambda_{t+h} h^{-1} [ F (-\Int_{t+h}^T m_s (\hat{u}) \diff s + \varrho^{\hat{L}_1} \norm{\leftidx{^{\alpha}}w_s (\hat{u})}_{\L^{\alpha}((t+h,T], \diff s)} )   \\
		&\hspace{3.1cm}-  F (-\Int_{t}^T m_s (\hat{u}) \diff s + \varrho^{\hat{L}_1} \norm{\leftidx{^{\alpha}}w_s (\hat{u})}_{\L^{\alpha}((t,T], \diff s)} ) ] \\
		&\hspace{1cm}- \lim\nolimits_{h \rightarrow 0} h^{-1} (\lambda_t-\lambda_{t+h}) F(-\Int_{t}^T m_s (\hat{u}) \diff s + \varrho^{\hat{L}_1} \norm{\leftidx{^{\alpha}}w_s (\hat{u})}_{\L^{\alpha}((t,T], \diff s)} ) \\
		&\hspace{0.6cm}= \lambda_t F'(-\Int_{t}^T m_s (\hat{u}) \diff s + \varrho^{\hat{L}_1} \norm{\leftidx{^{\alpha}}w_s (\hat{u})}_{\L^{\alpha}((t,T], \diff s)} ) [ m_t (\hat{u}) - \varrho^{\hat{L}_1} \leftidx{^{\alpha}}w_t (\hat{u}) ( \Int_t^T | \leftidx{^{\alpha}}w_s (\hat{u}) |^{\alpha}  \diff s )^{\frac{1}{\alpha}-1} ] \\
		&\hspace{1cm}+ \lambda'_t F(-\Int_{t}^T m_s (\hat{u}) \diff s + \varrho^{\hat{L}_1} \norm{\leftidx{^{\alpha}}w_s (\hat{u})}_{\L^{\alpha}((t,T], \diff s)} ).
	\end{align*}
	The chain rule may be used since $F$ is in $C^1$ and the optimal control function and the other functions are continuous.
\end{myproof}

\begin{lemma} \label{arzela lemma}
	Let $k \in \N$, $u \in (0,M)^k$, $\alpha,q>0$, $K_1,K_3 \geq 0$, $K_2>0$, and $K_4 \in \R$. Now, consider $f=(f_1,\ldots,f_k)$ defined by: $f_i (u) := (u_1^\alpha + \ldots + u_k^\alpha + K_1)K_2u_i^q +K_3u_i^q+K_4$. Then the Jacobi matrix $J_f$ is invertible.
\end{lemma}

\begin{proof}
	It holds for $i \neq j$ that $\textfrac{\partial f_i}{\partial u_j} = \alpha u_j^{\alpha-1} K_2 u_i^q$, and
	\begin{align*}
		\textfrac{\partial f_i}{\partial u_i} &= \alpha u_i^{\alpha-1} K_2 u_i^q + q(u_1^\alpha + \ldots + u_k^\alpha + K_1)K_2 u_i^{q-1} + K_3 q u_i^{q-1}.
	\end{align*}
	Define $D=(d_{1},\ldots,d_{k})$ as the invertible diagonal matrix with positive diagonal values $d_{i} = q(u_1^\alpha + \ldots + u_k^\alpha + K_1)K_2 u_i^{q-1} + K_3 q u_i^{q-1}>0$ for all $i \in \{1,\ldots,k\}$ (where for simplicity we only specify the non-zero values) and the matrix $M=(m_{ij})_{ij}$ with $m_{ij} = \alpha u_j^{\alpha-1} K_2 u_i^q$ for all $i,j \in \{1,\ldots,k\}$. Then, $J_f = M + D$. Note that $M= \alpha K_2 v w^\intercal$ where $v$ and $w$ are two column vectors with positive entries $v_i = u_i^{\alpha-1}$ and $w_i = u_i^q$ for $i \in \{1,\ldots,k\}$ and $v w^\intercal$ is the outer product. Now, the Matrix determinant lemma resp. the Sherman-Morrison formula implies that $J_f$ is also invertible.
\end{proof}

\begin{myproof}[Proof of Proposition \ref{discussion continuous}]
	For the first claim, we note that $\rho_{t,x} (X_T^{u} - x e^{r(T-t)})$ does not depend on $x$ since we know from Proposition \ref{risk measure formula} that $\rho_{t,x} (X_T^{u} - x e^{r(T-t)}) = -\Int_t^T m_s ({u}) \diff s + \varrho^{\hat{L}_1} \norm{\leftidx{^{\alpha}}w_s ({u})}_{\L^{\alpha}((t,T], \diff s)}$.
	Moreover, $\rho_{t,x} (X_T^{u} - x e^{r(T-t)})$ is continuously differentiable in $t$ if ${u}$ is continuous, so it follows that $\textfrac{\partial}{\partial t} \rho_{t,x} (X_T^{u} - x e^{r(T-t)}) = m_t ({u}) - \varrho^{L^1_1} {\leftidx{^{\alpha}}w_t ({u})}({\alpha ( \Int_t^T \leftidx{^{\alpha}}w_s ({u}) \diff s )^{1-\frac{1}{\alpha}} })^{-1}$.
	Hence, we get:
	\begin{align*}
		\textfrac{\partial}{\partial t} F(\rho_{t,x} (X_T^{u} - x e^{r(T-t)})) =& (m_t ({u}) - \varrho^{L^1_1} {\leftidx{^{\alpha}}w_t ({u})}({\alpha ( \Int_t^T \leftidx{^{\alpha}}w_s ({u}) \diff s )^{1-\frac{1}{\alpha}} })^{-1} ) \\
		&\cdot F' (-\Int_t^T m_s ({u}) \diff s + \varrho^{\hat{L}_1} \norm{\leftidx{^{\alpha}}w_s ({u})}_{\L^{\alpha}((t,T], \diff s)} ).
	\end{align*}
	By definition, we know that $F$ is in $C^1$, and $m$ and $w$ are continuous. In addition, we know that ${u}$ is continuous by assumption, and $\lambda_t \in C^{1}$ by Assumption \ref{ass: lambda C1}. Hence, we get the first claim.\\
	For proving the second claim, we have from (\ref{eq: umschreibung xt-xn}) and Lemmas \ref{alpha additive} and \ref{alpha dist} used as in the proof of Proposition \ref{risk measure formula}:
	\begin{align*}
		X_T^{u} - x e^{r(T-t)} &= \Sum_{i=1}^d \Int_{t}^T {u}^i_s (\mu^i_s-r) e^{r(T-s)} \diff s + \Sum_{i,j=1}^d  \Int_{t+}^T {u}^i_{s} \sigma^{ij}_s e^{r(T-s)} \diff L^j_s \overset{d}{=} a_t + b_t \hat{L}_1,
	\end{align*}
	where $a_t := \Int_t^T u_s^\intercal (\mu_s-r) e^{r(T-s)} \diff s$, $b_t := \norm{\leftidx{^{\alpha}}w_s (u)}_{\L^{\alpha}((t,T],\diff s)}$, and $m$, $w$, and $\hat{L}$ are given in Definition \ref{def m w}.\\
	First, we notice again that $X_T^{u} - x e^{r(T-t)}$ does not depend on $x$, and hence, of course,  $X_T^{u} - x e^{r(T-t)} \in C^{\infty}$ and $\mathbb{E}_{t,x} [\T(X_T^{u} - x e^{r(T-t)})] \in C^{\infty}$ as functions of $x$. Furthermore, we get with $u$ being deterministic by assumption, the cdf $F_{\hat{L}_1}$ of $\hat{L}_1$ and Assumptions \ref{assumption expected value} and \ref{assumption expected value derivative}:
	\begin{align*}
		\textfrac{\partial}{\partial t} \E_{t,x} [ \T (X_T^{u} - x e^{r(T-t)})] &= \textfrac{\partial}{\partial t} \E [ \T (a_t + b_t \hat{L}_1)] \\
		&= \textfrac{\partial}{\partial t} \Int_{-\infty}^{\infty} \T (a_t + b_t y ) \diff F_{\hat{L}_1} (y) \\
		&= a_t' \cdot \Int_{-\infty}^{\infty} \T' (a_t + b_t y ) \diff F_{\hat{L}_1} (y) + b_t' \cdot \Int_{-\infty}^{\infty} y \T' (a_t + b_t y ) \diff F_{\hat{L}_1} (y) \\ 
		&= a_t' \E [ \T' (a_t + b_t \hat{L}_1)] + b_t' \E [ \hat{L}_1 \T' (a_t + b_t \hat{L}_1)].
	\end{align*}
	Since $u$ is continuous by assumption and $b_t \geq 0$ with $b_t \equiv 0$ if and only if $u_t \equiv 0$ by definition, we can conclude that $a_t, b_t \in C^1$. Moreover, $\T \in C^1$ and $\T$ and $\T'$ are monotone by the model setup. Hence, Assumptions \ref{assumption expected value} and \ref{assumption expected value derivative} allow us to apply Lebesgue's dominated convergence theorem, which then yields that $\E [ \T' (a_t + b_t \hat{L}_1)]$ resp. $\E [ \hat{L}_1 \T' (a_t + b_t \hat{L}_1)]$ is continuous in $t$ which implies the second claim.
\end{myproof}

\begin{myproof}[Proof of Lemma \ref{lemma: infinitesimal generator = derivative}]
	We provide the proof of the first equation only, as the proof of the second follows analogously. Using the notation $w := u_h^{u_t,t}$, we obtain:
	\begin{align*}
		(\mathrm{A}^u \mathbb{E}_{\cdot} &[\T(X_T^u - X_{\cdot}^u e^{r(T-\cdot)})]) (t,x) = \lim_{h \to 0} \tfrac{1}{h} (\mathrm{A}_h^w \mathbb{E}_{\cdot} [\T(X_T^u - X_{\cdot}^u e^{r(T-\cdot)})]) (t,x) \\
		&= \lim_{h \to 0} \tfrac{1}{h} \E_{t,x} [\E_{t+h,X_{t+h}^w}[\T(X_T^w - X_{t+h}^{u_t} e^{r(T-(t+h))}) ] - \E_{t,x}[\T(X_T^w - x e^{r(T-t)}) ] ] \\
		&=\lim_{h \to 0} \tfrac{1}{h} \E[\T(a_{t+h}+b_{t+h}\hat{L}_1) \\
		&\hspace{50pt}-\T(\Int_t^{t+h} u_t^\intercal (\mu_s-r) e^{r(T-s)} \diff s + a_{t+h} + \sqrt[\alpha]{\Int_t^{t+h} \leftidx{^{\alpha}}w_s (u_t)^\alpha \diff s + \Int_{t+h}^T \leftidx{^{\alpha}}w_s (u)^\alpha \diff s} \hat{L}_1) ],
	\end{align*}
	where $a_t$, $b_t$, $\hat{L}_1$, and $w$ are defined as in the proof of Proposition \ref{discussion continuous}. We now expand the expression inside the expectation by $\pm \T(a_{t}+b_{t}\hat{L}_1)$, and analyze the resulting terms individually. Using an argument similar to the one employed to interchange limits in the proof of Proposition \ref{discussion continuous}, we obtain for the first term:
	\begin{align*}
		\lim_{h \to 0} \tfrac{1}{h} \E [\T(a_{t+h}+b_{t+h}\hat{L}_1)-\T(a_{t}+b_{t}\hat{L}_1)] = \tfrac{\partial \E[\T(a_{t}+b_{t}\hat{L}_1)]}{\diff t}.
	\end{align*}
	For the second term, we again apply a similar limit-interchange argument to get:
	\begin{align*}
		\E [\lim_{h \to 0} &\tfrac{\T(a_{t+h}+b_{t+h}\hat{L}_1) -\T(\int_t^{t+h} u_t^\intercal (\mu_s-r) e^{r(T-s)} \diff s + a_{t+h} + \sqrt[\alpha]{\int_t^{t+h} \leftidx{^{\alpha}}w_s (u_t)^\alpha \diff s + \int_{t+h}^T \leftidx{^{\alpha}}w_s (u)^\alpha \diff s} \hat{L}_1)}{a_{t}+b_{t}\hat{L}_1 -(\int_t^{t+h} u_t^\intercal (\mu_s-r) e^{r(T-s)} \diff s + a_{t+h} + \sqrt[\alpha]{\int_t^{t+h} \leftidx{^{\alpha}}w_s (u_t)^\alpha \diff s + \int_{t+h}^T \leftidx{^{\alpha}}w_s (u)^\alpha \diff s} \hat{L}_1)} \\
		&\hspace{30pt}\cdot \tfrac{a_{t}+b_{t}\hat{L}_1 -(\int_t^{t+h} u_t^\intercal (\mu_s-r) e^{r(T-s)} \diff s + a_{t+h} + \sqrt[\alpha]{\int_t^{t+h} \leftidx{^{\alpha}}w_s (u_t)^\alpha \diff s + \int_{t+h}^T \leftidx{^{\alpha}}w_s (u)^\alpha \diff s} \hat{L}_1)}{h}].
	\end{align*}
	By Assumption \ref{assumption expected value derivative}, the first fraction converges to $\T'(a_t+b_t\hat{L}_1)<\infty$. We now decompose the second fraction as follows:
	\begin{align*}
		\tfrac{a_{t}- a_{t+h} -\int_t^{t+h} u_t^\intercal (\mu_s-r) e^{r(T-s)} \diff s}{h} + \tfrac{\sqrt[\alpha]{b_t^\alpha}- \sqrt[\alpha]{\int_t^{t+h} \leftidx{^{\alpha}}w_s (u_t)^\alpha \diff s + \int_{t+h}^T \leftidx{^{\alpha}}w_s (u)^\alpha \diff s}}{h} \hat{L}_1.
	\end{align*}
	The first summand converges to zero by the definition of $a_t$ and the mean value theorem for integrals, since $u$ is continuous. For the second summand, observe that for $\alpha>1$ and $\min\{a,b\}>0$, the inequality $|\sqrt[\alpha]{c_1}-\sqrt[\alpha]{c_2}| \leq \tfrac{1}{\alpha} \min\{c_1,c_2\}^{1/\alpha-1}|c_2-c_1|$ follows directly from a simple application of the mean value theorem. If $c_{min}:=\min\{b_t^\alpha,\int_t^{t+h} \leftidx{^{\alpha}}w_s (u_t)^\alpha \diff s + \int_{t+h}^T \leftidx{^{\alpha}}w_s (u)^\alpha \diff s\} = 0$, then the second summand is clearly equal to zero by the definition of $w$ (see Definition \ref{def m w}). Hence, we may assume that $c_{min}>0$, which yields:
	\begin{align*}
		|\tfrac{\sqrt[\alpha]{b_t^\alpha}- \sqrt[\alpha]{\int_t^{t+h} \leftidx{^{\alpha}}w_s (u_t)^\alpha \diff s + \int_{t+h}^T \leftidx{^{\alpha}}w_s (u)^\alpha \diff s}}{h}| &\leq \tfrac{1}{\alpha} c_{min}^{1/\alpha-1}|b_t^\alpha-(\Int_t^{t+h} \leftidx{^{\alpha}}w_s (u_t)^\alpha \diff s + \Int_{t+h}^T \leftidx{^{\alpha}}w_s (u)^\alpha \diff s)| \\
		&\leq \tfrac{1}{\alpha} c_{min}^{1/\alpha-1} \Int_t^{t+h} |\leftidx{^{\alpha}}w_s (u)^\alpha - \leftidx{^{\alpha}}w_s (u_t)^\alpha| \diff s.
	\end{align*}
	Since $u$ is continuous, this summand also converges to zero, and the claim follows.
\end{myproof}

\begin{myproof}[Proof of Proposition \ref{formula exp target disc}]
	Similar steps as in the proof of Proposition \ref{risk measure formula}, (\ref{eq: umschreibung xt-xn}) lead to
	\begin{align*}
		X_T^{\hat{u}} - X_n^{\hat{u}} e^{r(T-n)} \sim \normal ( \Int_{n}^T {\hat{u}}^\intercal_s (\mu_s-r) e^{r(T-s)} \diff s, \norm{{\hat{u}}_s^\intercal \sigma_s e^{r(T-s)}}^2_{\L^{2}((n,T], \diff s; H_s)} ).
	\end{align*}
	Thus: $e^{-\gamma(X_T^{\hat{u}} - X_n^{\hat{u}} e^{r(T-n)})} \sim \mathcal{LN} ( -\gamma \Int_{n}^T {\hat{u}}^\intercal_s (\mu_s-r) e^{r(T-s)} \diff s, \gamma^2 \norm{{\hat{u}}_s^\intercal \sigma_s  e^{r(T-s)}}^2_{\L^{2}((n,T], \diff s; H_s)} ).$
	Hence, with the formula for the expected value of log-normal distributions, we get:
	\begin{align*}
		\mathbb{E}_{n,x} [\T (X_T^{\hat{u}} - x e^{r(T-n)})] = c^{-1}(1- \exp({-\gamma \Int_{n}^T {\hat{u}}^\intercal_s (\mu_s-r) e^{r(T-s)} \diff s + \frac{\gamma^2}{2} \norm{{\hat{u}}_s^\intercal \sigma_s e^{r(T-s)}}^2_{\L^{2}((n,T], \diff s; H_s)}}) ).
	\end{align*}
	With this and Proposition \ref{risk measure formula}, the functional $J_n$ from Equation (\ref{J def}) becomes:
	\begin{align*}
		J_n (\hat{u}) =& c^{-1}(1-\exp( {-\gamma \Int_n^T {\hat{u}}^\intercal_s (\mu_s - r) e^{r(T-s)}ds + \frac{\gamma^2}{2} \norm{{\hat{u}}_s^\intercal \sigma_s e^{r(T-s)}}^2_{\L^{2}((n,T], \diff s; H_s)}}) ) \\
		&- \lambda_n F ( -\Int_n^T m_s ({\hat{u}}) \diff s + \varrho^{W_1^1} \norm{{\hat{u}}_s^\intercal \sigma_s e^{r(T-{s})}}_{\L^2((n,T],\diff s;H_s)} ).
	\end{align*}
	Taking the partial derivative for $\hat{u}_{n+1}^k$ and setting it equal to $0$ gives the claim.
\end{myproof}

\begin{myproof}[Proof of Proposition \ref{optimal control no risk, disc}]
	We prove this proposition by backward induction. To avoid repetition, we only show the induction step for arbitrary $n$:\\
	We get the optimum by setting the derivative equal to $0$. From Proposition \ref{formula exp target disc}, we get the partial derivative:
	\begin{align} \label{eq: differentiation without risk}
		&0 \overset{!}{=}\textfrac{\partial}{\partial \hat{u}_{n+1}^k} J_n (\hat{u}) = -c^{-1} \exp ({-\gamma \Int_n^{T} {\hat{u}}_s^\intercal (\mu_s-r)e^{r(T-s)} \diff s + \frac{\gamma^2}{2} \norm{{\hat{u}}_s^\intercal \sigma_s e^{r(T-s)}}^2_{\L^2((n,T],\diff s; H_s)}})  \notag\\
		&\hspace{2.9cm}\cdot ( -\gamma \Int_n^{n+1} (\mu_s^k-r) e^{r(T-s)} \diff s +\gamma^2 \Int_n^{n+1} \langle {\hat{u}}_s^\intercal \sigma_s e^{r(T-s)}, \sigma_{s}^{k \cdot} e^{r(T-s)} \rangle_{R_s} \diff s) \notag \\
		\Leftrightarrow\ & \gamma \Int_n^{n+1} (\mu_s^k-r) e^{r(T-s)} \diff s = \gamma^2 \Sum_{i=1}^d {\hat{u}}^i_{n+1} \Int_n^{n+1} \langle \sigma_s^{i \cdot} e^{r(T-s)}, \sigma_{s}^{k \cdot} e^{r(T-s)}\rangle_{R_s} \diff s,
	\end{align}
	because of the assumption that $\hat{u}$ is piecewise constant in each component. Solving \eqref{eq: differentiation without risk} for $\hat{u}_{n+1}^k$ gives us the claim.
\end{myproof}

\begin{myproof}[Proof of Proposition \ref{formula levy exp target disc}]
	As above:
	\begin{align*}
		X_T^{\hat{u}} - X_n^{\hat{u}} e^{r(T-n)} \overset{d}{=} \Int_{n}^T {\hat{u}}^\intercal_s (\mu_s-r) e^{r(T-s)} \diff s + \norm{{\hat{u}}_s^\intercal \sigma_s v e^{r(T-s)}}^{\alpha}_{\L^{\alpha}((n,T] \times \S^d, \diff s \times \tilde{\sigma} (\diff v))} \tilde{L}_1.
	\end{align*}
	Hence, it holds when using the Lagrange function of $\tilde{L}_1$ in the second step (see Janson \cite[p.9]{janson2011stable}):
	\begin{align*}
		\E [ e^{-\gamma (X_T^{\hat{u}} - X_n^{\hat{u}} e^{r(T-n)}) }] &= e^{-\gamma \Int_{n}^T {\hat{u}}^\intercal_s (\mu_s-r) e^{r(T-s)} \diff s} \E [ e^{-\gamma \norm{{\hat{u}}_s^\intercal \sigma_s v e^{r(T-s)}}^{\alpha}_{\L^{\alpha}((n,T] \times \S^d, \diff s \times \tilde{\sigma} (\diff v))} \tilde{L}_1}] \\
		&= e^{-\gamma \Int_{n}^T {\hat{u}}^\intercal_s (\mu_s-r) e^{r(T-s)} \diff s} e^{\frac{c_{\alpha} \gamma^{\alpha}}{-\cos (\frac{\pi \alpha}{2})} \norm{{\hat{u}}_s^\intercal \sigma_s v e^{r(T-s)}}^{\alpha}_{\L^{\alpha}((n,T] \times \S^d, \diff s \times \tilde{\sigma} (\diff v))}}.
	\end{align*}
	Thus, plugging this result into \eqref{J def} for $\T$ being the exponential target function, using Proposition \ref{risk measure formula}, taking the partial derivative for $\hat{u}_{n+1}^k$ and setting it equal to $0$ gives the claim.
\end{myproof}

\begin{myproof}[Proof of Proposition \ref{terminal wealth disc}]
	Let $s$, $\T$, $F$, and $\rho$ as in the description before the proposition. Then $J_n$ given by (\ref{J def}) leads to $
	J_{n} (u) = (\mu-r) \Int_n^T u_s e^{r(T-s)} \diff s - \lambda F( \varrho^{\hat{L}_1} ) \sigma^{\eta} (\Int_n^T |u_s|^\alpha e^{\alpha r(T-s)} \diff s)^{\eta/\alpha}$ where we used \eqref{eq: umschreibung xt-xn} and Proposition \ref{risk measure formula}. Moreover, we dropped the dependency on $x$ in the argumentation due to $J_n$ being independent of $x$. Since $u$ is piecewise constant, we can take the derivative with respect to $u_{n+1}$. Setting this derivative equal to $0$ to find the maximum gives us the claim.
\end{myproof}

\footnotesize
\bibliography{bibliography.bib}
\footnotesize
\bibliographystyle{abbrv}

\end{document}